\theoremstyle{plain}
\newtheorem{thm}{\protect\theoremname}
\theoremstyle{definition}
\newtheorem{defn}[thm]{\protect\definitionname}
\theoremstyle{plain}
\newtheorem*{thm*}{\protect\theoremname}
\theoremstyle{plain}
\newtheorem{conjecture}[thm]{\protect\conjecturename}
\theoremstyle{remark}
\theoremstyle{plain}
\providecommand{\conjecturename}{Conjecture}
\providecommand{\definitionname}{Definition}
\providecommand{\lemmaname}{Lemma}
\providecommand{\remarkname}{Remark}
\providecommand{\theoremname}{Theorem}
\begin{document}
\title{Defining stable phases of open quantum systems}

\author{Tibor Rakovszky}
\affiliation{Department of Physics, Stanford University, Stanford, CA 94305, USA}

\author{Sarang Gopalakrishnan}
\affiliation{Department of Electrical and Computer Engineering, Princeton University, Princeton, NJ 08544}

\author{Curt von Keyserlingk}
\affiliation{Department of Physics, King's College London, Strand WC2R 2LS, UK}

\begin{abstract}
The steady states of dynamical processes can exhibit stable nontrivial phases, which can also serve as fault-tolerant classical or quantum memories. For Markovian quantum (classical) dynamics, these steady states are extremal eigenvectors of the non-Hermitian operators that generate the dynamics, i.e., quantum channels (Markov chains). 
However, since these operators are non-Hermitian, their spectra are an unreliable guide to dynamical relaxation timescales or to stability against perturbations. 
We propose an alternative dynamical criterion for a steady state to be in a stable phase, which we name \emph{uniformity}: informally, our criterion amounts to requiring that, under sufficiently small local perturbations of the dynamics, the unperturbed and perturbed steady states are related to one another by a finite-time dissipative evolution. 
We show that this criterion implies many of the properties one would want from any reasonable definition of a phase. We prove that uniformity is satisfied in a canonical classical cellular automaton, and provide numerical evidence that the gap determines the relaxation rate between nearby steady states in the same phase, a situation we conjecture holds generically whenever uniformity is satisfied. We further conjecture some sufficient conditions for a channel to exhibit uniformity and therefore stability. 
\end{abstract}
\maketitle
\tableofcontents

\section{Introduction}
%We are interested in phases, and phases in open systems. We need to 
The classification of equilibrium phases of matter is a central achievement of condensed matter physics. A particularly deep understanding has been achieved over the last few decades in the case of \emph{gapped phases} of (local) Hamiltonians at zero temperature, which includes both conventional phases (such as discrete symmetry breaking) and a plethora of unconventional ones, from topological order~\cite{WenZoo} to fractons~\cite{NandkishoreFractons}. The present work concerns the generalization of these ideas to the context of non-equilibrium open systems i.e., those evolving under a quantum channel. Can such systems host robust phases separated by phase transitions? How can we define and classify such phases? Indeed, this is a subject with a venerable history~\cite{PhysRevB.75.195331, tauber2014critical, PhysRevA.85.043620, PhysRevLett.110.257204, PhysRevA.86.012116, PhysRevB.94.085150, PhysRevX.5.011017, PhysRevLett.116.245701, kawabata2023lieb,diehl2011topology,bardyn2013topology} and many applications to present-day experiments in ultracold atoms~\cite{RevModPhys.85.553}, polariton condensates~\cite{keeling2011exciton} and digital quantum simulators \cite{midcircuitFeig,midcircuitIBM}. The question is especially timely in the latter context, where recently developed mid-circuit measurement techniques could be used to engineer local channels with exotic steady states. However, the kind of general picture that is available for ground state quantum phases is missing---away from the completely nonlocal limit of open systems described by random-matrix theory~\cite{PhysRevLett.81.3367, collins2010random, PhysRevResearch.4.L022068, denisov2019universal, PhysRevLett.123.234103, can2019random, sa2020spectral, PhysRevX.10.021019, PhysRevB.102.134310, PhysRevX.13.031019, PhysRevB.108.075110}. In this paper we provide a step in this direction by formulating a definition that aims to generalize the notion of ``gapped phases'' to the non-equilibrium context, in such a way that the useful features associated with phases---such as quasi-adiabatic continuity~\cite{hastingswen}---carry over naturally. 

Another reason to consider the stability of phases in open systems, related to the aforementioned mid-circuit measurements, is their close connection to self-correcting (classical or quantum) codes~\cite{Dennis2002,harrington2004analysis,pastawski2011quantum}. If the dynamics has multiple steady states, these can be used to encode information (whether classical or quantum will depend on the steady state manifold) and stability to perturbations readily translates into the robustness of such an encoding against various types of noise~\cite{liu2023dissipative}. In such self-correcting codes,  the error-correction is achieved through local dissipation. In contrast, active error correction involves the application of non-local channels, where typically the non-locality comes from a  background layer of classical communication and processing.  As quantum computers increase in size, this background classical communication and processing might become a bottleneck \cite{breuckmann2017local}, wasting precious time in a situation with limited coherence time. Local self-correcting codes might be a useful alternative in these regimes. Thus, understanding the criteria for stable non-equilibrium phases is of considerable practical importance as well.
\begin{figure}
    \centering
    \includegraphics[width=1.01\columnwidth]{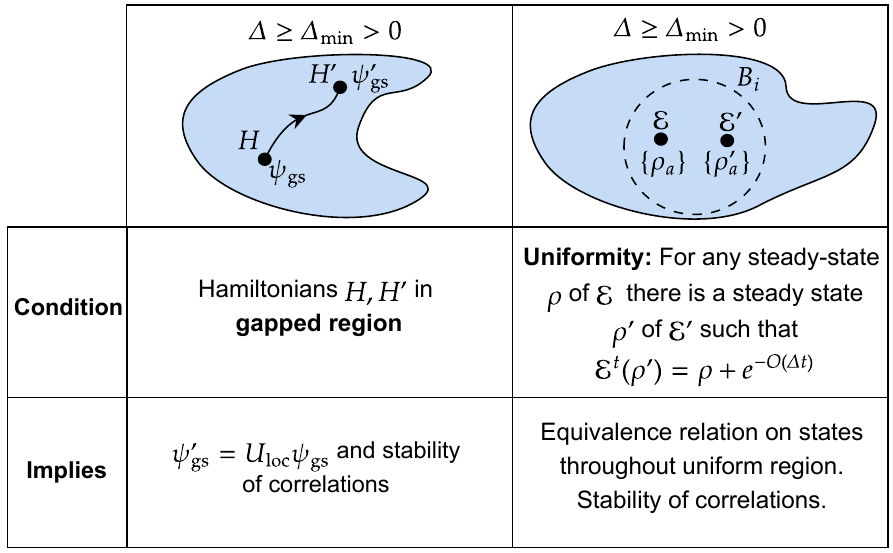}
    \caption{Summary of our definition of ``uniformity'' for quantum channels, in comparison to the notion of gapped phases in equilibrium.}
    \label{fig:uniformity}
\end{figure}

In this work, we address this issue by formulating a condition on families of quantum channels $\mathcal{E}$ which is sufficient for them to exhibit many of the features we expect from a reasonable definition of a phase. Stated colloquially, this condition, which we call \emph{uniformity}, says that the steady states of $\mathcal{E}$ relax exponentially quickly to the steady states of a perturbed channel $\mathcal{E}'$ when evolved with the latter. We show that, when combined with Lieb-Robinson bounds~\cite{poulin,nachtergaele2011lieb}, uniformity implies many of the features we have come to associate to stable phases of matter, such as the smoothness of local observables and the robustness of long-range correlations. To further justify our definition, we prove that it is satisfied by at least one example of a non-trivial phase in one dimension. 

In ground state phases, the spectral gap often plays an important role in proofs of stability. As we discuss below, for open systems the situation is more complicated and their stability to perturbations depends also on the structure of their eigenstates, an issue that does not arise in the hermitian case. Nevertheless, we provide numerical evidence that in the aforementioned one-dimensional example, the rate of exponential relaxation to the new steady state is in fact set by the spectral gap. We also discuss implications of uniformity on the spectrum more generally, relating it to a condition on the spectral resolvent of the quantum channel in question.

A weakness of the uniformity condition is that it refers to an entire region in the parameter space of local channels, rather than establishing the stability of a single channel $\mathcal{E}$ to perturbations. In the last part of the paper, we address this weakness, and conjecture sufficient conditions on a channel $\mathcal{E}$ such that it exhibits phase stability. Based on the study of some concrete examples in one dimension, we distinguish between \emph{perturbative} and \emph{non-perturbative} instabilities and argue that the former are absent whenever the channel $\mathcal{E}$ is able to erode errors that occur on top of a steady state. We further conjecture that absolute stability obtains whenever this erosion process is sufficiently fast, and $\mathcal{E}$ only has a finite number of distinct steady states.  

In summary, our work clarifies a structure sufficient for open systems to form robust phases of matter and establishes a framework that aims to put the theory of such phases on a footing equal to their much studied equilibrium counterpart. We expect that this framework will prove useful in uncovering novel quantum phases that are particular to non-equilibrium open quantum and classical systems and might be relevant to various experimental settings. 

The rest of this work is organized as follows. In Sec.~\ref{sec:summary} we give a more detailed outline of our motivation and results, also situating them in the context of previous literature. In Sec.~\ref{background} we review the notions of quasi-adiabaticity for ground states and thermal equilibrium states, and introduce a simple illustrative example of an open system with a spectral gap but a diverging relaxation time. In Sec.~\ref{uniformity} we introduce uniformity as our local criterion for a stable phase, and derive some of its consequences. In Sec.~\ref{sec:stavskaya} we rigorously establish uniformity for a canonical example of a stable classical cellular automaton, and numerically explore the relation between the local relaxation time and the gap. The argument we present seems quite general, and seems to extend to other perturbed cellular automata. In Sec.~\ref{sec:erosion} we discuss a physical mechanism for stability---based on the erosion of perturbations---and conjecture a criterion for stability that can be expressed purely in terms of properties of the unperturbed dynamics. Finally in Sec.~\ref{discussion} we conclude with a summary and a discussion of how our results apply to stable phases that can perform passive quantum error correction. 

\section{Connections to previous literature}\label{sec:summary}

In this section, we provide more details on the motivation for our present work, and its relation to previous literature  defining phases of matter in and out of equilibrium. 

Before delving into the non-equilibrium case, let us briefly review the standard concept of gapped ground-state phases.
Two quantum states $\ket{\psi_1}, \ket{\psi_2}$ are said to be in the same phase if there exists a finite-depth local unitary (FDLU) circuit $U$, consisting of finitely many layers of geometrically local unitary operators, such that $U \ket{\psi_1} \approx \ket{\psi_2}$\footnote{Here $\approx$ is meant to imply that the expectation values of observables with finite support can be made arbitrarily small, even though the overlap $\braket{\psi_2|U|\psi_1}$ is still generally zero in the thermodynamic limit.}\cite{Chen10}. Being in the same phase is an equivalence relation, since $U^\dagger$ is a finite-depth unitary circuit if $U$ is. By construction, $U$ only spreads correlations over a finite distance, so it cannot change the long-distance asymptotics of correlations, entanglement, etc. Therefore these long-distance properties are robust signatures that all wavefunctions in a phase share. Alternatively, one can define a \emph{gapped} phase in terms of Hamiltonians: two Hamiltonians are in the same gapped phase if connected by a path through parameter space along which the spectral gap remains open. These two definitions are related to each other by the adiabatic and Lieb-Robinson theorems, which taken together imply that when two Hamiltonians are connected by a gapped path, their ground states lie in the same phase~\cite{hastingswen,Hastings_Locality}. Note, however, that the implication only goes one way: two Hamiltonians whose ground states are in the same phase (by the FDLU definition) might not be connected by a gapped path. Indeed, one can even find examples where a gapped and a gapless Hamiltonian share the exact same ground state~\cite{muller1985implications,freedman2005line}.

The upshot is that if the gap remains open in a region of parameter space  (Fig.~\ref{fig:uniformity}), then all the associated ground states are in the same phase in regards to their asymptotic correlations and entanglement structure. Inside such a gapped phase, local observables are analytical functions of the parameters in the Hamiltonian. One might also ask whether a particular gapped Hamiltonian model lies in a stable phase. The reasoning above shows that it is sufficient for the gap to remain open in a region around the Hamiltonian. For perturbations that act only in a finite-size region of the system, this stability can be proved and is known as the principle that local perturbations perturb locally (LPPL)~\cite{michalakis_lppl,deRoeck2015local}. For perturbations that act everywhere, the gap is in fact not always stable, but can be proven to be so under appropriate conditions~\cite{michalakis} and can be explicitly checked in many others. 

Less well understood is the case of equilibrium quantum phases of matter at finite temperature. To our knowledge, no general notion of the unitary quasi-adiabatic maps exists in that case, although some related ideas have been explored in Ref. \cite{Hastings_belief}. We review these in App.~\ref{App:thermal_2D_Ising} where we also prove some new results, providing sufficient conditions for thermal states of slightly perturbed Hamiltonians to have qualitatively similar correlation functions---see also the recent independent work \cite{capel2023decay}  which discusses some similar results. Finally, we mention Ref. \cite{Gibbs_samplers} which provides a recipe for constructing a local Lindbladian that has a particular Gibbs state as its steady state. This brings the question of thermally stable phases into the purview of phases of open quantum systems, which is the question we turn to now.

Is there a notion of a stable or gapped phase for the steady states of open systems (classical or quantum), and what do we even mean by phase in this context? Our intuitive picture of a phase is that it is a contiguous region of parameter space where long-distance properties of the system remain similar. One  context where a notion of stability has been explored is that of classical \emph{probabilistic cellular automata}~\cite{Lebowitz1990,ponselet_thesis}. A famous example is given by Toom's rule~\cite{toom1980stable}, a two-dimensional cellular automaton exhibiting robust bistability: it has two macroscopically distinct steady states which resemble those of equilibrium Ising ordered phases (e.g., 2D Glauber dynamics at low temperature and zero field). However, whereas the bistability of Glauber dynamics is fragile to Ising symmetry breaking perturbations, in the case of Toom these steady states are stable \emph{arbitrary} perturbations, without the need for any symmetry constraint. G\'acs has developed an even more mind-boggling example, which hosts an exponentially large number of stable steady states in one dimension~\cite{Gacs2001}. 

These examples consider deterministic cellular automata perturbed by weak noise, and use properties of the deterministic dynamics to argue that the set of steady states is robust. Away from this limit, for more general local Markov chains---and their quantum versions, i.e., quantum channels---the precise definition of what it means to have a stable phase, and conditions needed to achieve it, is much less clear. While superficially this task resembles the ground state classification problem, as both concern extremal eigenvectors of the operator generating the dynamics, they are made very different by the fact that for open systems, the operator in question is non-Hermitian. Indeed, the obvious extension of the definition of a phase as an equivalence classes of states fails: if two steady-state density matrices $\rho_1$ and $\rho_2$ are related by a finite-depth quantum channel $\Phi(\rho_1) \simeq \rho_2$, the reverse relation is not true in general, since $\Phi^{-1}$ need not be a quantum channel. As an extreme example, one can connect all steady states to the trivial infinite-temperature state by applying a short-depth depolarizing quantum channel to an arbitrary state.

A natural fix to this issue~\cite{Coser_2019,ma2023average}, is to say that density matrices $\rho_1$ and $\rho_2$ are in the same phase if they are related by low-depth quantum channels in both directions. That is, there exist low ($\mathrm{poly}\log(L)$) depth circuits of local channels $\Phi,\Phi'$ such that $\Phi(\rho_1) \simeq  \rho_2$ \emph{and} $\Phi'(\rho_2) \simeq \rho_1$. This has many of the properties one would wish for as a definition of a phase of matter, and implies that the local correlation properties of $\rho_{1,2}$ are equivalent. While this establishes an equivalence relation for states, there remains a question whether there is an appropriate definition for phases of \emph{channels}, analogous to the ``gapped path'' condition of Hamiltonians, from which the equivalence of steady states, and other desirable properties, would follow. This is the question that our paper aims to address. We do so by introducing the notion of \emph{uniformity} in the space of channels, which we propose as an analog of ``gapped phase of matter'' of Hamiltonians.

Uniformity takes its inspiration from Refs. \cite{RapidMixing,Cubitt2015}, which provided a sufficient condition for the stability of Markovian open systems. 
The quantum channels considered there satisfied a \emph{rapid-mixing} property, namely that \emph{any} initial state gets very close to a steady state after a time that scales at most logarithmically with system size (see Refs. \cite{RapidMixing,Cubitt2015} for a more precise definition).  Using Lieb-Robinson bounds (which quantum channels do obey~\cite{poulin}), the authors were able to show that this naturally leads to the stability of various steady state properties, such as the expectation values of local observables. 
However, rapid mixing is clearly at odds with having a nontrivial steady state: starting from a trivial state, the Lieb-Robinson bound does not allow long-range correlations to form on the timescale it takes the system to reach its steady state. Thus, it remains an open problem to find conditions for stability that can encapsulate phases with non-trivial long-range order.  
Uniformity is similar in spirit to rapid mixing, but is designed to avoid the above pitfall. Instead of requiring fast relaxation starting from arbitrary initial states, we require it only for steady states of other channels that are nearby in parameter space. In fact, to get around the issue of non-equivalence mentioned above, we will require this to be true for any pair of channels within some small but finite region of parameter space: i.e., we will require that a steady state of any channel within this region relaxes exponentially quickly when evolved with any of the other channels within the same region. While this is still a strong assumption, we will show that it holds at least one known non-trivial example of a probabilistic cellular automaton, and argue non-rigorously that in fact it applies to a larger family of such models, including Toom's rule. Uniformity is essentially designed to capture a notion of adiabatic continuity and indeed, we show that many of the properties associated to stable zero temperature phases follow from it, including LPPL, the analyticity of local expectation values and the robustness of long-range order. 

%In ground state phases, we can invoke the spectral gap to argue for adiabatic continuity; is there a similar interpretation of uniformity in terms of the spectral properties of the underlying channels? 
Is there a spectral interpretation of uniformity that would make it appear more similar to the gap condition of Hamiltonians? For channels, the gap is directly related to a physical relaxation time:  for a fixed system size, in the late-time limit, the state of the system is $\rho(t) = \rho_0 + \exp(-\Delta t) \rho_1 + \ldots$, where $\Delta$ is the gap, and $\rho_1$ is the corresponding eigenstate. Crucially, the caveat of having to consider the long-time before the thermodynamic limit is very important here. If we could apply this formula directly in the thermodynamic limit, it would imply rapid mixing, which we argued above obtains only in the trivial phase. The possibility of non-trivial gapped phases arises due to the non-commutativity of the two limits. This comes about when the generator of the Markov process has (nearly) parallel eigenvectors~\cite{trefethen2005spectra,PhysRevLett.121.086803, zhang2022review, PhysRevX.11.031019}. This leads to the appearance of additional prefactors when we expand $\rho(t)$ in terms of eigenstates and when these prefactors have a sufficiently strong system-size dependence, they can compensate the gap, leading to slow relaxation. In this case, the gap determines the relaxation time only for states that are already sufficiently close to the steady state, but not for arbitrary initial states. This relates to our notion of uniformity: the steady states of nearby channels are already close and thus rapidly approach some perturbed steady state, while trivial (short-range correlated) initial states are far away and take a long time to relax despite the gap. We provide clear numerical evidence that the gap of the channel governs the relaxation rate of the steady states of nearby channels, i.e., the rapid timescale that enters our uniformity assumption. 

It might seem surprising that stable phases should be associated with channels that are nearly defective matrices, since this seems to imply an extreme sensitivity to perturbations. The resolution must lie in the structure of eigenstates, such that while the spectrum is susceptible when arbitrary perturbations are considered, it is actually stable when we restrict to ones that obey locality (along with potentially some other constraints, such as symmetries). Indeed, we will relate uniformity to a spectral property in terms of a resolvent, which is similar to a spectral gap condition but also explicitly includes the family of allowed perturbations. 

Much like the definition of gapped phases of Hamiltonians,  our notion of uniformity has the weakness that it is a condition on an entire open region of parameter space, rather than to a single channel or Markov process. Ideally, one would like some condition of stability that means that a particular channel is stable to some set of perturbations, similar to the results that exist for some classes of gapped Hamiltonians~\cite{michalakis}. We analyze some known cases of stable and unstable cellular automata to conjecture such a condition. The key idea is that local perturbations applied to a steady state should decay quickly, such that they cannot accumulate and change the character of the state during the dynamics. This is consistent with the arguments made above: states that only differ from a steady state by a local perturbation should be ``close enough'' to it such that the gap appropriately determines the rate of relaxation.  The picture that emerges from our analysis is that, in a non-trivial gapped phase, local perturbations relax fast (on a timescale that roughly tracks the spectral gap), while \emph{macroscopic} perturbations relax on parametrically slower timescales that diverge with system size. It is this combination of spectral properties and locality constraints that leads to steady states that are non-trivial yet stable and we formulate some conjectures for conditions that are sufficient to ensure stability to arbitrary perturbations. 

\section{Background and basic results}\label{background}

This section is organized as follows. We begin by reviewing the concept of local quantum channels, mainly to fix the notation we will use in the rest of this paper. Next, we discuss an illustrative example of a channel with a nontrivial gapped phase: namely, a single particle undergoing a biased random walk. Finally, we show that \emph{any} channel in a nontrivial phase is unstable to specific perturbations.

\subsection{Quantum channels and classical Markov chains}

This work considers the properties of both classical and quantum systems with local Markovian dynamics. The former are a subset of the latter (see below), so we will set up the more general quantum formalism here.
We will typically consider states (i.e., density matrices) on Hilbert spaces consisting of qubits on a lattice. The observables we consider are those with finite support and finite norm. The expectation value of an observable $o$ in a state $\rho$ is denoted $\braket{o|\rho}= \mathrm{tr}(o^\dagger \rho )$. 

The most general Markovian evolutions on such states are so-called completely positive trace preserving (CPTP) maps or quantum channels: $\ket{\rho} \rightarrow \mathcal{E}\ket{\rho}=\ket{\mathcal{E}(\rho)}$. All such maps have the property that they preserve the identity matrix on the left $\bra{I}\mathcal{E}=\bra{I}$; this is equivalent to the requirement that they preserve probability i.e., $\mathrm{tr}(\rho)=1$. A quantum channel furnishes a discrete time evolution $\rho = \mathcal{E}^t (\rho)$. There is also a continuous time version of quantum channels; these are the celebrated ``Lindbladian'' maps $\mathcal{L}$. The distinction between channels and Lindbladians is not important for the present work; our results apply to either.

We consider a subclass of quantum channels, namely those with ``locality'' in the Lieb-Robinson sense \cite{hastings_markov_LR,poulin,nachtergaele2011lieb}. This includes as a subclass unitary evolution with a local Hamiltonian and local circuits consisting of nearest neighbour channels. We will define local channels as being those which obey a Lieb-Robinson bound, which we define in terms of an observable $o_X$ supported on some subset of the lattice $X$, and a CPTP map  $\mathcal{V}_Y$ supported on $Y$. The two subsets are separated by some distance $d_{XY}$ on the lattice and the LR bound can be stated as
\begin{equation}\label{eq:LR_bound}
    \braket{o_X | \mathcal{E}^t (\mathcal{V}_Y-I)|\rho}\leq C V \min(1,e^{\alpha(t - d_{XY}/v)}),
\end{equation}
which should hold for any state $\rho$, any norm $1$ observable $o_X$ and any choice of $\mathcal{V}_Y$. $C,v,\alpha$ are constants that depend only on on lattice details and $\mathcal{E}$ and $V=\min(|X|,|Y|)$ is the minimum volume of sets $X,Y$. Intuitively, the LHS of Eq.~\eqref{eq:LR_bound} measures the effect of perturbing a state $\rho$ locally at position $Y$ and measuring the effect at a later time $t$ at position $X$; the bound means that this effect should be small as long as $vt < d_{XY}$. As \cite{poulin} showed, this property holds for any dynamics generated by a local Lindbladian. 

Finally we note that classical Markov chains can always be realised as CPTP maps. Suppose one has a discrete time Markov chain with transition rule $\ket{b} \rightarrow\sum_{b'} T_{b' b}\ket{b'}$, where each $b,b'$ is a state of the Markov chain (e.g., bit-string configurations on a lattice). We define an associated CPTP map as
\begin{equation}
    \mathcal{E}(\ket{b}\bra{b''})=\delta_{b b''}\sum_{b'} T_{b' b}\ket{b'}\bra{b'}.
\end{equation}
This embeds the Markov dynamics into a quantum channel, where the classical probability distributions are represented as density matrices diagonal in the $\ket{b}$ basis. We will call $T$ a local transition rule if the associated CPTP map is local. Many of the examples considered below will include deterministic cellular automata (CA) as special points; these are a subclass of Markovian processes with  $T_{b' b} = \delta_{b'=f(b)}$ where $f$ is a map on configuration space.

\subsection{A simple nonequilibrium example: the asymmetric exclusion process}\label{sec:ASEP}

\begin{figure}
    \centering
    \includegraphics[width=0.85\columnwidth]{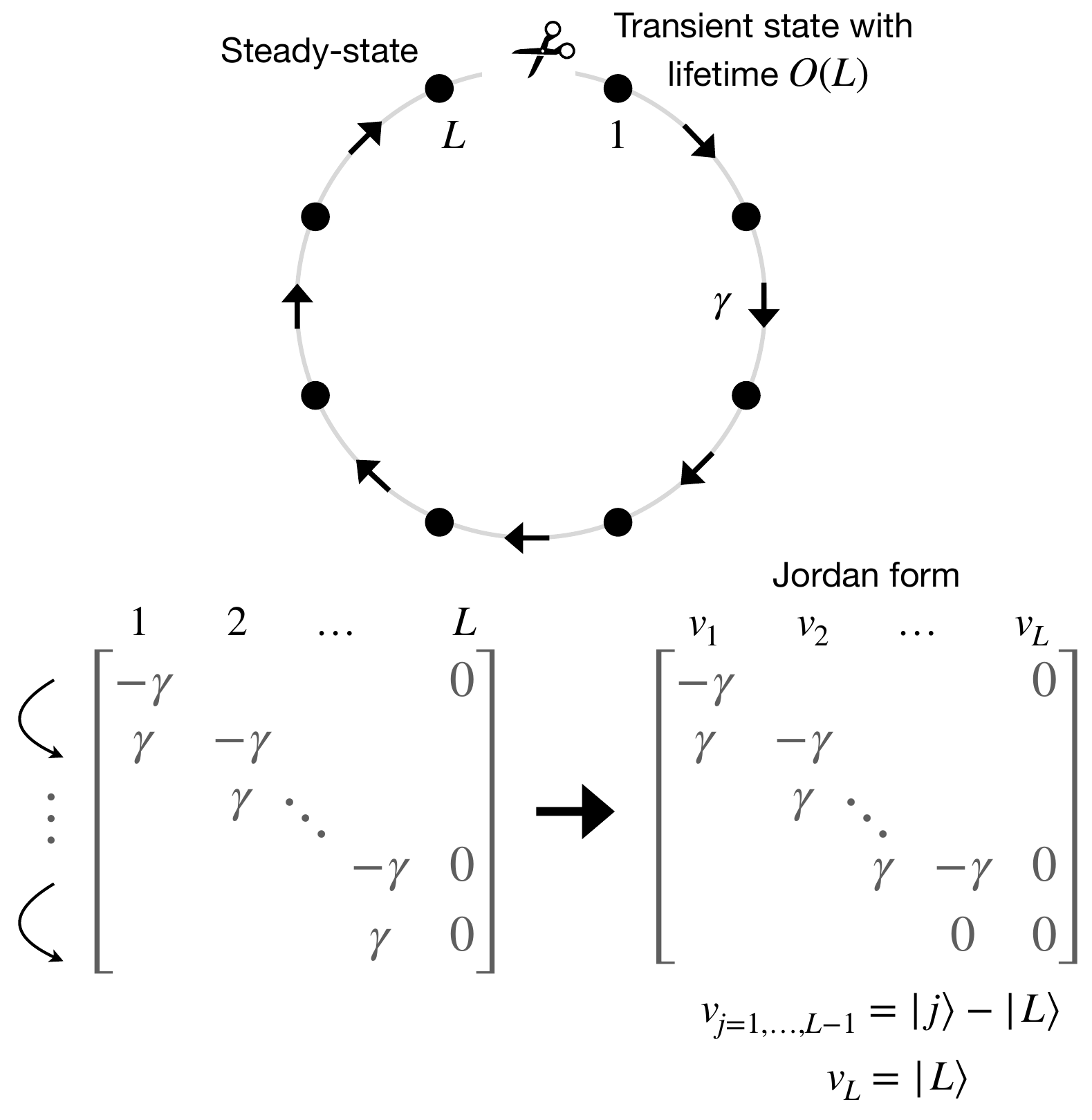}
    \caption{TASEP model and its perturbations. The figure shows the transition matrix on a ring of $L$ sites with the coupling between $x=1,L$ removed. There is a rightward biased hopping rate $\gamma$ on each site, except the last site $x=L$, which is an absorbing steady-state. The resulting matrix (shown) is gapped. However, the system exhibits long relaxation time scales. For example, a particle initialized at $x=1$ takes an $O(L)$ time to relax to the steady state. The gap and some local observables are unstable to introducing an infinitesimal coupling between between the steady state $x=L$ and long lived transient states like $x=1$. }
    \label{fig:tasep_fig}
\end{figure}

We now turn to an illustrative example of a  non-equilibrium process that captures some of the ideas mentioned in our introduction, and will guide our intuition in subsequent sections. In this example, the matrix $M$ generating the dynamics is gapped. However, the gap only sets the relaxation rate of states that are already `close' to the steady state; there are other initial states that take extensively long to relax \cite{Ueda20}. We show that the presence of these long-lived states lead to the gap (and other observables) being unstable to perturbations that couple them to the steady state. This instability is diagnosed by the behavior of the projected spectral resolvent of $M$ (Eq.~\eqref{eq:resolvent}), which diverges despite the presence of a gap. 

The example we consider is a Markov process\footnote{For the purposes of this section we work with a continuous-time dynamics as it makes the analogy to Hamiltonian perturbation theory more explicit; however, our conclusions also apply to discrete-time channels.} involving a single particle on a ring of length $L$. The particle undergoes a biased random walk. However, we remove the coupling between the first and last site in the chain, thus effectively imposing open boundary conditions. This is the one-particle limit of the familiar \emph{asymmetric exclusion process} (ASEP). At each site $i (\neq 1, L)$ in the bulk of the chain, the dynamics is given by the rate equation $\dot{p}_i = \gamma_R p_{i-1} + \gamma_L p_{i+1} - (\gamma_L + \gamma_R) p_i$, i.e., the particle hops to the left (right) at rate $\gamma_L$ ($\gamma_R$). At the first (last) site, the rate for hopping left (right) is set to zero. It is simple to check that this model has two phases in the limit $L \to \infty$: when $\gamma_L > \gamma_R$, the probability distribution in the steady state is localized at the left edge, and vice versa. Moreover, the spectrum of this Markov process is gapped in either phase, with a gap $\Delta=(\sqrt{\gamma_L} - \sqrt{\gamma_R})^2$\cite{Ueda20}. Away from this single-particle limit, the ASEP can be solved using Bethe ansatz methods and retains a gapped phase for open boundary conditions~\cite{PhysRevLett.95.240601}. The mechanism in the many-body case is fundamentally the same as in the simpler one-body case.

Naively, one might think that the presence of a gap implies that any initial state relaxes to the (unique) steady state exponentially fast. This, however, is in contradiction with the locality of the dynamics: for $\gamma_R > \gamma_L$,  light-cone bounds imply that a particle initialized at the first site takes a time at least $O(L)$ to relax to its steady state near the last site. The resolution to this puzzle was understood in \cite{Ueda20}, and is most transparent in the extreme limit where $\gamma_L=0$ (conventionally called the \emph{totally} asymmetric exclusion process, or TASEP). Here, the transition matrix of the Markov process becomes lower triangular and all its nontrivial eigenvectors coalesce into just two eigenvectors with eigenvalues $0,-\gamma_R$ respectively (see Fig.~\ref{fig:tasep_fig}): The steady state, and the eigenvector corresponding to a large Jordan block of size $L-1$.  The initial state with probability localized at the left end ($i=1$) of the chain is orthogonal to the eigenvectors of the matrix, and its early-time dynamics consists of moving down the Jordan block (and rightward along the chain) until, at a time of order $L$, it develops overlap with the Jordan block eigenvector\footnote{In this case, the eigenvector is $|L-1\rangle - |L\rangle$, which has eigenvalue $-\gamma_R$.} and reaches the steady state at a rate set by the corresponding eigenvalue ($\gamma_R$)  (which is also the spectral gap). Thus the gap sets the timescale for the relaxation of states at very late times, or alternatively, states that are already close to the right end of the chain (equivalently far down the Jordan block).

Away from the limit, the transition matrix is technically diagonalizable at any fixed $L$ \cite{Ueda20}. However, all (right) eigenvectors are concentrated in a region of size set by the correlation length $\xi \sim 1/|\log(|\gamma_R/\gamma_L|)$; thus, even though they are linearly independent, they overlap strongly with one another. Writing an initial state localized at the wrong end of the system in terms of these eigenvectors requires exponentially large coefficients in $L$, i.e., in terms of the right eigenbasis, the initial state can be written as $\rho_{\mathrm{init.}} = \rho_0 + \sum_i c_i e^{L/\xi} \rho_i$, where $\rho_0$ is the steady state, $\rho_i$ are the decaying right eigenvectors and $c_i$ are $O(1)$ coefficients. As time passes, the coefficient of a generic right eigenvector evolves as $\sim \exp(L/\xi - \Delta t)$. On timescales $L \sim t$, these coefficients become small, so the probability distribution approaches the steady state (consistent with the light cone bounds). It is only on timescales longer than this, when the particle nears the right end of the system, that relaxation is governed by the gap.

Thus the spectral gap does not govern all relaxation timescales. A quantity that is better tailored to describe relaxation is the size of the spectral resolvent
\begin{equation}\label{eq:resolvent}
    R(z) \equiv Q(z 1\!\!1 - M)^{-1}Q,
\end{equation}
where $1\!\!1$ is the identity matrix and $Q$ is the map that annihilates the steady state, and exactly preserves all other (generalized) eigenvectors of $M$. To see why, note that sandwiching Eq.~\eqref{eq:resolvent} between a norm $1$ operator $o$ and some state $\omega$ can be rewritten in real-time as 
\begin{equation*}
\braket{o|R(z)|\omega} = \int^{\infty}_0 dt e^{-zt} (\braket{o(t)}_\omega - \braket{o}_\infty),    
\end{equation*}
where $\braket{o}_\infty$ is the expectation in the steady state that $\omega$ approaches as $t\to\infty$. The resolvent is thus the Laplace transform of the real time relaxation of $o$ quenching from state $\omega$, and so it encodes relaxation time scales. In particular, at $z=0$ the value of the above integral is determined entirely by the time scale it takes for $\omega$ to relax; thus, the existence of a long-lived transient state $\omega$ has to manifest in a large norm for $R(0)$. For the present model we can choose $o,\omega$ such that this expression is $\langle o |R(0)|\omega\rangle = O(L)$\footnote{Choose $\omega$ to be the state with charge located on the first site, and $o=1-n_L$, one minus the occupation on the last site.}.

We can use this observation to define a notion of gap that \textit{does} correspond to the longest time-scale in the problem. Recall that the spectrum of a matrix $M$ is the set of points $z \in \mathbb{C}$ on which its resolvent is singular. The \emph{pseudospectrum}~\cite{trefethen2005spectra} is, instead, the set on which the resolvent is ``very large'': i.e., the $\epsilon$-pseudospectrum is the set of points $\{ z \in \mathbb{C} \, : \, \Vert R(z) \Vert \geq 1/\epsilon \}$, where $\Vert \cdot \Vert$ denotes the norm. For a finite-size matrix and small enough $\epsilon$, the pseudospectrum eventually consists of small disjoint sets around each eigenvalue. However, the limits of small $\epsilon$ and large $L$ can fail to commute, as they do in ASEP: taking the large-$L$ limit at finite $\epsilon$, one arrives at a pseudospectrum that is \emph{gapless} around $z=0$. In the maximally biased limit of ASEP mentioned above for appropriate observable and state $o,\omega$,  we get an exact expression $\langle o |R(z)|\omega\rangle=\frac{1}{z}((1-z/\gamma_R)^{1-L} - 1 )$ which indeed diverges if we take $L\rightarrow \infty$ first and $z\rightarrow0$ second, indicating TASEP has a gapless pseudospectrum. 

When the pseudospectrum is gapless, an immediate consequence~\cite{trefethen1991pseudospectra} is that the steady-state observables (and similarly the spectral gap) are unstable to perturbations. To see this, consider perturbing the channel from $M$ to $M+V$, where $V$ is small. We can develop a perturbative expansion for the expectation value of an observable $o$ in the steady state of the perturbed dynamics. A typical high-order term in the perturbation series takes the form
\begin{equation}\label{eq:resolvent_pertb}
\langle o | (RV)^n | \rho\rangle,
\end{equation}
where $\ket{\rho}$ is the unperturbed steady state and $R$ is the resolvent of the unperturbed channel at $z=0$. As we have just seen, $R$ can be singular near the origin in the thermodynamic limit, even if $M$ is gapped, so there is no particular reason for the $n$-th order perturbative expression to be small in $n$. We thus see that long lifetimes and instability to perturbations have a common origin, and they both have to do with the pseudospectrum, rather than the actual spectral gap. This should be contrasted with the perturbation theory for the ground state of a Hamiltonian. The corresponding high-order perturbative terms in that case takes the similar form
\begin{equation}\label{eq:tasep_resolvent}
\sum_{k \leq n} \langle 0 | (V R')^k o (R' V)^{n-k} |0\rangle.
\end{equation}
where $R'$ is now the resolvent of the Hamiltonian. For a Hermitian matrix, the resolvent is well-behaved away from the true spectrum, so terms in $n$-th order perturbation theory are suppressed by powers of $(\Vert V \Vert / \Delta)^n$ where $\Delta$ is the spectral gap. 

In the single-particle ASEP example, there is a natural perturbation that leads to sudden changes in observables and closes the spectral gap: it involves re-instating the coupling between the first and last site of the two ends of the system by a weak link of strength $\delta$ (i.e., allowing hopping between $L$ and $1$ with two possibly distinct rates both proportional to $\delta$). The spectral gap closes in the thermodynamic limit for any $\delta > 0$, and the steady states qualitatively change their character: instead of being localized at the edges, they become current-carrying and delocalized. Note that we only added a \emph{single} term to the dynamics: in the equilibrium case, the LPPL principle (and elementary perturbation theory) would have guaranteed the smooth evolution of local expectation values far from the perturbation in this case. Thus the example of ASEP illustrates that even the LPPL principle cannot be carried over directly to the nonequilibrium setting, much less the stronger property of stability against perturbations that act everywhere in the system. The key point about this perturbation is that it couples the steady state to transient states with extensive lifetime. This leads to an explosion of the resolvents in perturbation theory Eq.~\eqref{eq:resolvent_pertb} and an immediate breakdown of stability. 

\subsection{General lessons from ASEP}

The instability in ASEP originated from perturbations that induce a transition from the steady state to a state whose lifetime diverges with $L$. Indeed, it is quite clear that if such states exist, then coupling to them will always lead to a similar instability: under the perturbed dynamics, the steady state ``leaks'' into these transient states at a finite rate, while the rate of returning to the steady state becomes zero in the thermodynamic limit. To make this more precise, let $\rho_\infty$ be the steady state in question and $\rho_0$ some state that takes an extensive time\footnote{By extensive, we mean $O(L)$ with $L$ the linear system size. More generally, it is probably sufficient to assume that the time scale diverges with some positive power of $L$.} to relax to $\rho_\infty$. We modify our original channel, $\mathcal{E}_0$ by adding the perturbation $\mathcal{P}_\epsilon \rho = (1-\epsilon) \rho + \epsilon \mathrm{Tr}(\rho) \rho_0$, so that the full dynamics is generated by the map $\mathcal{E}_\epsilon = \mathcal{E}_0 \mathcal{P}_\epsilon$. Under repeated applications of this perturbed dynamics, the nontrivial steady state $\rho_\infty$ gets replaced by the trivial state $\rho_0$ at a rate $\epsilon$. By assumption, $\rho_0$ relaxes to $\rho_\infty$ at a rate that vanishes in the thermodynamic limit. Since the rate at which $\rho_\infty$ decays to the trivial state exceeds the rate at which it is repopulated, the steady state of the channel $\mathcal{E}_\epsilon$ changes qualitatively for any $\epsilon > 0$ in the thermodynamic limit.

This argument assumed that there exist long-lived transient states. In their absence, the resolvent should be bounded at $z=0$ and the steady state should be stable to perturbations, in agreement with the known stability of rapidly mixing channels~\cite{RapidMixing,Cubitt2015}. However, as noted in Sec.~\ref{sec:summary}, we do not expect such rapid mixing to be a property of non-trivial phases. Indeed, if we assume that $\rho_\infty$ in the above argument has some form of long-range correlations, and we choose $\rho_0$ to be short-range correlated, then long lifetime is ensured by the Lieb-Robinson bound. Thus, we can draw an important conclusion: \emph{any steady state in a nontrivial phase is unstable to certain arbitrarily weak perturbations}. This instability occurs even when the perturbations are much smaller in operator norm than the gap of the channel, in contrast to the situation in gapped equilibrium systems at zero temperature \cite{klich_gap}. 

The stability of non-trivial phases therefore has to originate from the fact that while long-lived transients might exist, allowed perturbations are unable to couple the steady state to them. In the ASEP example above, this happens if we forbid a coupling between the last site of the chain and any site an extensive distance away (measured anti-clockwise in Fig.~\ref{fig:tasep_fig}). In terms of the matrix in  Fig.~\ref{fig:tasep_fig}, this amounts to forbidding entries which couple the top and bottom of the Jordan block. When the couplings are restricted in this way the steady state is in fact stable. We conjecture that this principle holds more generally: gapped channels are stable to small perturbations as long as they are unable to couple the steady state to long-lived transient states of the unperturbed dynamics\footnote{Here, `unable to couple' means that the perturbations cannot produce an O(1) overlap between the steady state and the long-lived transient states under an O(1) time evolution.}. Just as in the ASEP case, this statement can be interpreted as bounding the effective sizes of resolvents as they appear in perturbation theory as in Eq.~\eqref{eq:tasep_resolvent} (see also Eq.~\eqref{eq:spectral_uniform} below). 

In the ASEP example, it is somewhat \emph{ad hoc} whether we consider the dangerous, end-to-end perturbations as physical or not; the coupling is local for the ring geometry in Fig.~\ref{fig:tasep_fig}, but is non-local for a linear chain geometry. In the problem of classifying many-body phases, which is our main concern in this paper, one naturally has some restrictions on the allowed set of perturbations: they have to be spatially local (furthermore, one might wish to enforce certain symmetry constraints as well). Thus, for a phase to be stable, it must be the case that none of the allowed perturbations can couple its steady states to long-lived excitations.  

This structure is particularly clear in the case of deterministic (irreversible) cellular automata (CA), which necessarily have Jordan block structures similar to the TASEP example we discussed above. A straightforward argument shows that the only eigenvalues the transition matrix of a CA can have are $1$ and $0$, thus they are all very strongly gapped. However, CA's generally have nontrivial relaxation times, which can sometimes be extremely long. Such relaxation times are \emph{entirely} due to the structure of their Jordan blocks, and when CA's are perturbed their stability under perturbations is determined precisely by whether these perturbations cause local or nonlocal moves along the Jordan blocks. Absolutely stable automata, such as Toom's rule, owe their stability to the fact they have Jordan blocks such that any \emph{spatially} local perturbation translates into a local move within the Jordan blocks (and thus an absence of couplings between steady states and long-lived transients).

In the remainder of the paper we will see how this structure plays out in concrete examples, and leads to stability or instability, depending on the structure of the dynamics and the set of perturbations we allow. In particular, in Sec.~\ref{sec:ToomLadder} we will encounter an example where although the perturbation cannot couple to long-lived states immediately, its repeated application does eventually lead to a diverging relaxation time and thus instability, albeit one that is invisible at any finite order in perturbation theory. With these examples in mind, we will return to the issue of formulating some general (conjectured) conditions for stability in Sec.~\ref{sec:stability}.

\section{Uniformity}\label{uniformity}

In this section we propose a sufficient condition for a stable open phase of matter. The intuition underlying our condition is as follows. Phases will be associated with connected regions in the space of local channels, along with (a subset of) their associated steady states. We call regions with these properties  ``steady-state bundles''. Open phases of matter are steady-state bundles which have a special property:  for nearby channels $\mathcal{E} ,\mathcal{E} '$ in the steady-state bundle, the steady states of  $\mathcal{E}'$  rapidly relax to those of $\mathcal{E}$ under evolution with $\mathcal{E}$. This technical condition is powerful, insofar as it implies that the correlation properties of steady states are analytic within the steady-state bundle. It also implies a condition on the resolvent of the channels (see Eq.~\eqref{eq:spectral_uniform} below), reminiscent of the gap used in defining ground-state phases of matter. We finish by comparing uniformity in the context of open systems to the well-known notion of gapped phase (encountered in Hamiltonian systems). 

\subsection{Definition of Uniformity}

We begin by defining steady-state bundles:

\begin{defn}
Let $D$ be a connected open subset of the space of local channels. Each channel $\mathcal{E}\in D$ is associated with a convex
space of normalised steady states $\mathcal{S}=S(\mathcal{E})$,
such that $(\mathcal{E}-\mathcal{I})(\mathcal{S})=0$\footnote{This could be generalised to include effective unitary dynamics on
the steady state space (e.g., in order to describe time crystals). This happens when the channel has eigenvalues with modulus $1$ but different from $1$.}, where $\mathcal{I}$ is the identity map. The elements of $\mathcal{S}$\emph{ }are called the \emph{phase
steady-states}, and they could be a proper subset of the set of all
steady states of a particular $\mathcal{E}$. The pair $(D,S)$ is a \emph{steady-state bundle}. \label{def:Steady-state-bundle-.}
\end{defn}

One thing that Def.~\ref{def:Steady-state-bundle-.} lacks is any
notion of a connection between different fibres, or indeed any guarantee
that $S$ is smooth/continuous. In the analogous problem of Hamiltonians and their ground states, this smoothness
is guaranteed by a gap. In our open-systems case, the same will be
guaranteed by our fast relaxation condition which we encapsulate in
the following definition.
\begin{defn}
A steady-state bundle $(D,S)$ is \emph{uniform} if it can be completely
covered by a set of balls $\{B_{i}\}_{i}$ such that: for any $\mathcal{E},\mathcal{E}'\in B_{i}$,
and for any $\rho\in\mathcal{S}=S(\mathcal{E}),$ there exists a
$\rho'\in\mathcal{S}'=S(\mathcal{E}')$ such that $\bigl\Vert\mathcal{E}^{t}\rho'-\rho\bigr\Vert\leq e^{-O(\Delta t)}$
where $\Delta$ does not scale with system size and is uniformly bounded
over all the $B_{i}$\footnote{We are implicitly working in the thermodynamic limit here. If one was working with finite systems and taking the TDL, it would
be important to specify that the radius of the balls should not go
to zero as the system size increases.}. \label{def:uniformity}
\end{defn}
In other words, a steady state bundle is uniform iff, whenever a channel has a phase steady state, that steady state arises from the finite
time evolution of a phase steady state of a nearby channel.
The reader might rightly worry that the statement $\bigl\Vert\mathcal{E}^{t}\rho'-\rho\bigr\Vert\leq e^{-O(\Delta t)}$ is ambiguous. What norm is being used? Will strict exponential decay be required in what follows? We will delay these details to App.~\ref{app:uniformity}. Regarding the norm, the statement of uniformity roughly says that $|\langle o|\mathcal{E}^{t}\rho'-\rho\rangle|\leq e^{-O(\Delta t)}\Vert o \Vert_\infty$ for local observables $o$. Regarding the need for exponential decay, for Theorems \ref{thm:LPPL} and \ref{thm:Uniformity-implies-analyticity} a sufficiently strong power law decay is sufficient. We will assume exponential decay only because it simplifies the proofs, although the ideas remain the same.

As a first result, we show that uniformity implies that the phase steady-state
spaces are isomorphic throughout $D$, in the sense that there is a one-to-one correspondence between states in $\mathcal{S}$ and $\mathcal{S}'$ for any $\mathcal{S},\mathcal{S}' \in S$. 
\begin{thm}\label{thm:isomorphism}
Uniformity implies the phase steady-states are isomorphic throughout
$D$. 
\end{thm}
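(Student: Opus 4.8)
The plan is to prove the statement one ball at a time and then globalize using connectedness of $D$. Fix a ball $B$ from the cover guaranteed by Def.~\ref{def:uniformity}, together with two channels $\mathcal{E},\mathcal{E}'\in B$ and their phase steady-state spaces $\mathcal{S}=S(\mathcal{E})$, $\mathcal{S}'=S(\mathcal{E}')$. The natural candidate for the correspondence is the long-time flow map $F:\mathcal{S}'\to\mathcal{S}$, $F(\rho')=\lim_{t\to\infty}\mathcal{E}^{t}\rho'$, together with its partner $G:\mathcal{S}\to\mathcal{S}'$, $G(\rho)=\lim_{t\to\infty}(\mathcal{E}')^{t}\rho$, obtained by exchanging the roles of the two channels (legitimate because uniformity is a symmetric condition on the unordered pair). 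Both maps are affine wherever defined, since $\mathcal{E}^t$ is linear and limits preserve affine combinations, and any limit $\lim_t\mathcal{E}^t\rho'$ that exists is fixed by $\mathcal{E}$ (continuity gives $\mathcal{E}\lim_t\mathcal{E}^t\rho'=\lim_t\mathcal{E}^{t+1}\rho'$), so $F$ does land among steady states of $\mathcal{E}$.

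First I would read the surjectivity straight out of Def.~\ref{def:uniformity}. Forward uniformity says that for every $\rho\in\mathcal{S}$ there is a $\rho'\in\mathcal{S}'$ with $\mathcal{E}^{t}\rho'\to\rho$; writing $h:\rho\mapsto\rho'$ for such a selection gives $F\circ h=\mathrm{id}_{\mathcal{S}}$, so $F$ is surjective and $h$ injective, and the same argument with the channels swapped makes $G$ surjective. The quantitative input I would then add is that inside a small ball the correspondence barely moves a steady state. Using the telescoping identity $\mathcal{E}^{t}-(\mathcal{E}')^{t}=\sum_{s=0}^{t-1}\mathcal{E}^{s}(\mathcal{E}-\mathcal{E}')(\mathcal{E}')^{t-1-s}$ and the fact that each channel is a contraction gives $\Vert\mathcal{E}^{t}\rho'-(\mathcal{E}')^{t}\rho'\Vert\le t\,\Vert\mathcal{E}-\mathcal{E}'\Vert$ when tested against local observables. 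Since $(\mathcal{E}')^{t}\rho'=\rho'$ for $\rho'\in\mathcal{S}'$, combining this with the exponential relaxation $\Vert\mathcal{E}^{t}\rho'-F(\rho')\Vert\le e^{-O(\Delta t)}$ and optimizing at $t\sim\Delta^{-1}\log(1/r)$ (allowed since $\Delta$ is bounded below uniformly over the cover) yields $\Vert F(\rho')-\rho'\Vert\lesssim r\log(1/r)$, with $r$ the ball radius, and likewise for $G$.

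With surjectivity and closeness in hand, I would obtain an exact bijection by matching extreme points. The spaces $\mathcal{S},\mathcal{S}'$ are convex (Def.~\ref{def:Steady-state-bundle-.}), and their extreme points are the pure phase steady-states, which in the situations of interest are macroscopically distinguishable and hence separated by an $O(1)$ distance $\kappa$. Because $F$ is affine and surjective, $\mathcal{S}=F(\mathcal{S}')$ is the convex hull of the images $F(\sigma')$ of the extreme points $\sigma'$ of $\mathcal{S}'$, and each such image lies within $\delta=O(r\log(1/r))$ of $\sigma'$; choosing the cover fine enough that $\delta\ll\kappa$, the map $\sigma'\mapsto F(\sigma')$ is injective and pairs the extreme points of $\mathcal{S}'$ with those of $\mathcal{S}$ uniquely by proximity (a $\delta$-small move can neither merge two $\kappa$-separated extreme points nor carry one into the interior). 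This proximity pairing, extended affinely, is an isomorphism $\mathcal{S}'\cong\mathcal{S}$. Finally I would globalize: since $D$ is connected and covered by $\{B_i\}$, any $\mathcal{E},\mathcal{E}'\in D$ are joined by a finite chain of overlapping balls, and composing the per-ball isomorphisms yields the claimed one-to-one correspondence $S(\mathcal{E})\cong S(\mathcal{E}')$.

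I expect the main obstacle to be upgrading the \emph{soft} existence statement in Def.~\ref{def:uniformity} to an \emph{exact} bijection. Two interlocking points need care: (i) showing $F$ is well-defined on all of $\mathcal{S}'$, since uniformity a priori guarantees convergence only for the preimages it produces, so one must argue that no steady state of $\mathcal{E}'$ is left out (here the surjectivity of $G$ together with the closeness estimate does the work); and (ii) turning ``$G\circ F$ is close to the identity'' into genuine injectivity, which the closeness bound alone cannot deliver and which is precisely why the extreme-point/macroscopic-separation structure must be invoked. A secondary subtlety is ensuring the flow limit lands in the \emph{phase} steady states $\mathcal{S}$ rather than in some larger steady-state set of $\mathcal{E}$; this has to be built into the definition of the steady-state bundle.
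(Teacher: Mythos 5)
Your globalization step (a finite chain of overlapping balls on the connected set $D$) is sound and matches the paper's intent, which accomplishes the same thing with a path-plus-infimum argument. The per-ball argument, however, has two genuine gaps. First, your construction requires the flow map $F(\rho')=\lim_{t\to\infty}\mathcal{E}^{t}\rho'$ to be defined on \emph{all} of $\mathcal{S}'$: you use the affine surjection $\mathcal{S}=F(\mathcal{S}')$ and the images of \emph{all} extreme points of $\mathcal{S}'$. Uniformity (Def.~\ref{def:uniformity}) does not give you this. It only asserts that each $\rho\in\mathcal{S}$ has \emph{some} parent $\rho'\in\mathcal{S}'$ with $\mathcal{E}^{t}\rho'\to\rho$; an arbitrary $\sigma'\in\mathcal{S}'$ need not converge under $\mathcal{E}^{t}$ at all. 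You flag this yourself, but the proposed repair (surjectivity of $G$ plus the closeness estimate) is never carried out, and it is unclear it can be: knowing $(\mathcal{E}')^{t}\sigma\to\sigma'$ for some $\sigma\in\mathcal{S}$ says nothing about the forward orbit $\mathcal{E}^{t}\sigma'$. Second, your injectivity step imports an assumption absent from Defs.~\ref{def:Steady-state-bundle-.} and~\ref{def:uniformity}: that extreme points of the phase steady-state spaces are ``macroscopically distinguishable,'' i.e.\ separated by an $O(1)$ constant $\kappa$. Nothing in the definitions guarantees this, so even where the argument goes through it proves a weaker, conditional statement. (A lesser issue: in the thermodynamic limit $\Vert\mathcal{E}-\mathcal{E}'\Vert$ is extensive, so the telescoping bound $t\,\Vert\mathcal{E}-\mathcal{E}'\Vert$ must be replaced by a Lieb--Robinson-filtered estimate of order $t(vt)^{d}r$ for local observables; this is fixable and only changes logarithms.)

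The machinery that creates these gaps is not needed for the statement actually being proved, which is a one-to-one correspondence. Indeed, you already wrote down the essential ingredients: you observe that the parent-selection map $h:\mathcal{S}\to\mathcal{S}'$ is injective (if $h(\rho_{1})=h(\rho_{2})$, a single state relaxes under $\mathcal{E}^{t}$ to both $\rho_{1}$ and $\rho_{2}$, forcing $\rho_{1}=\rho_{2}$ by uniqueness of limits). The role-swapped selection $h':\mathcal{S}'\to\mathcal{S}$ is injective for the same reason, and a bijection then follows set-theoretically (Cantor--Schr\"oder--Bernstein), with no totality of $F$, no affinity, and no separation of extreme points required. This is essentially the paper's proof, phrased there as the pair of inclusions $\lim_{t\to\infty}\mathcal{E}^{t}(\mathcal{S}')\supseteq\mathcal{S}$ and $\lim_{t\to\infty}(\mathcal{E}')^{t}(\mathcal{S})\supseteq\mathcal{S}'$ inside each ball, followed by the connectedness argument. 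If what you are after is the stronger conclusion your extreme-point construction aims at---an affine correspondence pairing nearby extremal states---that is a genuinely finer result than Theorem~\ref{thm:isomorphism}, but it needs hypotheses beyond uniformity as defined.
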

\begin{proof}
Phase steady states are uniform throughout any of the open sets $B_{i}$
covering $D$. To see this, note that uniformity implies that for all $\mathcal{E},\mathcal{E}'\in B_{i}$,
we have $\lim_{t\rightarrow\infty}\mathcal{E}^{t}(\mathcal{S}')\supseteq\mathcal{S}$
but also $\lim_{t\rightarrow\infty}(\mathcal{E}')^{t}(\mathcal{S})\supseteq\mathcal{S}'$.
This implies an isomorphism $\mathcal{S}\sim\mathcal{S}'$ . Therefore
$S(\mathcal{E})$ isomorphic for all $\mathcal{E}\in B_{i}$. 

To see why this implies that the phase steady state spaces are uniform
throughout all of $D$, argue by contradiction. Suppose therefore
that there is a pair $\mathcal{E},\mathcal{E}'\in D$ with $\mathcal{S}\nsim\mathcal{S}'$.
Form a path $\mathcal{E}_{\lambda}$ between $\mathcal{E},\mathcal{E}'$
lying entirely in $D$. Then the following infimum exists $\lambda_{*}\equiv\inf\{\lambda\in[0,1]:\mathcal{S}_{\lambda}\nsim\mathcal{S}\}$.
By the definition of $\lambda_{*}$, $\mathcal{E}_{\lambda_{*}}$
must be arbitrarily close both to models with steady state manifold
$\sim\mathcal{S}$ \emph{and }steady state manifold $\nsim\mathcal{S}$.
But $\mathcal{E}_{\lambda_{*}}$ is in some \emph{open }ball $B_{i}$
in the cover, on which the isomorphism class of $S$ is constant.
So we have a contradiction, as required.
\end{proof}

Let us close this section with a few comments about the definitions we have introduced. Firstly, one might wonder why we decided to define uniformity in the way we did, which is in some sense backwards: we have assumed that every $\rho$ has a nearby ``parent'' $\rho'$ that quickly relaxes to $\rho$. It might have seemed more natural to instead assume that $\rho$ itself relaxes quickly under the application of $\mathcal{E}'$. The reason for using our definition of uniformity is highlighted by the above proof of isomorphism in $D$: had we used the other definition, the argument would not go through the same way. Intuitively this is related to the issue of fundamental asymmetry in quantum channels that is absent in the unitary case: it is possible to destroy long-range order in finite time it is not possible to create it. Thus, requiring that every $\rho$ is ``downstream'' from some other state nearby in the bundle is a stronger condition that allows us to establish more properties (indeed, we will rely on it again when we prove robustness of long-range order in the next section). On the other hand, some of our proofs below (such as those of LPPL and analyticity of local expectation values) would go through with both definitions and indeed it is possible that the two ways of defining uniformity might ultimately turn out to be equivalent. 

Finally, the use of the word ``bundle'' in Def.~\ref{def:Steady-state-bundle-.} might make our readers wonder whether these objects can possess a non-trivial topology, similarly to what occurs when, for example, one considers ground states in the quantum Hall effect \cite{simon_bundle}. Indeed, in App.~\ref{app:bundle} we provide a simple example of such a topologically non-trivial steady state bundle. We leave a more thorough exploration of such topological phenomena in steady state phases for future work.

\subsection{Uniformity implies analyticity, LPPL and the stability of long-ranged correlations.}\label{sec:uniformity_implies}

Uniformity implies that $(\mathcal{E})^{t}(\mathcal{S}')$
approaches $\mathcal{S}$ exponentially quickly (and vice versa) with
uniform rate $\Delta$. This has important consequences which we describe in this section.

Firstly, we show local perturbations to a channel in in a uniform steady-state bundle only lead to changes in the steady state local to those perturbations. In other words ``local perturbations perturb locally'' (LPPL). 
\begin{thm}
Within a uniform steady state bundle, local perturbations perturb locally. \label{thm:LPPL}
\end{thm}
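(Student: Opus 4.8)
The plan is to show that when $\mathcal{E}'$ differs from $\mathcal{E}$ only by a channel $\mathcal{V}_Y$ acting on a bounded region $Y$, the expectation value of any observable $o_X$ supported far from $Y$ is nearly the same in the two associated phase steady states. Write $\rho$ for a phase steady state of $\mathcal{E}$ and let $\rho'$ be the partner phase steady state of $\mathcal{E}'=\mathcal{E}\mathcal{V}_Y$ supplied by uniformity (Def.~\ref{def:uniformity}), so that $|\braket{o_X|\mathcal{E}^t\rho'-\rho}|\le e^{-O(\Delta t)}\Vert o_X\Vert_\infty$. The quantity I want to control is $|\braket{o_X|\rho'-\rho}|$, and the idea is to interpolate through the intermediate-time state $\mathcal{E}^t\rho'$: uniformity bounds $\braket{o_X|\mathcal{E}^t\rho'-\rho}$ (useful at large $t$), while a Lieb-Robinson estimate bounds $\braket{o_X|\rho'-\mathcal{E}^t\rho'}$ (useful at small $t$), and I then optimize over $t$.

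The Lieb-Robinson step is the computational core. Since $\rho'$ is stationary under $\mathcal{E}'$ we have $\rho'=(\mathcal{E}')^t\rho'$, so $\braket{o_X|\rho'-\mathcal{E}^t\rho'}=\braket{o_X|[(\mathcal{E}')^t-\mathcal{E}^t]\rho'}$. Expanding the difference of powers via the telescoping identity
\begin{equation*}
(\mathcal{E}')^t-\mathcal{E}^t=\sum_{s=0}^{t-1}\mathcal{E}^{s}(\mathcal{E}'-\mathcal{E})(\mathcal{E}')^{t-1-s},
\end{equation*}
and using $(\mathcal{E}')^{t-1-s}\rho'=\rho'$ together with $\mathcal{E}'-\mathcal{E}=\mathcal{E}(\mathcal{V}_Y-I)$, each term collapses to $\braket{o_X|\mathcal{E}^{s+1}(\mathcal{V}_Y-I)\rho'}$, which is exactly the object bounded by Eq.~\eqref{eq:LR_bound}. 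Summing the resulting geometric series gives, for $t\lesssim d_{XY}/v$, a bound of order $e^{\alpha(t-d_{XY}/v)}$.

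Combining the two estimates yields $|\braket{o_X|\rho'-\rho}|\lesssim e^{\alpha(t-d_{XY}/v)}+e^{-O(\Delta)t}$, a sum of one term increasing and one decreasing in $t$. Choosing $t$ to balance them gives $t_\ast\propto d_{XY}$ (and one checks $t_\ast<d_{XY}/v$, so the regime of validity of the Lieb-Robinson sum is respected), making both terms decay as $e^{-c\,d_{XY}}$ for a constant $c$ depending on $\alpha,v,\Delta$. Hence the change in $\braket{o_X}$ induced by the local perturbation decays exponentially in the distance from its support, which is the LPPL statement. The continuous-time (Lindbladian) case is identical after replacing the telescoping sum by the Duhamel identity $\partial_s[e^{s\mathcal{L}'}e^{(t-s)\mathcal{L}}]=e^{s\mathcal{L}'}(\mathcal{L}'-\mathcal{L})e^{(t-s)\mathcal{L}}$ and the geometric sum by an integral.

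I expect the main obstacle to be not this estimate but the bookkeeping needed to legitimately invoke uniformity. I must check that inserting a single local channel $\mathcal{V}_Y$ keeps $\mathcal{E}'$ within the same ball $B_i$ as $\mathcal{E}$, which requires that the metric on the space of local channels assigns small distance to perturbations supported on a bounded region — as it should, since in the thermodynamic limit a single local term is an infinitesimal change in the density of terms. One must also handle that uniformity only furnishes \emph{some} $\rho'\in S(\mathcal{E}')$ for each $\rho$; when the steady-state space is nontrivial, identifying $\rho'$ as the correct partner of $\rho$ relies on the isomorphism of Theorem~\ref{thm:isomorphism}. Finally, as noted after Def.~\ref{def:uniformity}, the uniformity bound need only be a sufficiently strong power law rather than a true exponential for the same optimization to produce (power-law) locality.
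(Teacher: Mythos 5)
Your proof is correct and follows essentially the same route as the paper's: your telescoping identity is precisely the paper's Eq.~\eqref{eq:KKp} applied to the steady state $\rho'$ (so that the powers of $\mathcal{E}'$ collapse), each resulting term is bounded by the Lieb-Robinson estimate Eq.~\eqref{eq:LR_bound}, and the conclusion follows by balancing that bound against the uniformity decay. The only cosmetic difference is bookkeeping: you stop the evolution at a finite optimized time $t_*\propto d_{XY}$ and apply the triangle inequality, whereas the paper takes $t$ large and splits the sum at a cutoff $t_\epsilon$, using the exponential decay of the summand $\lozenge_s$ in $s$ to control the tail.
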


\begin{proof}
Let $\mathcal{E} \in D$ for some steady state bundle $(D,S)$ and let $\mathcal{E}' \in D$ be the result of perturbing $\mathcal{E}$ at
location $y$. We denote their respective steady state manifolds by $\mathcal{S},\mathcal{S}' \in S$. By decomposing the circuits defining $\mathcal{E},\mathcal{E}'$,
we may write $\mathcal{E}=\mathcal{E}_{c}\mathcal{E}_{y}$ and
$\mathcal{E}'=\mathcal{E}_{c}\mathcal{E}'_{y}$ where $\mathcal{E}_{y},\mathcal{E}'_{y}$
are local channels supported on a neighbourhood of $y$, and $\mathcal{E}_{c}$
is a local channel. 

Suppose the perturbation is sufficiently small such that $\mathcal{E},\mathcal{E}'\in B_{i}$
for some $i$ -- i.e., they are both in one of the open sets used to
define uniformity. Take any $\rho\in\mathcal{S}$. Let $\rho'$ denote
the corresponding state in $\mathcal{S}'$ which quickly relaxes to
$\rho$ (which exists by uniformity). Use the exact identity
\begin{equation}
\mathcal{E}^{t}=(\mathcal{E}')^{t}+\sum_{s=0}^{t-1}(\mathcal{E})^{s}[\mathcal{E}-\mathcal{E}'](\mathcal{E}')^{t-s}\label{eq:KKp}
\end{equation}
relating time evolution under $\mathcal{E},\mathcal{E}'$, to evaluate any unit norm observable
$o_{x}$ at location $x$ 
\begin{equation}
\braket{ o_{x}|(\mathcal{E})^{t}|\rho'}=\braket{ o_{x}|\rho'}+\sum_{s=0}^{t-1}\braket{ o_{x}|(\mathcal{E})^{s}[\mathcal{E}-\mathcal{E}']|\rho'}. \label{eq:lppl_master}   
\end{equation}

Uniformity implies that the LHS approaches $\braket{o_{x}|\rho}$
exponentially quickly so that
\begin{equation*}
\braket{o_{x}|\rho}+e^{-O(\Delta t)}=\braket{ o_{x}|\rho'}+\sum_{s=0}^{t-1}\underbrace{\braket{o_{x}|(\mathcal{E})^{s}[\mathcal{E}-\mathcal{E}']|\rho'}}_{\lozenge_{s}}.
\end{equation*}
Two observations about the integrand $\lozenge_{s}$. Firstly, it
tends to zero exponentially quickly in $s$, as can be seen from writing.
\begin{align*}
\lozenge_{s} &=\braket{o_{x}|\mathcal{E}^{s+1}-\mathcal{E}^{s}|\rho'}
 \\ &=\braket{o_{x}|\rho}-\braket{ o_{x}|\rho}+e^{-O(\Delta s)}
  =e^{-O(\Delta s)}.
\end{align*}
Therefore the integrand may be cutoff at $t_{\epsilon}=\Delta^{-1}\log(\epsilon^{-1})$
while incurring error at most $\epsilon$. We assume $t>t_{\epsilon}$
henceforth and write
\begin{equation*}
\braket{o_{x}|\rho}+e^{-O(\Delta t)}=\braket{ o_{x}|\rho'}+\sum_{s=0}^{t_{\epsilon}}\lozenge_{s}+O(\epsilon). 
\end{equation*}

Secondly, we can bound $\lozenge_{s}$ using the Lieb-Robinson bound. where we introduced $r = |x-y|$. Introducing $r \equiv |x-y|$ we can write
\begin{equation}\label{eq:LPPL_LR}
|\lozenge_{s}| =\braket{o_{x}|(\mathcal{E})^{s}\mathcal{E}_{c}[\mathcal{E}_{y}-\mathcal{E}'_{y}]|\rho'}
\leq e^{\alpha(s-r/v)},
\end{equation}
(This follows from the fact that $\bra{o_x} \mathcal{E}^{s} \mathcal{E}_c$ is supported within $[x-vs,x+vs]$ up to exponentially small tails.) Now let $r >2vt_{\epsilon}$, such that the bound in Eq.~\eqref{eq:LPPL_LR} is guaranteed to be exponentially small for all $s \leq t_\epsilon$. In this case,
\begin{equation*}
\braket{o_{x}|\rho-\rho'}=e^{-O(\Delta t)}+t_{\epsilon}e^{-\alpha\frac{r}{2v}}+O(\epsilon).   
\end{equation*}
Now let us choose $\epsilon=e^{-ar}$ for some $a>0$. That implies $t_{\epsilon}=ar/\Delta$.
This is consistent with our use of the Lieb-Robinson bound provided
$r/(2v)>ar/\Delta$, which holds provided $a<\Delta/(2v)$. It
suffices to pick $a=\Delta/(4v)$. Putting all of these together, and
using $t>t_{\epsilon}$ gives
\begin{equation*}
|\braket{o_{x}|\rho-\rho'}|= Ce^{-r\Delta/(4v)}+C'(r/4v)e^{-\alpha\frac{r}{2v}}+C''e^{-r\Delta/(4v)},    
\end{equation*}
which is clearly exponentially decaying in $r$. 
\end{proof}

We are able to show more, namely that said correlation functions are analytic within a uniform steady state bundle, even for perturbations that act everywhere in the system. 
\begin{thm}
Uniformity implies analyticity of local correlations.\label{thm:Uniformity-implies-analyticity}
\end{thm}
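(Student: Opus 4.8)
The plan is to reduce the statement to a one-parameter family and exhibit any local correlation function as a uniform limit of polynomials whose approximation error decays \emph{geometrically} in their degree, which forces analyticity. Concretely, fix a base channel $\mathcal{E}_0\in D$ and a perturbation direction $V=\sum_y V_y$ that is a sum of local, trace-annihilating terms (so that $\braket{I|V_y}=0$), and set $\mathcal{E}_\lambda=\mathcal{E}_0+\lambda V$, with $\lambda$ small enough that $\mathcal{E}_\lambda$ stays inside one of the balls $B_i$ of Def.~\ref{def:uniformity}. Fix a reference steady state $\rho_0\in S(\mathcal{E}_0)$ and define $\rho_\lambda:=\lim_{t\to\infty}\mathcal{E}_\lambda^t\rho_0$, the phase steady state of $\mathcal{E}_\lambda$ that uniformity associates to $\rho_0$ (well defined and system-size independent via Thm.~\ref{thm:isomorphism}). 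The goal is to show that $f(\lambda):=\braket{o_x|\rho_\lambda}$ is analytic near $\lambda=0$, with a radius of convergence bounded below uniformly in system size.

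First I would invoke uniformity exactly as in the proof of Thm.~\ref{thm:LPPL}: since $\mathcal{E}_\lambda^t\rho_0\to\rho_\lambda$ at the uniform rate $\Delta$, we have $f(\lambda)=\braket{o_x|\mathcal{E}_\lambda^t|\rho_0}+e^{-O(\Delta t)}$. The key observation is that the finite-time quantity $P_t(\lambda):=\braket{o_x|(\mathcal{E}_0+\lambda V)^t|\rho_0}$ is, for each fixed $t$, a genuine polynomial in $\lambda$ of degree $t$, since each of the $t$ factors contributes at most one power of $\lambda$. Choosing the cutoff $t_\epsilon=\Delta^{-1}\log(1/\epsilon)$ as in Thm.~\ref{thm:LPPL} then exhibits $f$, on the real interval of admissible $\lambda$, as approximated to accuracy $\epsilon$ by a polynomial of degree $d=t_\epsilon$; inverting this relation gives a best-approximation error $E_d(f)\leq C e^{-\Delta d}$ that decays geometrically in the degree. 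By the converse direction of Bernstein's theorem—a function on an interval whose polynomial approximation error decays geometrically extends analytically to a complex neighborhood of that interval—this already yields real-analyticity of $f$, and hence of the local correlation functions.

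The role of the Lieb-Robinson bound Eq.~\eqref{eq:LR_bound} is to make every constant above independent of the system size $L$, which is the only genuinely subtle point once $V$ is allowed to act everywhere. Equivalently, I would control the Taylor coefficients $a_k=\lim_t[\lambda^k]P_t(\lambda)$ directly: expanding $(\mathcal{E}_0+\lambda V)^t$ produces $k$ insertions of $V=\sum_y V_y$ separated by powers of $\mathcal{E}_0$, and a priori the spatial sums over insertion sites scale like $L^{Dk}$. Applying Eq.~\eqref{eq:LR_bound} to the Heisenberg-evolved observable $\braket{o_x|\mathcal{E}_0^s}$ confines each insertion to the light cone of $o_x$ at the relevant time and, using $\braket{I|V_y}=0$, suppresses the disconnected contributions of far-away sites, replacing $L^{Dk}$ by an $L$-independent light-cone volume. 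Convergence of the $t\to\infty$ limit defining $a_k$ is then supplied by uniformity, since the traceless operator $V\rho_\lambda$ is a superposition of local perturbations of $\rho_\lambda$ and therefore relaxes at the fast rate $\Delta$ when probed locally.

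The main obstacle I anticipate is precisely this last point: establishing that $\braket{o_x|\mathcal{E}_\lambda^s\,V\rho_\lambda}$ decays exponentially in $s$ at the uniform rate $\Delta$, with an $L$-independent prefactor, even though $V\rho_\lambda$ is a macroscopic (extensive) perturbation rather than a steady state of a nearby channel to which Def.~\ref{def:uniformity} applies verbatim. The resolution must route through locality: Eq.~\eqref{eq:LR_bound} reduces the local observable's sensitivity to the part of $V$ inside its light cone, turning the extensive perturbation into an effectively local one to which the fast relaxation guaranteed by uniformity can be applied. Making this reduction quantitative—and verifying that the resulting geometric bound $E_d(f)\leq Ce^{-\Delta d}$ holds uniformly in $L$, so that the Bernstein conclusion survives the thermodynamic limit—is the technical heart of the argument; the remaining complex-analysis step is then standard.
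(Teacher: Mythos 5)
Your proposal is correct, and it shares the paper's first half while taking a genuinely different route at the decisive complex-analysis step. Both arguments use uniformity in the same way: the steady-state expectation value equals a finite-time quantity up to an error $e^{-O(\Delta t)}$ that is uniform in the parameter and in system size, i.e. $f(\lambda)=\braket{o_x|\mathcal{E}_\lambda^{t}|\rho_0}+e^{-O(\Delta t)}$. The paper then argues that $o(\epsilon;t)=\braket{o|\mathcal{E}_\epsilon^{t}\omega_0}$ is a sequence of bounded analytic functions converging uniformly to $\braket{o}_\epsilon$, and invokes the principle that a uniform limit of bounded analytic functions is analytic. That principle requires convergence (or at least local boundedness) on a \emph{complex} neighborhood; on a real interval alone, uniform limits of polynomials are merely continuous, and uniformity as defined only controls real, physical channels ($\Vert\mathcal{E}_\epsilon^{t}\Vert$ can grow exponentially in $t$ once $\epsilon$ is complexified). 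Your route closes exactly this soft point: for a linear family $\mathcal{E}_0+\lambda V$ the finite-time quantity is a genuine polynomial of degree $t$ approximating $f$ to accuracy $e^{-O(\Delta t)}$ on the real interval, so the best-approximation errors obey $E_d(f)\leq Ce^{-\Delta' d}$, and Bernstein's inverse theorem converts this geometric degree-versus-error tradeoff into an analytic extension on a Bernstein ellipse whose size is set by $\Delta'$ and is independent of $L$. What the paper's argument buys is brevity and generality (arbitrary, not necessarily linear, families within the ball); what yours buys is rigor precisely where the paper is informal, together with an explicit, $L$-independent domain of analyticity. Your reading of where Lieb-Robinson enters is also correct: in the thermodynamic limit the polynomial coefficients are sums over insertion positions of $V=\sum_y V_y$, and the light-cone bound plus $\braket{I|V_y}=0$ is what makes these coefficients finite and $L$-independent. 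One clarification: the "main obstacle" you flag at the end---exponential relaxation of the extensive perturbation $V\rho_\lambda$ probed locally---is only needed for your secondary, term-by-term control of the Taylor coefficients $a_k$; your primary Bernstein argument never needs it, since it works with the polynomials $P_t$ as a whole rather than with their individual coefficients, so the proof is already complete without resolving that point.
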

\begin{proof}
It suffices to show that the properties of steady states are analytic
in any of the balls $B_{i}$ comprising our uniform bundle. Let $\mathcal{E}_{0}\in B_{i}$
be an arbitrary point in any of the balls, and let $\omega_{0}\in S(\mathcal{E}_{0})$
be any of the phase steady states. Consider any family of channels
in this ball emanating from $\mathcal{E}_{0}$, namely $\mathcal{E}_{\epsilon}\in B_{i}$
for $\epsilon\in[0,1]$. We can associate a steady state $\omega_{\epsilon}=\omega_{\epsilon}(\mathcal{E}_{\epsilon};\mathcal{E}_{0},\omega_{0})\in S(\mathcal{E}_{\epsilon})$
to each of these channels such that $\bigl\Vert\mathcal{E}_{\epsilon}^{t}\omega_{0}-\omega_{\epsilon}\bigr\Vert\leq e^{-O(\Delta t)}$.
$\omega_{\epsilon}$ are a one-dimensional family of states in our
steady-state bundle that all ``roughly look like $\omega_{0}$''. Let $o$
be a local observable. Our goal is to show that $\braket{o}_{\epsilon}\equiv\braket{o|\omega_{\epsilon}}$
is an analytic function of $\epsilon$. 

First, uniformity implies that $\braket{o}_{\epsilon} =\braket{ o|\mathcal{E}_{\epsilon}^{t}\omega_{0}} +e^{-O(\Delta t)}$. Now note that $o(\epsilon;t)\equiv\braket{ o|\mathcal{E}_{\epsilon}^{t}\omega_{0}}$
is a sequence of bounded analytic functions in $\epsilon$ which tend
to $\braket{o}_{\epsilon}$ \emph{uniformly} as $t\rightarrow\infty$. The uniform convergence follows directly from the fact that the exponential  convergence in $t$, with a fixed rate $\Delta$, holds for any two channels in the same ball, as required by Def.~\ref{def:uniformity}. $o(\epsilon;t)$ are bounded because $o$ is a bounded observable and they are analytic because they can be computed (up to exponentially small errors) by
performing a quantum mechanical simulation on a finite system with
size of order $t$ (due to the Lieb-Robinson bound), which is manifestly analytic in the matrix $\mathcal{E}_{\epsilon}$
and hence in $\epsilon$. Noting that the uniform limit of a family
of bounded analytic functions is analytic, we have that $\left\langle o\right\rangle _{\epsilon}=\lim_{t\rightarrow\infty}o(\epsilon;t)$
is analytic in $\epsilon$.
\end{proof}
Theorem~\ref{thm:isomorphism} showed that the steady state spaces are of the same dimension throughout the uniform region. But a stronger result is true. If, for example, the steady state $\rho$ of some channel $\mathcal{E}$ in the uniform region has long-range order, it follows that any other $\mathcal{E}'$ in the uniform region also has a steady state with long-range order. This is a consequence of fact that each steady state space can be obtained from the other to good approximation through a finite sequence of local channels. 
\begin{thm}
    Long-range order is robust throughout a uniform region\label{eq:LRO_theorem}
\end{thm}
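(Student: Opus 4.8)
The plan is to characterize long-range order (LRO) in a steady state $\rho$ by the persistence of a connected correlator: there exist unit-norm local observables $o_A,o_B$ supported near sites $A,B$ at separation $r=|A-B|$ such that $\liminf_{r\to\infty}|\langle o_A o_B\rangle_\rho-\langle o_A\rangle_\rho\langle o_B\rangle_\rho|\geq c$ for some $c>0$. The theorem then asserts that if one $\mathcal{E}\in D$ has a phase steady state with LRO, every $\mathcal{E}'\in D$ does. It suffices to prove this for $\mathcal{E},\mathcal{E}'$ lying in a common ball $B_i$ of the uniform cover; the extension to all of $D$ follows by chaining along a path and invoking the same infimum/contradiction argument used in the proof of Theorem~\ref{thm:isomorphism}, since $D$ is connected and covered by the $B_i$ on each of which the relevant statement is constant.

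First I would fix $\mathcal{E},\mathcal{E}'\in B_i$ and let $\rho\in S(\mathcal{E})$ have LRO. By uniformity there is a parent state $\rho'\in S(\mathcal{E}')$ with $\|\mathcal{E}^t\rho'-\rho\|\leq e^{-O(\Delta t)}$; the point of the ``backwards'' definition of uniformity is precisely that $\rho$ sits \emph{downstream} of $\rho'$ under finite-time local evolution. I would fix a single time $t_0$ large enough that $e^{-O(\Delta t_0)}<c/8$, which is legitimate because $\Delta$ does not scale with system size, so $t_0$ is independent of $r$. Passing to the Heisenberg picture (writing $\langle o|\mathcal{E}=\langle\mathcal{E}^\dagger o|$ for the dual channel on observables), one has $|\langle o_A o_B\rangle_\rho-\langle o_A o_B|\mathcal{E}^{t_0}|\rho'\rangle|<c/8$, and the evolved observables $O_A\equiv(\mathcal{E}^{t_0})^\dagger o_A$, $O_B\equiv(\mathcal{E}^{t_0})^\dagger o_B$ are, by the Lieb-Robinson bound Eq.~\eqref{eq:LR_bound}, localized up to exponential tails within balls of radius $vt_0$ about $A$ and $B$. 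Since $t_0$ is fixed, these are genuine bounded-support observables.

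The key step is a factorization estimate: for $r\gg vt_0$, $(\mathcal{E}^{t_0})^\dagger(o_A o_B)\approx O_A\,O_B$ up to an error exponentially small in $r-2vt_0$. Granting this, the single-operator identities $\langle o_A\rangle_\rho=\langle O_A\rangle_{\rho'}$ and $\langle o_B\rangle_\rho=\langle O_B\rangle_{\rho'}$ (which need no factorization, only $\langle o_A|\mathcal{E}^{t_0}=\langle O_A|$) combine to give $\langle o_A o_B\rangle_\rho^{c}\approx\langle O_A O_B\rangle_{\rho'}-\langle O_A\rangle_{\rho'}\langle O_B\rangle_{\rho'}=\langle O_A O_B\rangle^{c}_{\rho'}$. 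Holding $t_0$ fixed and sending $r\to\infty$, the connected correlator of the bounded-support observables $O_A,O_B$ in $\rho'$ is bounded below by, say, $c/2>0$, exhibiting LRO in the phase steady state $\rho'$ of $\mathcal{E}'$.

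The main obstacle is the factorization estimate, which is immediate for unitary dynamics (the Heisenberg map is a homomorphism) but fails as an identity for general channels, whose dual is only unital completely positive and not multiplicative. I would prove it by the standard cut-channel argument: let $\tilde{\mathcal{E}}$ be obtained from $\mathcal{E}$ by deleting every local gate (or Lindblad term) straddling the midpoint between $A$ and $B$, so that $\tilde{\mathcal{E}}$ factorizes as a tensor product across the cut and $(\tilde{\mathcal{E}}^{t_0})^\dagger(o_A o_B)=(\tilde{\mathcal{E}}^{t_0})^\dagger(o_A)\,(\tilde{\mathcal{E}}^{t_0})^\dagger(o_B)$ exactly. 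The Lieb-Robinson bound then controls $\|(\mathcal{E}^{t_0})^\dagger o-(\tilde{\mathcal{E}}^{t_0})^\dagger o\|$ for any $o$ supported at distance $\gtrsim r/2$ from the cut, yielding the error $e^{-O((r-2vt_0)/v)}$ and simultaneously confirming that $O_A,O_B$ barely differ from their cut counterparts. The only care needed is bookkeeping the order of limits---fix $t_0$ from the constant $c$, \emph{then} take $r\to\infty$---and checking that the locality hypothesis is strong enough to bound operator-norm differences rather than merely expectation-value differences; if one wants full rigor this may require upgrading Eq.~\eqref{eq:LR_bound} to an operator-norm quasi-locality statement, which holds under the same assumptions.
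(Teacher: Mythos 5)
Your proof is correct and follows essentially the same route as the paper's: use uniformity at a fixed time $t_0$ set by the LRO constant, then invoke Lieb-Robinson locality to replace the Heisenberg-evolved product observable by a product of two cone-localized, norm-contracted observables (your $O_A O_B$ are the paper's $\tilde{o}_x\tilde{o}_0$), and finally chain across neighbourhoods to extend the statement to all of $D$. The only difference is one of detail: the paper simply asserts the factorization step, whereas you correctly flag it as the nontrivial point (the dual of a channel is not multiplicative) and supply the standard cut-channel argument, with the appropriate caveat about needing an operator-norm quasi-locality bound.
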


\begin{proof}
To begin, we show that if $\mathcal{E}$ in a uniform bundle has a long-range ordered steady state $\rho$, then  any $\mathcal{E}'$ in the neighbourhood of $\mathcal{E}$ also has a long-range ordered steady state. By uniformity, there must exist a $\rho'=\mathcal{S}(\mathcal{E}')$ such that $\mathcal{E}^t \rho'=\rho + O(e^{-\Delta t})$.

By definition of long-range order, there must exist two unit norm
observables such that the connected correlation function in $\rho$
does not decay to zero at large distances. 
\begin{equation}
\lim_{|x|\rightarrow\infty}\sup\bigl|\bigl\langle o_{x}o_{0}|\rho\bigr\rangle^{\mathrm{c}}\bigr|>D>0 
\end{equation}

for some $O(1)$ constant $D$. Plugging in our definition of uniformity,
we have that
\begin{align}
\bigl\langle o_{x}o_{0}|\mathcal{E}^{t}|\rho'\bigr\rangle & =\bigl\langle o_{x}o_{0}|\rho\bigr\rangle+O(e^{-\Delta t})\label{eq:LRO1}\\
\bigl\langle o_{0,x}|\mathcal{E}^{t}|\rho'\bigr\rangle & =\bigl\langle o_{0,x}|\rho\bigr\rangle+O(e^{-\Delta t})\label{eq:LRO2}
\end{align}
We focus on the expression $\bigl\langle o_{x}o_{0}|\mathcal{E}^{t}$.
Provided $|x|>2vt$ where $v$ is the Lieb-Robinson velocity, we may
write this approximately as $\bigl\langle\tilde{o}_{x}\tilde{o}_{0}|$
where $\tilde{o}_{0,x}=\langle o_{0,x}|\mathcal{E}^t$ are operators based in cones of size $O(vt)$
around positions $0,x$ respectively with only exponentially decaying corrections
in $x$ which we can safely ignore. These operators have norm at most one by the contractive property of channels. Combining Eqs.~\eqref{eq:LRO1} and~\eqref{eq:LRO2} gives
\begin{equation}\label{eq:LRO3}
\bigl|\bigl\langle\tilde{o}_{x}\tilde{o}_{0}|\rho'\bigr\rangle^{\mathrm{c}}-\bigl\langle o_{x}o_{0}|\rho\bigr\rangle^{\mathrm{c}}\bigr|<Ce^{-\Delta t}  
\end{equation}
provided $|x|>2vt$. The constant on the RHS does not depend on $x$\footnote{This becomes clear using the more detailed description of uniformity discussed in App.~\ref{app:uniformity}.}. Taking $t=\Delta^{-1}\log(\frac{2}{CD})$ and the $\lim\sup$ as $x\rightarrow\infty$,
we find $\lim_{|x|\rightarrow\infty}\sup\bigl|\bigl\langle\tilde{o}_{x}\tilde{o}_{0}|\rho'\bigr\rangle^{\mathrm{c}}\bigr|>D/2>0$
which proves that $\rho'$ has long-range order.

To show that long-range order persists throughout the entire uniform
region, the above argument need only be repeated on the finite set
of neighbourhoods connecting any two channels in the uniform region. 
\end{proof}

Thus far, we see that uniform phase bundles have many of the properties we expect from a phase of matter. The number of steady states
is fixed throughout the phase. Local perturbations perturb locally, and  local correlation functions are smooth throughout a uniform steady state bundle. This leads us to propose that the uniformity condition is a sufficient condition for the stability of open phases of matter. 

\subsection{Uniformity and the spectrum}
Our definition of phase of matter does not make direct reference to the spectrum of the channels within the phase. Nevertheless, we can show that uniform steady-state bundles have a feature analogous to a spectral gap, in the form of
a condition on the resolvent. Consider the setup of Theorem~\ref{thm:Uniformity-implies-analyticity},
and consider (for simplicity) the following one-parameter family of
channels:
\begin{align}
\mathcal{E}_{\epsilon} & \equiv\mathcal{E}_{0}\prod_{x}(1+\epsilon v_{x}).\label{eq:deformed}
\end{align}
Provided that $\epsilon$ is sufficiently small such that $\mathcal{E}_0$ and $\mathcal{E}_{\epsilon}$
lie in the same uniform region, Theorem~\ref{thm:Uniformity-implies-analyticity}
implies $\braket{o}_{\epsilon}\equiv\lim_{t\rightarrow\infty}\braket{ o|\mathcal{E}_{\epsilon}^{t}|\omega_{0}}$
is an analytic function of $\epsilon$. Consider the Taylor expansion
of this expression in $\epsilon$. Analyticity implies that the coefficients
have a nonvanishing radius of convergence. That in turn implies that
\begin{equation*}
\lim_{t\rightarrow\infty}\biggl|\sum_{x_{1,\ldots,n}}\sum_{t_{1,\ldots,n}}^{t}\braket{ o|\mathcal{E}_0^{t_{n}}v_{x_{n}}\ldots\mathcal{E}_0^{t_{1}}v_{x_{1}}|\omega_{0}}\biggr|=e^{O(n)}    
\end{equation*}
grows no faster than exponentially in $n$ and does not diverge with system size. This expression can be
repackaged as 
\begin{align}\label{eq:spectral_uniform}
\braket{o|\left((\mathcal{I}-\mathcal{E}_{0})^{-1}\mathcal{V}\right)^{n}|\omega_{0}} =e^{O(n)}& &
\mathcal{V} \equiv\sum_{x}v_{x}, 
\end{align}
where it is understood that $(\mathcal{I}-\mathcal{E}_{0})^{-1}\equiv \sum^\infty_{k=0}\mathcal{E}^k_{0}$. Eq.~\eqref{eq:spectral_uniform} is certainly evocative of a gap condition, as it would follow if  $\mathcal{E}_{0}$  were self-adjoint and gapped away from eigenvalue
$1$.  However $\mathcal{E}_{0}$ is not generally self-adjoint. As a consequence, as we saw in Sec.~\ref{sec:ASEP},  the presence of a gap need not imply a useful bound on resolvents. Indeed, a similar argument to one given in Sec.~\ref{sec:ASEP} shows that Eq.~\eqref{eq:spectral_uniform}  diverges with system size even for $n=1$ for ASEP on an open chain, even though ASEP is gapped.

In summary, we have shown is that if $\mathcal{E}_{0}$ is stable against all local perturbations (i.e., it sits inside a uniform steady state bundle that has finite volume in the space of all local channels), then our generalized gap condition Eq.~\eqref{eq:spectral_uniform} must hold for arbitrary local Lindbladians $\mathcal{V}$. Using ASEP as an example, we noted that the generalised gap condition does not follow from the channel $\mathcal{E}$ being gapped. 

Is there any connection then between the spectrum of $\mathcal{E}$ and our definition of uniformity? We conjecture that the answer is yes: For models that are uniform, the gap sets the relaxation time-scale occurring in the uniformity definition. Here we again draw on intuition gleaned from ASEP in Sec.~\ref{sec:ASEP} as to the physical interpretations of gaps in open systems.  There we observed a distinction between two types of initial states. Firstly, states that are initially localized far away from the steady state and take a long time to relax as they need to make their way through the system first. Secondly, states that are already close to the edge, and therefore have a significant overlap with the steady-state to begin with. These latter states relax exponentially quickly at a rate dictated by the gap. Thus, the gap sets the relaxation time-scale for states that are already near the steady state.

We can try to generalize this picture to the many-body systems we consider here: given a channel $\mathcal{E}$, we can distinguish between states that relax rapidly under $\mathcal{E}$ and those that do not. For the former, it is still reasonable to expect, that their relaxation rate will be set by the spectral gap (indeed, were $\mathcal{E}$ to be gapless, we would not expect \emph{exponentially} rapid relaxation to occur). Roughly speaking, uniformity requires that the steady states of nearby channels within the phase belong to the first category. Thus we expect that the time scale $\Delta$ appearing in Def.~\ref{def:uniformity} will in fact coincide with the gap, which we formulate as
\begin{conjecture}\label{conj:uniformity=gap}
    For two channels $\mathcal{E}, \mathcal{E}'$ within the same uniform steady state bundle, the exponential decay rate of $||\mathcal{E}^t \rho' - \rho||$ equals the spectral gap of $\mathcal{E}$. 
\end{conjecture}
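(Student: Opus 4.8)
The plan is to translate Conjecture~\ref{conj:uniformity=gap} into a statement about the location of the rightmost singularity of the projected resolvent $R(z)$ of Eq.~\eqref{eq:resolvent}, and then to show that, for the special state $\rho'$ furnished by uniformity, this singularity sits exactly at the spectral gap. First I would fix the spectral data of $\mathcal{E}$ at finite system size: let $P$ be the projector onto the steady-state manifold (eigenvalue $1$), let $Q=\mathcal{I}-P$ as in Eq.~\eqref{eq:resolvent}, and let $\Delta$ be minus the logarithm of the modulus of the largest subleading eigenvalue. The key preliminary observation is that the pairing in Def.~\ref{def:uniformity} forces $P\rho'=\rho$: since $\mathcal{E}^{t}\rho'\to\rho$, the steady-state component of $\rho'$ must be exactly $\rho$, so that the object of interest is $\rho'-\rho=Q\rho'$, which lives entirely in the decaying sector. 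Using the Laplace-transform identity of Sec.~\ref{sec:ASEP}, namely $\braket{o|R(z)|\rho'}=\int_{0}^{\infty}dt\,e^{-zt}\braket{o|\mathcal{E}^{t}|\rho'-\rho}$ (and its discrete-time analog), the asymptotic decay rate of $\braket{o|\mathcal{E}^{t}|\rho'-\rho}$ equals minus the real part of the rightmost singularity in $z$ of $\braket{o|R(z)|\rho'}$.

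With this reduction in hand, the easy half is the statement that the decay rate is at least $\Delta$. At fixed $L$, expanding $Q\rho'$ in the generalized eigenbasis gives $\braket{o|\mathcal{E}^{t}|\rho'-\rho}=\sum_{i}c_{i}(L)\,\mathrm{poly}(t)\,\lambda_{i}^{t}$ with $|\lambda_{i}|\le e^{-\Delta}$, so the norm is bounded by $C(L)\,\mathrm{poly}(t)\,e^{-\Delta t}$ and the fixed-$L$ rate is at least $\Delta$; it equals $\Delta$ precisely when $Q\rho'$ has nonzero overlap with the slowest eigenmode, which I would argue holds generically (the overlap vanishes only on a non-generic set of perturbation directions). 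The crucial weakness of this bound is that the prefactor $C(L)$ collects the expansion coefficients in a generally non-orthogonal eigenbasis and may grow with $L$ -- this is exactly the pseudospectral effect that, for ``far'' states in the ASEP of Sec.~\ref{sec:ASEP}, produces coefficients $\sim e^{L/\xi}$ and delays the onset of gap-controlled decay to times of order $L$.

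The hard half of the conjecture is therefore the \emph{uniform} statement that, for the difference of nearby steady states, this prefactor is bounded in $L$, so that the fixed-$L$ asymptotic rate is also the physically relevant thermodynamic-limit rate. This is where uniformity must enter, through the resolvent bound Eq.~\eqref{eq:spectral_uniform}: uniformity (via Theorem~\ref{thm:Uniformity-implies-analyticity}) controls the resolvent-weighted quantities $\braket{o|((\mathcal{I}-\mathcal{E}_{0})^{-1}\mathcal{V})^{n}|\omega_{0}}$ by $e^{O(n)}$ uniformly in $L$, which is precisely the assertion that no ``effective'' singularities of $R(z)$ are pushed to the right of $-\Delta$ in the thermodynamic limit when the resolvent acts on a nearby steady state. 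The goal of this step is to upgrade Eq.~\eqref{eq:spectral_uniform} to a uniform-in-$L$ bound on $\Vert R(z)Q\rho'\Vert$ throughout a strip $\mathrm{Re}(z)>-\Delta+\delta$, which would pin the rightmost singularity of $\braket{o|R(z)|\rho'}$ at $-\Delta$ and bound its residue, yielding relaxation at rate $\Delta$ with an $L$-independent prefactor.

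I expect this last conversion to be the main obstacle, and it is presumably why the statement is left as a conjecture. Eq.~\eqref{eq:spectral_uniform} controls the resolvent only when it is sandwiched against the specific nearby steady state, and essentially only at $z$ approaching the steady-state eigenvalue, whereas pinning the decay rate demands uniform control in an entire strip out to the gap; bridging this requires taming the non-orthogonality of the eigenbasis (the diverging pseudospectral prefactors of Sec.~\ref{sec:ASEP}) uniformly in $L$ for $Q\rho'$, together with ruling out the fine-tuned possibility that $Q\rho'$ entirely misses the slowest mode. A natural route would be to combine LPPL (Theorem~\ref{thm:LPPL}), which guarantees that $\rho'-\rho$ is exponentially localized near the perturbation, with a Lieb-Robinson argument showing that such a localized perturbation of a steady state has $O(1)$ overlap with the slow modes; making this quantitative for defective, non-Hermitian $\mathcal{E}$ is the crux.
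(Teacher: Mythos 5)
First, a point of comparison: the paper does \emph{not} prove this statement. It is deliberately left as Conjecture~\ref{conj:uniformity=gap}, and the only support offered is numerical (Sec.~\ref{sec:numerics}): the time-dependent decay rate $\Gamma(t)$ of Eq.~\eqref{eq:DefDecayRate} in noisy Stavskaya dynamics is compared against the gap extrapolated from exact diagonalization, with the data consistent with $\lim_{t\to\infty}\Gamma(t)=\Delta$. So there is no paper proof for your attempt to match; the only question is whether your proposal settles the conjecture, and—as you yourself acknowledge—it does not.

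Two concrete issues. The smaller one is your preliminary claim that uniformity forces $P\rho'=\rho$ with $P$ the exact eigenvalue-$1$ projector. This fails in precisely the cases of interest: in the bistable Stavskaya phase at finite $L$, the mostly-down steady state $\rho_\epsilon$ is \emph{not} an exact eigenvector (its lifetime is $O(e^{L})$, i.e., its eigenvalue is only exponentially close to $1$), and the exact fixed-point space does not contain it. This is exactly why the paper defines the gap by absorbing all eigenvalues exponentially close to $1$ into the steady-state manifold; your $P$ must be the quasi-degenerate spectral projector onto that manifold, and then $P\rho'=\rho$ holds only up to corrections that are exponentially small in $L$ and must be tracked through the Laplace-transform reduction, since they set the scale against which "asymptotic decay at rate $\Delta$" is measured.

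The larger, genuine gap is the one you name but do not close: upgrading Eq.~\eqref{eq:spectral_uniform} to a uniform-in-$L$ bound on $\Vert R(z)Q\rho'\Vert$ in a strip $\mathrm{Re}(z)>-\Delta+\delta$. Eq.~\eqref{eq:spectral_uniform} controls only $\braket{o|\left((\mathcal{I}-\mathcal{E}_{0})^{-1}\mathcal{V}\right)^{n}|\omega_{0}}$, i.e., the resolvent evaluated at the fixed-point eigenvalue, sandwiched between local observables and the specific nearby steady state, with local perturbation insertions; nothing in the paper or in your sketch extends this to the whole strip, and this extension is exactly the non-Hermitian difficulty that the ASEP discussion of Sec.~\ref{sec:ASEP} is designed to expose (fixed-$L$ asymptotics are gap-controlled, while the thermodynamic-limit rate is controlled by pseudospectral prefactors). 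Your fallback—combining LPPL (Theorem~\ref{thm:LPPL}) with Lieb-Robinson to argue $Q\rho'$ has $O(1)$ overlap with the slowest mode—is also only an outline: for nearly-defective $\mathcal{E}$, "overlap with the slowest mode" is not a basis-independent quantity absent a uniform bound on the condition number of the generalized eigenbasis, which is again the same pseudospectral problem in different clothing. So both halves of the equality (rate $\geq$ gap uniformly in $L$, and rate $\leq$ gap via non-vanishing overlap) remain open; your proposal is a sensible research program, consistent with the paper's own reasons for leaving this as a conjecture, but it is not a proof.
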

Here, we define gap analogously to the Hamiltonian case, i.e. we take any states with eigenvalues exponentially close to $1$ to be part of the manifold of steady states and consider the next leading eigenvalue after these. Our conjecture states that this eigenvalue approaches a value $\lambda_* < 1$ in the thermodynamic limit, and that $-\log{\lambda_*} ||\mathcal{E}^t \rho' - \rho|| \propto e^{-\log(\lambda_*) t}$ asymptotically. We provide numerical evidence for this conjecture in Sec.~\ref{sec:numerics} where we consider a one-dimensional probabilistic cellular automaton model for which uniformity can be established rigorously.

\subsection{ Uniformity in a volume versus uniformity along a submanifold }\label{sec:compare_gap_uniformity}

\begin{figure}
\includegraphics[width=0.55\columnwidth]{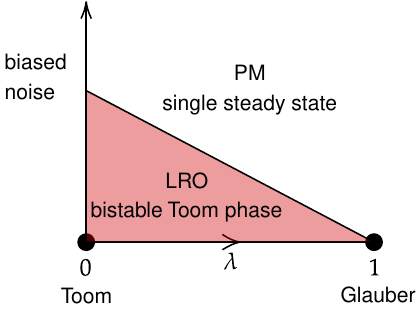}\caption{  A schematic phase diagram interpolating between 2D Toom ($\lambda=0$)
and (low-temperature, zero field) Glauber dynamics ($\lambda=1$).
Long-range order is unstable to biased noise for Glauber
dynamics, while Toom is stable. This implies that Toom/Glauber cannot
lie in a common uniform \emph{volume} of parameter space.\label{fig:toomglauber}}
\end{figure}

Def.~\ref{def:uniformity} implicitly involves a reference to some background manifold of channels in which the balls $B_i$ are to be drawn. This might be the manifold of all local channels or a submanifold restricted by some additional constraints, e.g., symmetries. We will refer to these two cases as \emph{uniformity in a volume}  and \emph{uniformity on a submanifold} respectively. An extreme limit of the latter be to look at only a only a one-dimensional manifold, i.e. a path, analogous to ``gapped paths'' of Hamiltonians. In fact, this is the version of uniformity we prove in a specific example (Stavskaya's cellular automaton) model below in Sec.~\ref{sec:stavskaya}.

While uniformity along a path is sufficient to prove the stability of steady state correlations along that path, it does not imply the stability of dynamical correlations, nor the stability of steady states to more general perturbations which take one away from the path. For example, consider a local evolution linearly interpolating
between the 2D Toom cellular automaton\footnote{For a definition and properties of this model, see e.g. Refs. \cite{GrinsteinToom} and \cite{ponselet_thesis}.} and zero-temperature (and zero
field) Glauber dynamics (Fig.~\ref{fig:toomglauber}); this could be achieved e.g. by randomly choosing between the two update rules at every site with some probabilities $\lambda$ and $1-\lambda$. 

Both of these models preserve fully polarized states, which will therefore be steady states everywhere along the path interpolating between them. Thus there is uniformity along the path connecting these two models, and this family of models exhibits long-range ordered steady-states. More generally, we expect Toom and Glauber to lie in the same uniform volume when restricting to the space of Ising symmetric channels. On the other hand, Toom is known to be inside a stable phase even if we break the symmetry restriction (e.g. by adding noise that is biased towards one of the two spin directions), unlike Glauber which only has a unique steady state once the symmetry is broken. Thus, if we work in the larger manifold of all local channels, the two are not in the same phase (and indeed, Glauber is not inside any stable phase at all). 

This difference is closely related to the different \emph{dynamical} properties of the two models. This can be quantified by considering the time needed to "erode" an island of flipped spins on top of a fully polarized state in the two cases. Glauber dynamics erodes such islands on a time scale quadratic in their linear size\footnote{This is due to the fact that in Glauber, the dynamics of domain walls is \emph{curvature-driven}, which is an inevitable consequence of the Z2 symmetry combined with detailed balance.}, while in Toom the erosion time is linear \cite{GrinsteinToom}. This difference explains why the latter can counter the effects of biased noise (which deterministically grows islands on top of the metastable state), while the former cannot. This also leads to the conjectured phase diagram\footnote{For $\lambda<1$, domains are still eroded on a linear time-scale, which implies stability to sufficiently weak noise.} in Fig.~\ref{fig:toomglauber} which shows Glauber as lying at the edge of a stable Toom phase. Thus, uniformity and the stability of steady-state correlations along a submanifold does not imply the stability of dynamics or the stability of correlations to more general perturbations which take one away from that submanifold.

A similar situation arises in Hamiltonian systems. For example~\cite{freedman2005line}
identify a smooth family of Hamiltonians, all of which have the \emph{same}
ground state. However, some of the Hamiltonians are gapped and stable,
while other are unstable and gapless (with associated slow dynamical
correlations). Thus,
knowing the ground state of a Hamiltonians does not determine whether
the underlying Hamiltonian is stable to perturbations, and
does not allow us to fully characterise temporal correlations or the nature of low-lying excitations. 

A more complete sufficient condition for stability of an open phase of matter would ideally imply the stability of both dynamical and steady-state correlations. It would also ideally be local in parameter space (like having a gap). Indeed it could be precisely a gap in some cases e.g., it might be that the spectral gap of the channel closes as one interpolates from Toom to Glauber, and that this heralds the sudden change in dynamical correlations. It is plausible that models which are in the same uniform \emph{volume} of parameter space (i.e., an absolutely stable phase) have qualitatively similar dynamics as well. in Sec.~\ref{sec:erosion} we conjecture a sufficient local condition for a channel to be within a such a phase, which combines dynamical properties (a generalization of the aforementioned 'fast erosion' property of Toom) with spectral features (a bound on the number of distinct steady states)

\section{Case study with uniformity: Stavskaya's model}\label{sec:stavskaya}

In this section we consider a one-dimensional cellular automaton with a nontrivial phase that is known to be stable under a certain class of perturbations. We prove that this family of models satisfies the uniformity criterion, which implies that the corresponding steady states have qualitatively the same correlations. The logic of our argument is fairly general, and we expect that similar proofs can be constructed in other perturbed automata. Having proved the uniformity criterion, we numerically explore the timescale that appears in this criterion and its relation to the inverse spectral gap. We find that these approximately track each other but do not coincide.

\subsection{Stavskaya's model}

Stavskaya's model is a 1D cellular automaton. Its configuration space is that of 1D binary strings $b_{x}=0,1$ for $x\in\{1,\ldots,L\}$. If $b_{x}=1$ for some $x$ we say there is an error at $x$. The update rule, implemented by dynamical map $K$, is
\begin{equation}
    b_x \rightarrow \min(b_x,b_{x+1})
\end{equation}
This rule has a simple graphical summary: Islands of errors, i.e., contiguous groups of sites with $b_x=1$, are eroded with velocity $1$ from the right. 
\begin{equation}
   \includegraphics[scale=1.5]{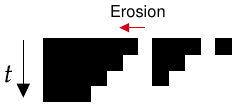}
\end{equation}
It is straightforward to verify that Stavskaya has precisely two steady states with periodic boundary conditions. 1) The  error-free state: $b_{x}=0$ $\forall x$, or $\vec{0}$; and 2) The error-full state $b_{x}=1$ $\forall x$ or $\vec{1}$. Therefore the $\lambda=1$ eigenvalue is doubly degenerate.

These two steady states remain stable in the presence of weak maximally biased noise, i.e., perturbations that flip $0 \to 1$ but not $1 \to 0$. This gives a perturbed Markov map
\begin{equation}
    K_\epsilon \equiv K \prod_{x}(1+\epsilon v_x)
\end{equation}
where $v_x=\sigma_x^+-\sigma_x^-\sigma_x^+$ and $\epsilon$ parameterises the noise strength. There is a transition occurring at $\epsilon = \epsilon_c \approx 0.3$ \cite{ponselet_thesis}. For $\epsilon < \epsilon_c$, there are two steady states in the thermodynamic limit: the state $\vec{1}$ which is an exact steady state of $K_\epsilon$, and another state which evolves continuously from $\vec{0}$ (i.e., it has $\braket{b_x} \ll 1$ when $\epsilon \ll 1$). In a finite system of size $L$, the latter is not a true steady state, as rare fluctuations will turn it into $\vec{1}$ on a timescale that is $O(e^L)$. At $\epsilon=\epsilon_c$ the two steady states coalesce and the system undergoes a phase transition of the directed percolation type, to a trivial high noise phase where only the $\vec{1}$ steady state remains. The existence of the bistable phase at small $\epsilon$ for maximally up-biased noise was proved some time ago (see \cite{ponselet_thesis} for a review). Here we will review one of those arguments, as it gives valuable intuition for when and why CA models are stable, which we will be able to relate to the notion of uniformity developed in the previous section. 

\subsection{Contour argument for stability of Stavskaya}\label{sec:contour}

\begin{figure}
    \centering
    \includegraphics[width=0.95\columnwidth]{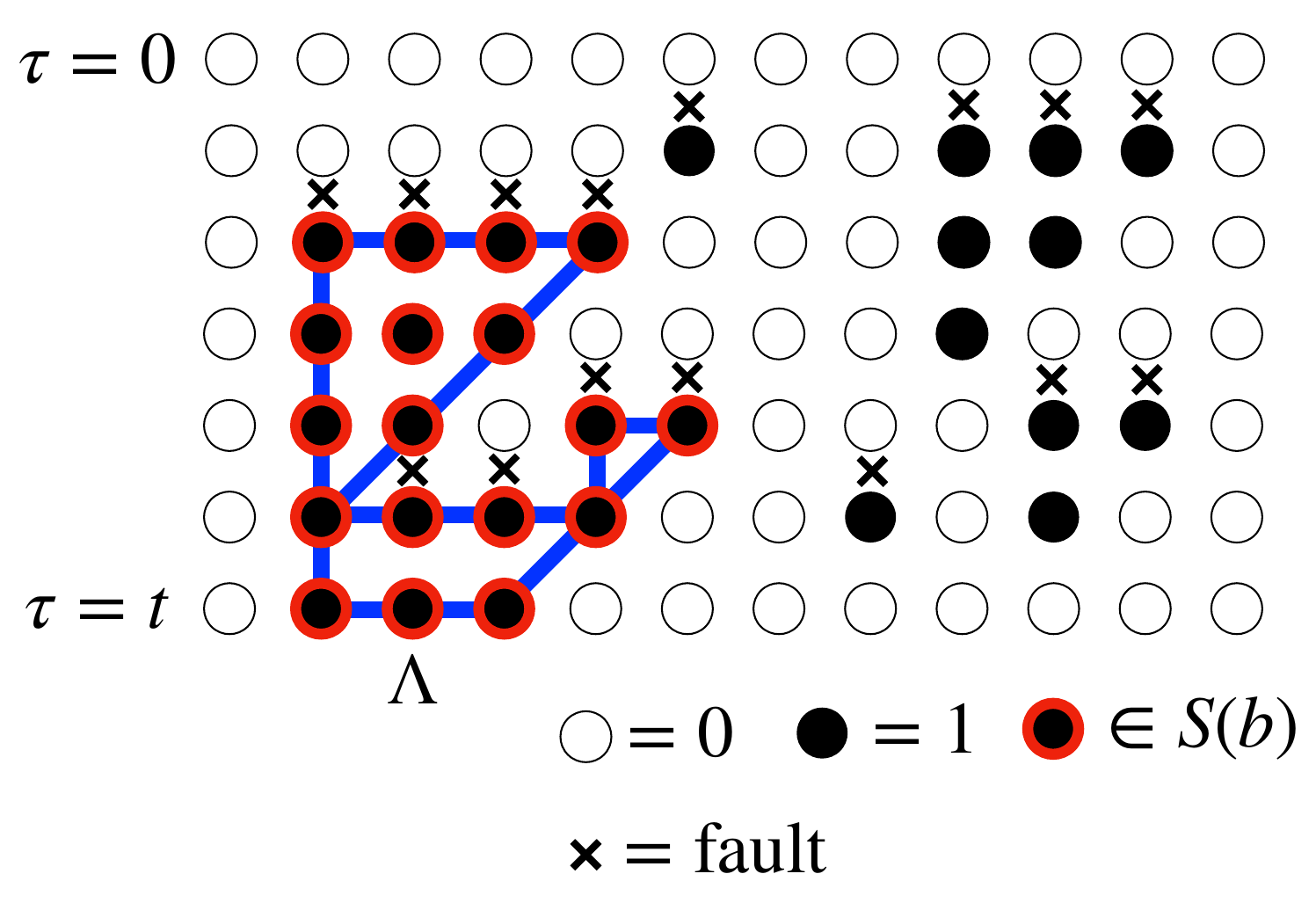}
    \caption{Stability of Stavskaya's model. The figure shows an example of a space-time configuration $\mathbf{b}$ terminating on set $\Lambda$. Errors (sites with bit value $1$) are denoted by black dotes. The set $S(\mathbf{b})$ is denoted by those black dots that have a read circle around them. The "x" symbols denote the occurrence of a fault i.e., a deviation from deterministic Stavskaya. The blue curve $\mathcal{C}$ enclosing $S(\mathbf{b})$ is the "contour". Note that horizontal edges of this contour are always associated with faults. Note there are also errors and faults that occur well away from $S(\mathbf{b})$; these are irrelevant for explaining the presence of errors in $\Lambda$ at time $t$.}
    \label{fig:contour_stavskaya}
\end{figure}

Here we summarise (and slightly extend) the ``contour argument''
for the perturbed Stavskaya model from Ref. \cite{ponselet_thesis}. This proof is a kind of spacetime version of Peierls' argument for the stability of the low temperature phase in the two-dimensional Ising model. To wit, we consider a scenario where the system is initialized in the error-free $\vec{0}$ state and evolved with Stavskaya dynamics subject to weak up-biased noise, below the threshold $\epsilon<\epsilon_c$. We the want to show that the probability of finding an error at a later time remains bounded by a time-independent quantity that is small when $\epsilon$ is small. In fact, we will show a stronger statement: the probability of having errors on sites precisely on set $\Lambda$ is bounded as
\begin{equation}
\mathbb{P}(b_{t}=\vec{\Lambda})=\braket{\vec{\Lambda}|K_{\epsilon}^{t}|\vec{0}}\leq C\epsilon^{|\Lambda|},\label{eq:ponselet_bound}
\end{equation}
where $|\Lambda|$ is the number of sites in set $\Lambda$, and $\bra{\vec{\Lambda}}$ denotes the spin configuration with errors precisely on $\Lambda$. It is important
that we work here in the thermodynamic limit, where system size has
been taken to infinity first. We first review the argument from Ref. \cite{ponselet_thesis} that applies in the case when $\Lambda$
is a contiguous set, and then we generalize it slightly to show that the result
also holds for non-contiguous sets. Since the bound in Eq.~\eqref{eq:ponselet_bound} decays exponentially with $|\Lambda|$, the overall probability of any particular site having an error is also bounded by some $O(\epsilon)$ quantity at all times. 

The goal of this section is two-fold. First, the proof provides some intuition on why Stavskaya's model is robust (when restricted to maximally up-biased noise), which will be useful when we develop some conditions for uniformity and compare it to other cellular automata where robustness fails in Sec.~\ref{sec:erosion}. More relevant to the purposes of this section, we will be able to use Eq.~\eqref{eq:ponselet_bound}, in its generalized form that applies to non-contiguous sets, to argue that the set of channels $K_\epsilon$ with $\epsilon < 6^{-3}$ with their pair of steady states forms a uniform steady state bundle. 

To prove Eq.~\eqref{eq:ponselet_bound}, we first Trotterise the expression $\braket{\vec{\Lambda}|K_\epsilon^t|\vec{0}}$ by inserting
a complete set of states at each integer time-step. Eq. \eqref{eq:ponselet_bound}
may then be expressed as a sum, $\mathbb{P}(b_{t}=\vec{\Lambda})=\sum_{\mathbf{b}}\mathbb{P}(\mathbf{b})$, over histories of the spacetime configurations
$\mathbf{b}=\{b_{\tau}\}_{\tau\in\{0,\ldots t\}}$,
with the boundary conditions $b_{t}=\vec{\Lambda}$ and $b_{0}=\vec{0}$,
where $\mathbb{P}(\mathbf{b})$ is the probability of space-time history $\mathbf{b}$. 

To each space-time history $\mathbf{b}$ we will assign a set of space-time locations $S(\mathbf{b})$. The
contours in the contour argument will (roughly) be the boundary of
$S(\mathbf{b})$. The sites making up a time-slice of $S(\mathbf{b})$ at time $\tau$ are denoted $S(\mathbf{b},\tau)$. We take $S(\mathbf{b},t)=\{(x,t):x\in\Lambda\}$ and we define the remaining slices inductively, moving backwards in time, as follows. At each step, the sites site $x\in S(\mathbf{b},\tau+1)$
will have $b_{x,\tau+1}=1$. Under unperturbed Stavskaya
dynamics, this site has two parents at the previous time slice that also needed to host errors in order, which are $(x,\tau)$ and $(x+1,\tau)$. If both of them indeed have errors then add them to $S(\mathbf{b},\tau)$ (see Eq.~\eqref{eq:inductive_cases}(a)). If, on the other hand, one of $b_{x,\tau},b_{x+1,\tau}\neq1$ (e.g., (see Eq.~\eqref{eq:inductive_cases}(b)).
then the error at $x,\tau+1$ must have been due to biased-up noise
acting in the $\tau\rightarrow\tau+1$ timestep at site $x$; we call
such an even a ``fault''. In that case do nothing, and proceed to
the next site in $S(\mathbf{b},\tau+1)$ and repeat until no sites remain. 
\begin{equation}\label{eq:inductive_cases}
\includegraphics[width=0.9\columnwidth]{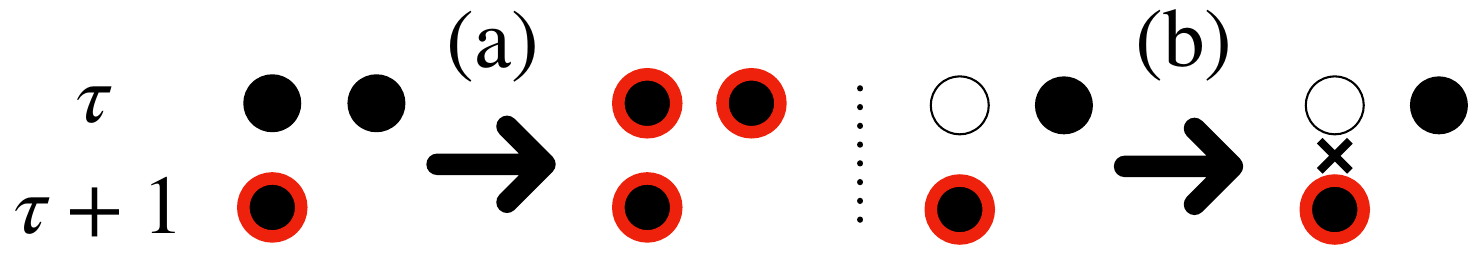}
\end{equation}
It is easy to verify that $S(\mathbf{b})$ is a connected spacetime set (the
sites added to $S$ during the inductive procedure always neighbour
some other site in $S$). A straightforward, but tedious calculation also shows that each set $S(\mathbf{b})$
can be unambiguously identified with a bounding contour $\mathcal{C}$
comprised of vertical (v), diagonal (d) $(1,-1)$, and horizontal
(h) edges only (see Fig.~\ref{fig:contour_stavskaya}). These edges never intersect a noise event. Moreover,
horizontal edges are associated with faults, e.g., a vertex is connected
to (one or two) horizontal edges if and only if it was caused directly by a
fault.
Therefore, every spacetime path can be associated with a set $S(\mathbf{b})$
and its corresponding contour $\mathcal{C}=\partial S(\mathbf{b})$, although
the mapping $S$ is not injective. To see that, consider the history
in Fig.~. Note for example that the errors in the spacetime history
that are not in $S(\mathbf{b})$ can be removed to form a $\mathbf{b}'$ with $S(\mathbf{b}')=S(\mathbf{b})$. 

Our contour has $n_{h}$ horizontal edges, $n_{v}$ vertical edges,
and $n_{d}$ diagonal edges. The fact the contour starts at the left
end-point of $\Lambda$ and ends at the right endpoint of $\Lambda$
places constraints on these edge numbers: we must have $n_{h}-n_{d}=|\Lambda|$
and therefore also $n_{h}\geq|\Lambda|$. The fact the contour starts
and ends at $\tau=t$ implies $n_{v}=n_{d}$. Therefore the total
number of edges is 
\begin{align}
n & =2n_{d}+n_{h}=3n_{h}-2|\Lambda|,\label{eq:nhnrelation}
\end{align}
and from $n_{h}\geq|\Lambda|$ we have that $n\geq|\Lambda|$. We can now replace the sum over space-time histories with a sum over contours, using the identity
$1=\sum_{\mathcal{C}}\delta(\mathcal{C}=\partial S(\mathbf{b}))$, to get
\begin{align*}
\braket{\vec{\Lambda}|K_{\epsilon}^{t}|\vec{0}} &= \sum_{\mathbf{b}} \mathbb{P}(\mathbf{b})
=\sum_{\mathcal{C}}\sum_{\mathbf{b}}\mathbb{P}(\mathbf{b})\delta(\mathcal{C}=\partial S(\mathbf{b})) \\ &\equiv \sum_{\mathcal{C}}\mathbb{P}(\mathcal{C}).
\end{align*}
Thus we find that the probability of having all up spins on set $\Lambda$
at time $t$ is the probability of having spacetime history with a
contour $\mathcal{C}$ that agrees with $\Lambda$ at time $t$.
In fact, we can readily bound $\mathbb{P}(\mathcal{C})$. To have
a contour $\mathcal{C}$, we must have a noise event associated to
each h-type edge. For the local noise we consider, such events are
exponentially suppressed in the number of faults $\mathbb{P}(\mathcal{C})\leq\epsilon^{n_{h}}=\epsilon^{(n+2|\Lambda|)/3}$\footnote{Note that this upper bound is rarely saturated, because obtaining a contour
$\mathcal{C}$ from $\mathbf{b}$ constrains the space-time configuration of
$\mathbf{b}$ outside of the contour $\mathcal{C}$, which will further suppress
the probability.} We arrange the sum over contours according to length
to obtain
\begin{align*}
\braket{\vec{\Lambda}| K_{\epsilon}^{t}|\vec{0}} &= \sum_{n\geq|\Lambda|}\sum_{\mathcal{C}:\mathrm{len}(\mathcal{C})=n}\mathbb{P}(\mathcal{C}) \\ &\leq\sum_{n\geq|\Lambda|}\epsilon^{(n+2|\Lambda|)/3}\sum_{\mathcal{C}:\mathrm{len}(\mathcal{C})=n} \\
&\leq\sum_{n\geq|\Lambda|}\epsilon^{(n+2|\Lambda|)/3}3^{n},
\end{align*}
where in the last step we used that the total number of contours of length
$n$ is upper-bounded by $3^{n}$ ($n$ edges, 3 possible
edge types). The RHS converges provided $\epsilon<3^{-3}$, in which case we find
\begin{equation}
    \braket{\vec{\Lambda}| K_{\epsilon}^{t}|\vec{0}}\leq\frac{(3\epsilon)^{|\Lambda|}}{1-3\epsilon^{1/3}},
\end{equation}
which shows an exponential suppression in $\Lambda$ as claimed. 

In fact, we now argue that this bound also holds when $\Lambda$ is a disjoint
union of intervals $\Lambda=\cup_{i}\Lambda_{i}$. For any space-time
history consistent with such a $\Lambda$, we can form $S(\mathbf{b})$ inductively
as before. The resulting set is a disjoint union of sets seeded by
each of the intervals $\Lambda_{i}$, namely $S(\mathbf{b})=\cup_{i}S_{i}(\mathbf{b})$.
This latter fact can be argued by contradiction. If there is
a point $x\in S_{i}(\mathbf{b})\cap S_{j}(\mathbf{b})$ for some $i \neq j$, then there is a path $p_{i}$
of errors in spacetime connecting $\Lambda_{i}$ to $x$ via a set of
vertical and diagonal edges, and similarly a path $p_{j}$. As $\Lambda_{i}\cap\Lambda_{j}=\emptyset$
(and wlog $\Lambda_{i}$ is to the left of $\Lambda_{j}$), there
must by some point between the two sets ( $\Lambda_{i}<y<\Lambda_{j}$
) such that $y\notin\Lambda$. Now, as $b_{y,t}=0$, one of $b_{y,t-1},b_{y+1,t-1}=0$.
Continuing back in time inductively, we can form a path $q$ of vertices
which are error free and which connects $(y,t)$ to some point in
the $\tau=0$ time slice. This path is comprised of v and d type edges only. We arrive
at a contradiction because $q$ must intersect $p_{i}\cup p_{j}$
at some point. Yet each of the vertices of $q$ are error free, while
those of $p_{i,j}$ all have errors. 

Since $S(\mathbf{b})=\cup_{i}S_{i}(\mathbf{b})$ is a disjoint union, each
of the sets $\cup_{i}S_{i}(\mathbf{b})$ has a contour $\mathcal{C}_{i}$ associated to it in the same way as before. The only complication
is that there are some correlations between the contours as they are
not allowed to intersect. We can nevertheless proceed to bound the error probability
as before:
\begin{align*}
\braket{\vec{\Lambda}| K_{\epsilon}^{t}|\vec{0}} &=\sum_{\mathbf{b}}\mathbb{P}(\mathbf{b}) =\sum_{\{\mathcal{C}_{i}\}}\sum_{\mathbf{b}}\mathbb{P}(\mathbf{b})\delta(\mathcal{C}_{i}=\partial S_{i}(\mathbf{b})) \\ &=\sum_{\{\mathcal{C}_i\}}\mathbb{P}(\{\mathcal{C}_{i}\}).
\end{align*}
We must now bound the probability of finding a particular collection of
contours $\{\mathcal{C}_{i}\}$. Each of these contours has $n_{h}$
horizontal edges, and so we must have at least $\sum_{i}n_{h,i}$
faults to achieve set $\{\mathcal{C}_{i}\}$. Therefore
\begin{equation*}
\braket{\vec{\Lambda}| K_{\epsilon}^{t}|\vec{0}}\leq\sum_{\{\mathcal{C}_{i}\}}\epsilon^{\sum_{i}n_{h,i}}.    
\end{equation*}
The sum has the implicit constraint that contours cannot overlap. However,
we can relax this constraint to obtain a further upper bound
\begin{equation*}
\braket{\vec{\Lambda}| K_{\epsilon}^{t}|\vec{0}}\leq\prod_{i}\sum_{\mathcal{C}_{i}}\epsilon^{n_{h,i}}.
\end{equation*}
We can now bound each of the terms in the product in the usual way
to obtain
\begin{equation*}
\braket{\vec{\Lambda}| K_{\epsilon}^{t}|\vec{0}}\leq\left(\frac{3\epsilon}{1-3\epsilon^{1/3}}\right)^{|\Lambda|}.    
\end{equation*}
(At an intermediate step we used the fact that the number of connected
components of $\Lambda$ is at most $\left|\Lambda\right|$). For
sufficiently small $\epsilon<6{}^{-3}$ we thus get
\begin{equation*}
\langle\vec{\Lambda}| K_{\epsilon}^{t}|\vec{0}\rangle\leq(6\epsilon)^{|\Lambda|}.    
\end{equation*}

\subsection{Uniformity of Stavskaya with up-biased noise}
\label{sec:stavskayauniform}
We can adapt the argument above to establish
a limited notion of uniformity amongst the biased-up perturbed Stavskaya
models, i.e., $K_{\epsilon}$ $\forall\epsilon<\epsilon_{c}$. We wish
to establish that nearby steady states in the ensemble above relax rapidly to one another. As all states in the above family have $\vec{1}$ as a steady state, we just need to establish that the ``mostly-down''
steady states rapidly relax to one another. In fact, we can establish
this rigorously drawing on results in \cite{ponselet_thesis,de_maere_ponselet},
particularly Theorem 11 of the former.

There it is shown that there exists an $\epsilon_{*}$ such that for
all $\epsilon<\epsilon_{*}<1$ the following holds. Suppose that we have a state (probability
distribution) $\omega$ on the configurations $\mathbf{b}$ which obeys $\omega(\mathbf{b} = \vec{\Lambda}) = \braket{\vec{\lambda}|\omega}\leq \kappa(\epsilon')^{|\Lambda|}$
for all regions $\Lambda$ and for some fixed $\kappa>0$ and $0\leq\epsilon'<\epsilon_{*}$
independent of $\Lambda$. Then there are constants $C$ and $\gamma > 0$
(independent of $\epsilon,\epsilon'$) such that
\begin{equation}
\bigl|\braket{o|K_{\epsilon}^{t}|\omega}-\braket{o|\rho_{\epsilon}}\bigr|<C || o||_{\infty} d_0 e^{-\gamma t},
\end{equation}
for all observables $o$. Here $\rho_{\epsilon}$ denotes the mostly
down steady state of $K_{\epsilon}$ and $d_{o}$ is the number of sites on which $o$ is supported.
This result applies not only to Stavskaya, but to an entire class of deterministic CA that satisfy the so-called monotonicity and erosion properties, which includes Toom's rule among others (see Ref. \cite{ponselet_thesis} for details). Intuitively, it says that if the initial state is sufficiently close to $\vec{0}$, then it relaxes exponentially quickly to the appropriate steady state of $K_\epsilon$. 

By the generalized version of the contour argument in the previous subsection, as long as $\epsilon < \epsilon_{**} \equiv 6^{-3}$, we have that $\braket{\vec{\Lambda}|\rho_{\epsilon'}}\leq C(\epsilon')^{|\Lambda|}$
for all finite subsets of the lattice, so that $\rho_\epsilon$. satisfies the condition of the above lemma. We thus have that for $\epsilon,\epsilon' \in D = [0,\min(\epsilon_{*},\epsilon_{**})]$, 
\begin{equation}\label{eq:0toepsrelaxation}
\bigl|\braket{o|K_{\epsilon}^{t}|\rho_{\epsilon'}}-\braket{o|\rho_{\epsilon}}\bigr|<C||o||_{\infty}d_{o} e^{-\gamma t},
\end{equation}
indicating that $\{K_\epsilon\}_{\epsilon \in D}$, together with the pair of steady states $(\rho_\epsilon,\vec{1})$, forms a uniform steady state bundle.  This argument shows that there exists at least one example of a non-trivial uniform steady-state bundle; namely the family consisting of Stavskaya dynamics with weak up-biased noise.\footnote{Note that Stavskaya is not "uniform in a volume" (Sec.~\ref{sec:compare_gap_uniformity}) because bistability is destroyed by any perturbation that can flip an up spin to a downs spin.}

 It is natural to ask whether the above proof of stability applies also to other cellular automata like Toom's rule. On an intuitive level, we expect the result to be true (indeed, our conditions for stability formulated in Sec.~\ref{sec:stability} bear some resemblance to the conditions needed to ensure stability in CA), there are difficulties in directly applying the same proof strategy in the more general case. In particular, it turns out that for some local noise, $\epsilon'$-perturbed Toom's rule fails to obey the required condition $\braket{\vec{\lambda}|\omega}\leq \kappa(\epsilon')^{|\Lambda|}$\footnote{See Theorem 9 of Ref.~\cite{ponselet_thesis} and Theorem 4.1 of Ref.~\cite{fernandeztoom}}, so Theorem 11 of Ref.~\cite{ponselet_thesis} cannot be applied directly to prove uniformity. Finding alternative routes to establish uniformity is an interesting problem for future work.

\subsection{Numerics}\label{sec:numerics}

In our definition of uniformity, Def.~\ref{def:uniformity}, there appears a decay rate $\Delta$, that sets how fast the steady states of the unperturbed dynamics relax to those of the perturbed one. Having argued analytically that a form of uniformity holds in Stavskaya's CA with biased noise, we now turn to the question of what sets this time scale in this model. As we now show, based on numerical simulations, the decay rate appears to agree with the spectral gap of the Markov matrix $K_\epsilon$ generating the noisy dynamics. This confirms the intuition outlined earlier in the paper: while the gap does not set the mixing time for arbitrary initial states, which might take an $O(L)$ time to reach a steady state, the gap \emph{does} correctly predict the time it takes to relax within some set of states that are sufficiently close to the steady state, to which the unperturbed steady states belong. 

We begin by considering the ``low-lying'' spectrum of the Markov generator $K_\epsilon$. We do this by performing exact diagonalization on $K_\epsilon$ for small chains of length $L \leq 14$ to compute the three largest in magnitude eigenvalues, which we denote $\lambda_{0,1,2}$. The leading eigenvector, corresponding to $\lambda_0 = 1$, is the steady state, which is unique when $\epsilon > 0$ at any finite $L$. We are interested in the behavior of $\lambda_{1,2}$ as a function of $\epsilon$ and $L$. To draw analogies with the spectra of Hamiltonians and Lindbladians, we will plot these as $-\log|\lambda_{1,2}|$.

We can distinguish two different regimes, depending on whether $\epsilon$ is above or below the critical value $\epsilon_c$ at which the second steady state of Stavskaya becomes unstable. Based on previous numerics~\cite{ponselet_thesis}, this is expected to happen at $\epsilon_c \approx 0.29$. For $\epsilon < \epsilon_c$, we observe that the first two eigenvalues are exponentially close to each other, i.e., $-\log|\lambda_1| = O(e^{-L})$, consistent with the exponentially long lifetime of the metastable steady state~\cite{ponselet_thesis,Gacs_on_Toom}. For $\epsilon > \epsilon_c$, on the other hand, we find that $-\log|\lambda_1|$ is no longer well fit by an exponential. Instead, the best fit we find is of the form $-\log|\lambda_1| = \Delta + c/L^2$, extrapolating to a finite constant $\Delta$ as $L\to\infty$, signalling that the steady state is now unique in the thermodynamic limit. This is shown by the lower curve in Fig.~\ref{fig:Gap_vs_relaxation}(a).

\begin{figure}
    \centering
    \includegraphics[width=1.\columnwidth]{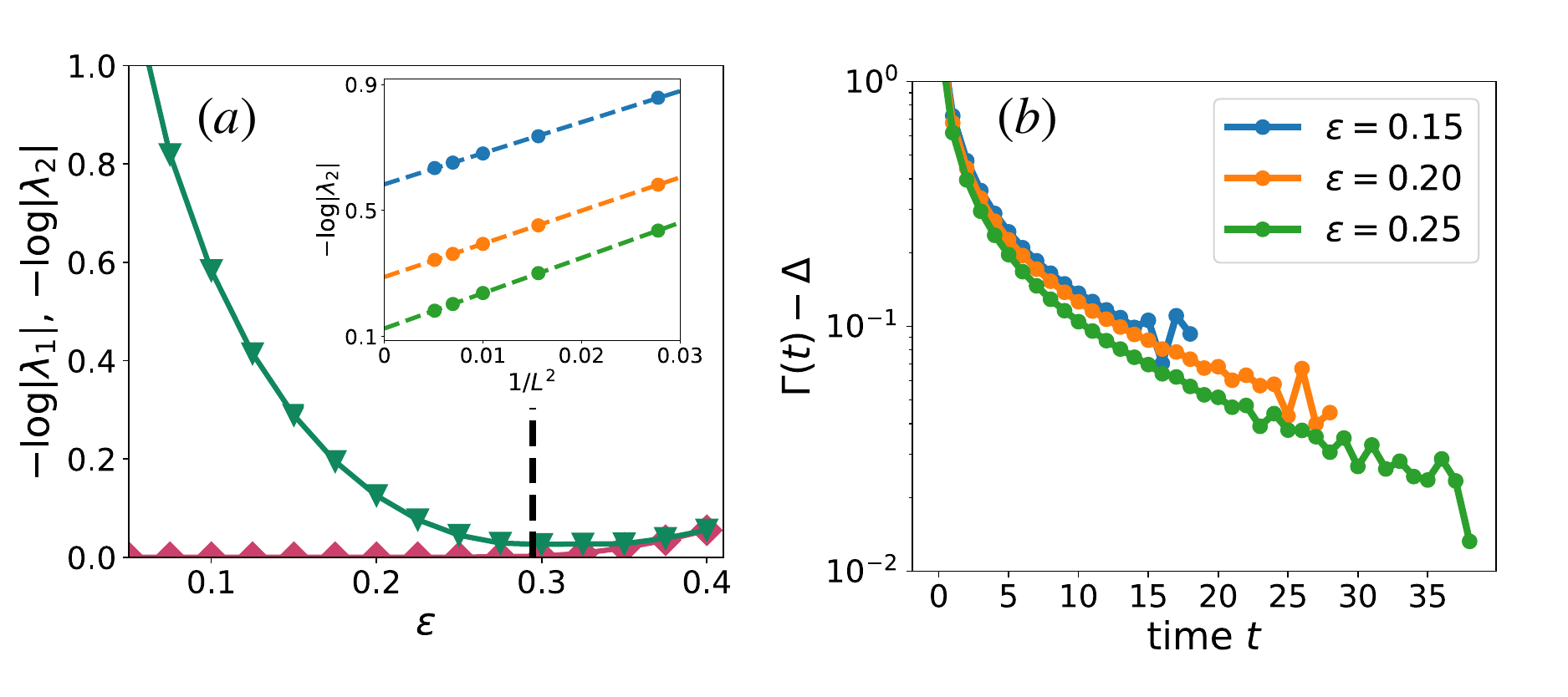}
    \caption{Gap and relaxation in the noisy Stavskaya dynamics. (a) shows the size of the first two subleading eigenvalues, $-\log|\lambda_{1,2}|$ For $\lambda_1$ and $\epsilon < \epsilon_c \approx 0.29$ (demarcated by the dashed vertical line), we use the eigenvalue for $L=14$ which is already very close to zero. For the other data points, we plot a value that has been extrapolated in $L$ using a fit of the form $\Delta + c/L^2$ (see inset). The data show a clear gap $\Delta > 0$ above the two steady state when $\epsilon < \epsilon_c$, which appears to close and reopen as we cross $\epsilon_c$. In (b) we plot the time-dependent decay rate, defined in Eq.~\eqref{eq:DefDecayRate} relative to the spectral gap extrapolated from the data in (a). Our data is consistent with $\lim_{t\to\infty}\Gamma(t) = \Delta$.}
    \label{fig:Gap_vs_relaxation}
\end{figure}

For the next leading eigenvalue, we find that the scaling $-\log|\lambda_2| = \Delta + c/L^2$ gives a good fit on both sides of the transition (the inset of Fig.~\ref{fig:Gap_vs_relaxation}(a) shows it for a few different values of $\epsilon < \epsilon_c$). Above the critical noise strength, we find that the value of $\Delta$ obtained from this fit approaches the one obtained from the same fit for $\lambda_1$, suggesting a continuum of states above the gap. In the non-trivial phase, we still find a gap $\Delta$ above the two steady states that remains finite in the thermodynamic limit. However, its value decreases gradually as $\epsilon\to\epsilon_c$ from below. While our finite size numerics are not sufficient to see the gap closing, we find that it plateaus at a small value, $\Delta < 0.03$ in the vicinity of the critical point, before starting to grow again as we get deeper into the trivial phase. We expect that in the thermodynamic limit, at $\epsilon = \epsilon_c$ the system has a gapless spectrum\footnote{We note that both analytical arguments~\cite{ponselet_thesis} and previous numerics~\cite{Stavskaya_DP} suggest that the critical point belongs to the directed percolation universality class.}. 

Thus, in the $\epsilon < \epsilon_c$ regime, there are two steady states, with eigenvalues exponentially close to one another, with the rest of the spectrum separated from them by a gap $\Delta$ that remains finite in the thermodynamic limit. On the other hand, as we have proven above, uniformity holds in some regime close to $\epsilon=0$, and in fact we expect it to extend all the way up to $\epsilon_c$. We now want to understand how these two features are related to each other. In particular: does the spectral gap determine the rate of decay that appears in Def. \ref{def:uniformity}?

To answer this question, we also need to extract the decay rate from our numerics. To do so, we define a time-dependent decay rate as follows:
\begin{equation}\label{eq:DefDecayRate}
   \Gamma(t) \equiv -\log\left(\frac{b_x(t+1)-b_x(t)}{b_x(t)-b_x(t-1)}\right),
\end{equation}
where $b_x(t)$ is the local magnetization at time $t$ on site $x$. If the magnetization decays exponentially, as it should when uniformity holds, then $\Gamma(t)$ will approach a constant at late times, which we can equate with the decay rate. In practice, we simulate many trajectories of the noisy dynamics starting from the $\ket{\vec{0}}$ state and average over them to get $b_x(t)$ (in fact, to further reduce fluctuations, we replace $b_x$ with its average $\sum_x b_x/L$). For $\epsilon < \epsilon_c$, the magnetization quickly approaches a constant $<1$ at late times, indicating that the system settles down to the appropriate perturbed steady state. We then plug the results into Eq.~\eqref{eq:DefDecayRate} to get the time-dependent decay rate.

The results are shown in Fig.~\ref{fig:Gap_vs_relaxation} (b), where we compare the decay rate $\Gamma(t)$ to the spectral gap $\Delta$ obtained from diagonalizing $K_\epsilon$.  We note that in order to get a smooth curves for $\Gamma(t)$ for sufficiently long times, we need to average over a very large number of realizations, especially for small $\epsilon$ where relaxation happens very quickly. (In Fig.~\ref{fig:Gap_vs_relaxation}(b) we used around $5 \times 10^9$ for $\epsilon=0.25$ and almost $2 \times 10^11$ for $\epsilon=0.15$). Even with such large sample sizes, the effects of noise are still clearly visible at the latest times; nevertheless, in the regime where the data appears to be well converged, it is consistent with the conjecture $\Gamma \equiv \lim_{t\to\infty}\Gamma(t) = \Delta$. We conjecture that this is true more generally: when uniformity holds in a phase, with exponential decay as in Def.~\ref{def:uniformity}, then the channels in that phase are gapped above their steady state manifold, and the gap is equal to the decay rate. 

\section{Stability criteria beyond uniformity}\label{sec:erosion}

The uniformity criterion we introduced in the previous section is powerful once it has been established, but it has the obvious drawback that proving it requires knowledge of the steady states and dynamics within an entire region of parameter space. Ideally, one would like to have a stability criterion that could be expressed entirely in terms of the properties of the \emph{unperturbed} dynamics---similarly to some existing stability results for Hamiltonians~\cite{michalakis}. In this section we explore the possibility of such a sufficient condition. To gain insight, we consider various kinds of known instabilities. We first consider an example where instability of a steady is apparent already at the perturbative level, and then show that such perturbative instabilities are absent in CA that obey a simple erosion property. We then consider another example that shows that even when this condition is satisfied, non-perturbative instabilities can appear. We then attempt to synthesize these examples to conjecture some conditions for stability that we expect to apply even beyond deterministic CA models. 

\subsection{Perturbative instability for Stavskaya with down-biased noise}\label{sec:StavskayaUnstable}

In Sec.~\ref{sec:contour} we have seen how Stavskaya's cellular automaton admits a stable phase when we perturbed it with maximally up-biased noise. We now consider the case of down-biased noise and show that the steady state $\ket{\vec{1}}$ is unstable in this case. We will show
that this is a ``perturbative instability'' i.e., it is associated
with an IR divergence (long-times/large system size) appearing at
a finite order in perturbation theory. The perturbative argument can be formulated by finding a generalised eigenbasis for $K_0$ and performing manipulations similar to those used in time independent perturbation theory in quantum mechanics. Here we opt for a slightly simpler (but equivalent) approach.

Starting with $|\vec{1}\rangle$, consider the effect on $b_{x}$
at long times in response to adding maximally down-biased noise. The generator
of dynamics is $K_{\epsilon}=K_0\prod_{x}(1+\epsilon v_{x})$ where
$v_{x}=\sigma_{x}^{-}-\sigma_{x}^{+}\sigma_{x}^{-}$ and $K_0$ is the
unperturbed Stavskaya dynamics. We again consider the expectation value $b_x(t)$, which can be interpreted
as the probability of having an error persist at location $x$ up
to time $t$:
\begin{equation}
p_{t}(\epsilon)\equiv\langle b_{x}| K_{\epsilon}^{t}|\vec{1}\rangle.\label{eq:stavskayaref}
\end{equation}
Note that $p_{t}(0)=1$. We will now argue that there is an instability in the sense that $\lim_{\epsilon\to0}\lim_{t\to\infty}p_t(\epsilon) \neq \lim_{t\to\infty}\lim_{\epsilon\to0}p_t(\epsilon)$: at long times $\ket{\vec{1}}$ evolves into a steady state that looks very different, even when $\epsilon$ is infinitesimally small (in fact, it evolves towards the unique steady state with $b_x = O(\epsilon)$). 

To see the instability, we expand $p_{t}(\epsilon)$
in $\epsilon$. The first order term takes the form $p_{t}'(0)\epsilon$, where
\begin{equation*}
p_{t}'(0)=\sum_{s=0}^{t-1}\sum_{y=1}^{L}\langle b_{x}| K_0^{s}v_{y}|\vec{1}\rangle.    
\end{equation*}
Here, $v_{y}|\vec{1}\rangle=|0_{y}\rangle-|\vec{1}\rangle$
where $|0_{x}\rangle$ is the configuration with a single $0$
at position $y$ in a sea of $1$'s. $\ket{\vec{1}}$ is invariant under $K_0^s$. On the other hand, when acting on $\ker{0_y}$, the deterministic Stavskaya
dynamics grows the single $0$ into a contiguous string with support
on set $[y-s,y]$, until it encompasses the entire system at $s = L-1$ (we assume periodic boundary conditions). We can summarize this as
\begin{equation*}
K_0^{s}v_{y}|\vec{1}\rangle=\begin{cases}
|0_{[y-s,y]}\rangle-|\vec{1}\rangle & s<L-1,\\
|\vec{0}\rangle-|\vec{1}\rangle & \mathrm{otherwise.}
\end{cases}
\end{equation*}
Taking the inner product with $\langle b_{x}|$ gives
\begin{align*}
p_{t}'(0) & =-\sum_{s=0}^{t-1}\sum_{y=1}^{L}I(x\in[y-s,y])\rightarrow\frac{1}{2}L(L+1)+L(t-L)
\end{align*}
where the final line obtains for $t\geq L$. This derivative plainly
diverges in the $t\rightarrow\infty$ limit, indicating that switching on an infinitesimal amount of unbiased noise can change the steady state magnetization by a large amount.

\subsection{Erosion implies perturbative stability}\label{sec:PertStab}

Summarizing what we have so far, Stavskaya's model is absolutely stable when we restrict to completely up-biased noise, and becomes unstable already at first order in perturbation theory for down-biased noise. In fact, it is easy to see that the latter result generalizes to anything other than the maximally up-biased case: as long as there is a non-zero chance that $1$'s can be flipped to $0$'s, the same perturbative instability occurs  (the steady state $\ket{\vec{0}}$, on the other hand, stays stable, by the same contour argument that we presented in Sec.~\ref{sec:contour}). The reasons for the instability are quite clear: the deterministic dynamics grows islands of $0$'s and thus it offers no protection for the $\ket{\vec{1}}$ state. 

The situation is different for the other steady state $\ket{\vec{0}}$, which is indeed stable (in the sense that the steady-state magnetization evolves continuously with $\epsilon$) for any local noise~\cite{ponselet_thesis}, biased or not. At a heuristic level, we can argue for this as follows. Turning on the perturbation introduces ``errors'' above the initial steady state with a density $\sim \epsilon$. In a particular large region of fixed size, new errors arise on a timescale $O(1/\epsilon)$. If each error dies out---under the unperturbed dynamics---much faster than this timescale, then the errors do not accumulate, and the perturbed steady state resembles the unperturbed one. On the other hand, a large contiguous patch of errors of size $\ell$ takes a time $O(\ell)$ to relax. However, the noise will only generate such configurations with a small probability $O(e^{-\ell})$. These configurations are therefore rare enough that even if they remain far from equilibrium, the state reached at short times will be close to the steady state of the perturbed dynamics. 

We can make this argument rigorous in a way that also generalizes to other deterministic cellular automata. In particular, let us assume that the automaton has the \emph{erosion} property~\cite{ponselet_thesis}, with respect to the steady state $\ket{\vec{0}}$, meaning that in the thermodynamic limit, any initial state with finitely many errors (sites with $b_x=1$) evolves into $\ket{\vec{0}}$ in a finite amount of time\footnote{The restriction to $\ket{\vec{0}}$ is arbitrary, we can also consider erosion w.r.t. to some other steady state, say $\ket{\vec{1}}$. If there are multiple steady states that have the erosion property, they will each be robust to perturbations, at least at the perturbative level. Obviously, the different steady states in this case have to differ from each other at infinitely many locations.}. We will now prove that this, together with the locality of the dynamics, is sufficient to avoid the kind of perturbative instabilities encountered in the previous section and render the steady state stable at any finite order in $\epsilon$. Nevertheless, the possibility of non-perturbative instabilities remains, which is what we will turn to in the next section. 

Let $K_0$ denote the generator of a one-dimensional deterministic CA dynamics and $K_{\epsilon}=K_0 \prod_{x}(1+\epsilon v_{x})$, where
$v_{x}=\sigma_{x}^{+}-\delta_{b_{x},0}$. To get a tight bound, we will assume that $K_0$ exhibits \emph{fast erosion}, by which we mean that any state $\ket{\vec{\Lambda}}$ evolves into the state $\ket{\vec{0}}$ in a time at most $|\vec{\Lambda}|$ (or more generally,  in a time that is linear in $|\vec{\Lambda}|$). Let $b_{x}$ denote the occupation number on site $x$.
In the long-time limit, the probability of $b_x = 1$, starting from the error-free state, can be
written as $
p(\epsilon)=\lim_{t\rightarrow+\infty}\langle b_{x}| K_{\epsilon}^{t}|0\rangle$. Much like we did for Stavskaya, we can expand $p(\epsilon)$ perturbatively as $p(\epsilon)=\sum_n p_n \epsilon^n$ where
 \begin{align}\label{eq:perturbative series}
p_{n} =\sum_{x_{1,\ldots,n}}\sum_{t_{1,\ldots,n}=0}^{\infty}\langle b_{x}|K_0^{t_{n}}v_{x_{n}}\ldots K_0^{t_{1}}v_{x_{1}}|\vec{0}\rangle.
\end{align}
We will now use this expression to argue that $p_n$ has a system-size independent upper bound. This indicates stability at the level of the perturbative expansion. 

To bound $p_n$, we note the following properties of the sum in Eq.~\eqref{eq:perturbative series}:
\begin{itemize}
    \item Every application of $v_{x_{j}}$ on a configuration produces either one or two configurations, while $K_0$ deterministically maps one configuration to another. Therefore we can expand each term in Eq.~\eqref{eq:perturbative series} in terms of some set of spin configurations
    \begin{equation}
    v_{x_{j}}K_0^{t_{j-1}}\ldots K_0^{t_{1}}v_{x_{1}}|\vec{0}\rangle=\sum_{a\in S_{j}}c_{a}\ket{\vec{\Lambda}_a},\label{eq:pertmanyK}
    \end{equation}
    where $\bigl|S_{j}\bigr|\leq2^{j}$. Moreover, it follows directly from the fact that $K_{\epsilon}$ preservers probabilities that $\sum_{a}c_{a}=0$.
    \item The erosion property of $K_0$ implies that the states $\ket{\vec{\Lambda}_j}$ appearing in Eq.~\eqref{eq:pertmanyK} each have at most $j$ errors. Furthermore, the erosion property constrains the possible lists of $t_{j}$ that can appear in Eq.~\eqref{eq:perturbative series}. For example, let us apply $K_0^{t_j}$ to both sides of Eq.~\eqref{eq:pertmanyK}. If $t_j \geq j$, on the RHS we have $\sum_{a\in S_{j}}c_{a}K_0^{t_{j}}|\vec{\Lambda}_{a}\rangle=\left(\sum_{a\in S_{j}}c_{a}\right)|\vec{0}\rangle=0$. In fact, this reasoning can straightforwardly be extended to show that the RHS disappears unless the total
    number of errors added exceeds the number of eroding time steps so we have the constraints 
    \begin{align}\label{eq:TimeConstraint}
    T_{j}\equiv\sum_{k=1}^{j}t_{k}   \leq j, \quad \quad \forall j\in\{1,\ldots,n\}.
    \end{align}
    \item By causality, $b_{x}(t)$ can only depend on the presence or absence of errors at events in the backward lightcone of $(x,t)$. This translates to the constraint 
    \begin{equation}\label{eq:SpaceConstraint}
        |x_{j}-x|\leq \sum^n_{k=j}t_k = T_n-T_{j-1} \leq n,
    \end{equation}
    where in the last step we used Eq.~\eqref{eq:TimeConstraint}.
\end{itemize}

In summary, each term in Eq.~\eqref{eq:perturbative series} is specified by a list of ``noise events'' $\{(x_{j},t_{j})\}_{j=1}^{n}$,  which are constrained by Eqs.~\eqref{eq:TimeConstraint} and~\eqref{eq:SpaceConstraint}. For a fixed such list, the corresponding term in Eq.~\eqref{eq:perturbative series} involves a sum of at most $2^{n}$ configurations. This gives an upper bound
\begin{equation}
|p_{n}|\leq 2^{n}\times(2n)^{n}\times n!=e^{O(n\log n)}.
\end{equation}
Thus $p_{n}$ does not diverge with system size, as it did for the unstable state of Stavskaya, indicating stability at any finite order in perturbation theory. The key step was using the erosion property to argue that the sums over $t_i$ in Eq.~\eqref{eq:perturbative series} can be cut off at a finite time that is independent of $L$. This is a rather general argument, which also applies to CA in higher dimensions and with weaker forms of erosion (i.e., islands eroding at a time $|\vec{\Lambda}|^\alpha$ with $0 < \alpha < 1$).

On the other hand, the bound allows for $p_{n}$ to grow superexponentially with $n$, which would indicate that the series for $p_{n}(\epsilon)$ has zero radius of convergence. Do such non-perturbative instabilities actually occur, and if so, what causes them? We now turn to an example that illustrates this phenomenon.

\subsection{An example of non-perturbative instability}\label{sec:ToomLadder}

We now introduce another deterministic cellular automaton, which we name Toom's ladder (TL) in honor of its originator~\cite{toom_ladder}. It shares with the Stavskaya model the features that it leaves the $\vec{0}$ and $\vec{1}$ states invariant and that has the erosion property\footnote{Note that the erosion property concerns the behavior of states with finitely many errors in the thermodynamic limit. In a finite system, TL actually has many states that do not evolve into either $\ket{\vec{0}}$ or $\ket{\vec{1}}$. Indeed, the existence of such states is related to the model's instability as we will see below.}. Nevertheless, and in contrast to the Stavskaya model, the steady state $\ket{\vec{0}}$ is unstable to arbitrarily weak noise, even when the noise is maximally up-biased. One can find a rigorous proof of this statement in Ref. \cite{toom_ladder}. Here we content ourselves with a more intuitive explanation of the instability which we contrast with the stability seen for Stavskaya.

The model is defined on a ladder geometry, with sites labelled $x=(j,\pm)$. Here $j$ labels the ladder rung, and $\pm$
indicates the top or bottom leg of the ladder respectively. The state space is binary variables $b_{x}=0,1$
on each site, and we will again refer to $1$'s as ``errors''. The deterministic update rules are
\begin{equation}
    b_{j,\pm}\rightarrow \min\bigg(\max(b_{j,+},b_{j,-}),b_{j\mp 1,\pm}\bigg)
\end{equation}

The effects of this rule are simple to describe and verify. First, consider a configuration when all errors are on one leg of the ladder. In that case, the other leg remains error-free and  the errors evolve exactly the same way as they do in Stavskaya, eroding one site each time step from (from the left/right if they are on the upper/lower leg, respectively). For these configurations, TL is therefore equivalent to two mirror image copies of Stavskaya. However, when errors are present on both legs, they interact with each other in non-trivial ways. This can be seen by considering a configuration where all sites on the upper leg are in the $b_x=1$ state. In that case, the errors on the bottom leg do not erode; instead, they simply translate one site to the left in each time step. Similarly when the bottom leg of the ladder is in the $\vec{1}$ state,, errors on the top leg translate to the right. More generally, if the upper leg hosts a large but finite island of errors, then errors ``below'' this island, on the lower leg, can live much longer than they would if the upper leg had no errors. This feature, illustrated in Fig.~\ref{fig:errorisland}, is what will lead to the models instability to noise. 

\begin{figure}
\includegraphics{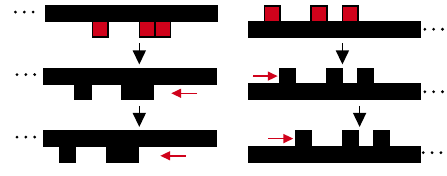}
\caption{This figure demonstrates how a large island of errors on one leg of the Toom Ladder CA (black squares) interacts with new errors introduced in the middle of the island, but on the other leg (red squares). Time runs downwards. The added errors are not eroded, instead they are uniformly translated. This is an important feature of the dynamics which is ultimately responsible for the non-perturbative instability of the TL model.}
\label{fig:errorisland}
\end{figure}

\begin{figure}
    \centering
    \includegraphics[width=1.\columnwidth]{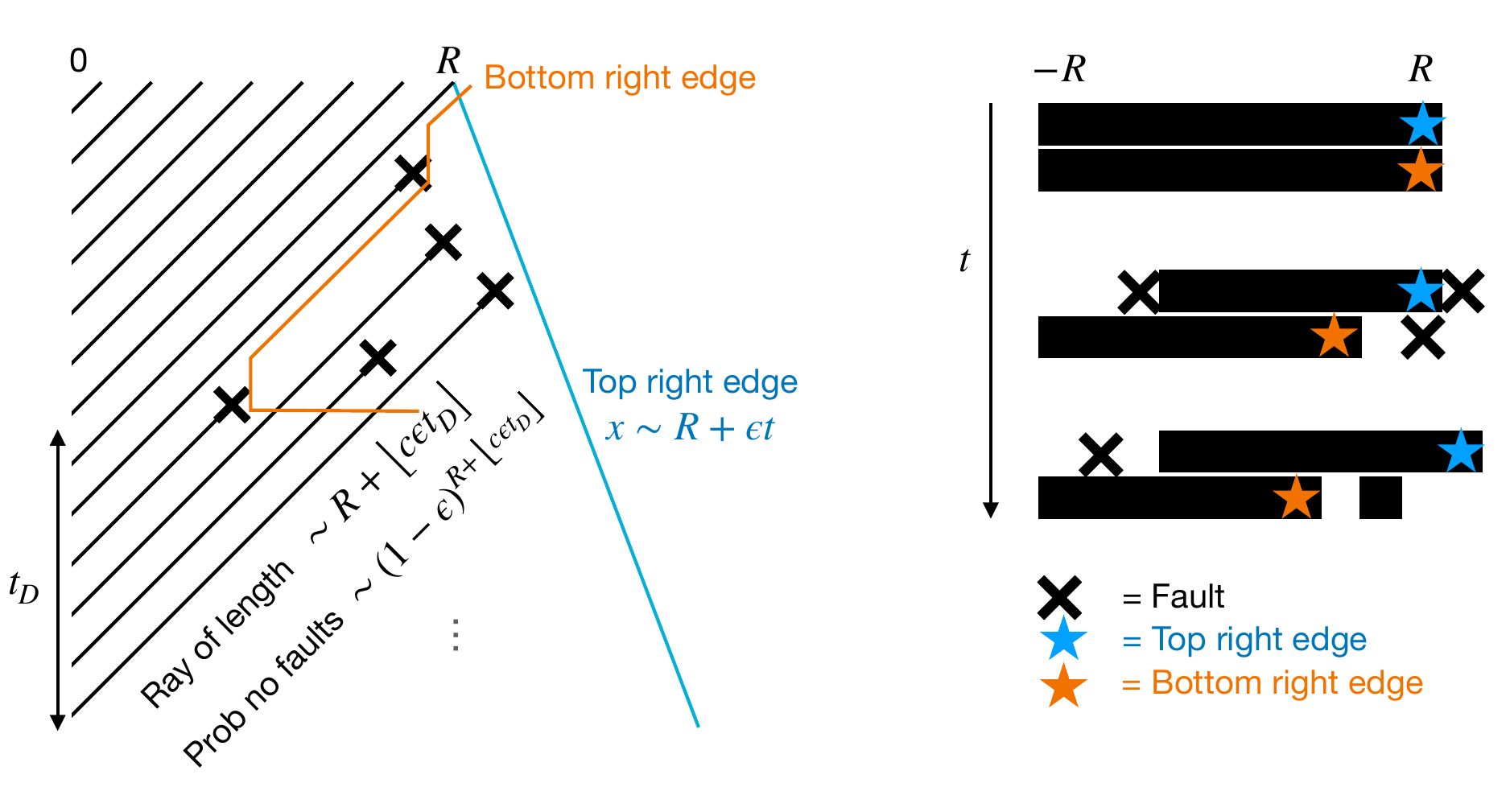}
    \caption{This figure shows the evolution of the right half of an initial error island on both rungs in interval $[-R,R]$ for Toom's ladder. The crosses indicate faults on the bottom rung, the orange line denotes the bottom right end (BRE) of the domain. The blue line denotes the average progress of the top right end of the domain, which undergoes a biased random walk; errors added to the bottom rung between the origin and the blue line are protected from erosion and can thus accumulate, as discussed in the text. An accumulation of errors on the bottom rung implies a finite probability that the BRE never reaches the origin.}
    \label{fig:TL_diagram}
\end{figure}

We now consider what happens when the TL model is perturbed by weak, maximally up-biased noise of strength $\epsilon$. In this case, $\ket{\vec{1}}$ remains a steady state by construction; we will be interested in the fate of the other steady state, $\ket{\vec{0}}$. While the dynamics of the TL model are clearly more complicated than Stavskaya's, it nevertheless satisfies the erosion condition and thus the proof of Sec.~\ref{sec:PertStab} applies to it, indicating that the steady state is stable at any finite order in perturbation theory. Despite this, as mentioned above, the steady state is actually unstable for any $\epsilon > 0$ and it is driven to the state $\ket{\vec{1}}$ by the noise at some time scale that remains finite in the thermodynamic limit (however, the perturbative stability implies that this time scale itself must diverge super-polynomially with $1/\epsilon$ as $\epsilon\to 0$). This means that there must be a non-perturbative mechanism that causes the instability. We now outline this mechanism (referring the reader to Ref. \cite{toom_ladder} for a rigorous proof).

Let us initialize a state which has both rungs of the ladder filled with
errors in the interval $[-R,R]$. Under the unperturbed dynamics,
the top/bottom rung erodes errors from the left/right respectively and the
island splits into two parts, one on the top rung and one on the bottom, at time $t=R$ (see Fig.~\ref{fig:TL_diagram}).
We will show that once $\epsilon>0$, there is a finite probability
the splitting will never occur. In fact, we will argue that the probability of splitting becomes exponentially small in $R$ for large initial islands. 
That implies that large domains, once formed, are very likely to survive
forever. As the up-biased noise favors growing islands of errors, this means that, rather than eroding, distinct islands will eventually start merging into even larger ones, driving the system towards the all $1$ state. 

It suffices to restrict ourselves to the right half of the interval
$[-R,R]$ and to focus on the dynamics of the bottom rung. Specifically,
we track the motion of the rightmost end of the contiguous block of
errors connected to the origin, which we call the bottom right edge
(BRE) of the island (Fig.~\ref{fig:TL_diagram}). We will show that in the presence of noise, the BRE has a chance of never reaching
the origin, and that this event becomes increasingly unlikely with
$R$. 

In the unperturbed dynamics, the BRE will reach the origin
at time $t=R$, which is the time when the splitting of the island into two occurs. We want to know the probability that the BRE is delayed by some time $t_D$ due to additional errors that occur due to the noise acting on the system. The probability that the BRE is delayed by at least one time step is the probability that a fault occurs on the ray connecting
events $(R+1,t=0)$ and $(1,t=R)$, which is given by
\begin{equation}
    \mathbb{P}(t_{D}\geq1)=1-(1-\epsilon)^{R+1}.
\end{equation}
Thus, the probability that no delay occurs goes to zero exponentially as the initial island gets bigger.

For higher values of $t_{D}$ we can argue inductively as follows.
Note that 
\begin{equation}
\mathbb{P}(t_{D}  \geq n+1)=\mathbb{P}(t_{D}\geq n+1\mid t_{D}\geq n)\mathbb{P}(t_{D}\geq n).
\end{equation}
On the other hand, 
\begin{equation}
\mathbb{P}(t_{D}\geq n+1|t_{D}\geq n)=1-\mathbb{P}(t_{D}=n|t_{D}\geq n)
\end{equation}

If we restrict ourselves to the events where $t_{D}\geq n$, then
the event $t_{D}=n$ implies that there can be no faults on the ray
connecting $(R+1,t=n)$ and $(1,t=R+n)$. Therefore 
\begin{align}
\mathbb{P}(t_{D} & =n|t_{D}\geq n)\leq(1-\epsilon)^{R+1}.\label{eq:key_ineq}
\end{align}
so that
\begin{align}
\mathbb{P}(t_{D}\geq n+1) & \geq(1-(1-\epsilon)^{R+1})\mathbb{P}(t_{D}\geq n)\label{eq:refer}
\end{align}
Therefore we have
\begin{equation}
\mathbb{P}(t_{D}\geq n)\geq(1-(1-\epsilon)^{R+1})^{n}
\end{equation}
The right hand side of this inequality goes approximately as 
\begin{equation}
\mathbb{P}(t_{D}\geq n)\gtrsim\exp(-ne^{-\epsilon R}).
\end{equation}
The RHS will be an $O(1)$ quantity, unless $n > e^{\epsilon R}$. Therefore, for any finite $\epsilon$, the lifetime of the island blows up at least exponentially with $R$.

However, an even stronger bound is true because as time progresses,
the top right edge of the system will execute a biased random walk
to the right with speed $O(\epsilon)$ as shown in Fig.~\ref{fig:TL_diagram}. By a time $t$ this will afford
further protection to errors added on the bottom rung with positions
$x>R+O(\epsilon t)$. Thus, the effective value of $R$ used in Eq.~\eqref{eq:key_ineq}
will actually increase with $n$, leading to a tighter bound in Eq.~\eqref{eq:refer}
of the form
\begin{equation}
\mathbb{P}(t_{D}\geq n+1)\geq(1-(1-\epsilon)^{R+\bigl\lfloor c\epsilon n\bigr\rfloor+1})\mathbb{P}(t_{D}\geq n)
\end{equation}
for an unimportant constant $c$. Iterating this, we obtain
\begin{align}
\mathbb{P}(t_{D}\geq n) & \geq\prod_{k=1}^{n}(1-(1-\epsilon)^{R+\bigl\lfloor c\epsilon k\bigr\rfloor+1}).
\end{align}
As $n\to\infty$, the RHS converges to a nonzero number which
can be approximated as $\exp(-e^{-\epsilon R}\frac{(1-\epsilon)^{c\epsilon}}{\epsilon})$.
Therefore, islands have a finite probability of living and growing forever. Inevitably, the system will seed islands which will grow forever and link up to a uniform all $1$ steady state. 

To summarize, we find that the source of the non-perturbative instability is that, while the unperturbed dynamics inevitable erodes errors, this erosion can be delayed forever by even infinitesimally weak noise. The reason for this, and the difference compared to Stavskaya's model where this instability was absent, is that in the TL model there is a much larger ``phase space'' available for adding errors that have the ability to delay the erosion process. In the case of Stavskaya, such relevant errors (which we termed ``faults'' in that case) need to live in the vicinity of the trajectory traced out by the right endpoint of the island under the unperturbed dynamics, a statement that is made precise by the contour argument of Sec.~\ref{sec:contour}. In the TL model, on the other hand, we found that faults that delay the erosion can appear within a much larger region of space-time (see Fig.~\ref{fig:TL_diagram}), one that eventually grows faster than linearly with the size of the island. This increased phase space for faults is what leads to the lack of erosion in the perturbed model and thus to the eventual instability. 

\subsection{A heuristic picture of stability}\label{sec:stability}

The examples of stable and unstable phases considered above suggest two general conjectures, which we formulate in this section. 
First, we argue that a suitably defined erosion property suffices to establish stability in the $\epsilon \to 0$ limit at any fixed order in perturbation theory. 
Second, we argue that the crucial distinction between the Stavskaya and TL models is the entropy of their steady-state spaces: whenever a model obeys a sufficiently strong form of the erosion property, and has only finitely many distinct steady states, we conjecture that the model is stable. This is related to the intuition obtained for the TL model above: it is the number of distinct steady states that controls the ``phase space'' available for adding relevant errors to a space-time history (where ``relevant'' means that they can extend the lifetime of some previous error). The two conditions taken together roughly imply that there is no significant ``leakage'' from the steady state to other very long-lived states; one should compare this to the picture discussed in Sec.~\ref{sec:ASEP} for the ASEP problem where we saw that it is precisely such leakage that can lead to an instability of the steady state. 

\subsubsection{Erosion and perturbative stability}

As we saw in Sec.~\ref{sec:PertStab}, the erosion property of cellular automata is sufficient to render them stable at any finite order of perturbation theory. The key step in the argument was the observation that erosion limits the times $t_i$ that can appear in the perturbative expansion~\eqref{eq:perturbative series}, thus cutting off the time sums at a finite order. Similarly, the locality of the dynamics was used to cut off the sums over the locations of errors $x_i$. 

To generalize the argument, we need to define the analogue of the erosion property to channels that are not deterministic CA. Let us fix some channel $\mathcal{E}$ and one of its steady states $\rho$ in the steady state bundle. We first define an \emph{island of errors} on a region $A$ as a state $\mathcal{V}_A \ket{\rho}$, where $\mathcal{V}_A$ is an arbitrary channel acting on $A$. We will say that the islands has size $R$ if $R$ is the radius of the smallest ball-shaped region that fully contains $A$. We will then say that $\mathcal{E}$ is \emph{$\alpha$-eroding} if 
\begin{equation}
	|| \mathcal{E}^t \mathcal{V}_A  \ket{\rho} - \ket{\rho}||_1 \leq e^{-O(t^\alpha / R)}
\end{equation}
holds for any island of size $R$. 

We claim that if $\mathcal{E}$ is $\alpha$-eroding with $\alpha > 0$ then it has perturbative stability. For simplicity, we again consider on-site perturbations of the form $\prod_{x}(1+\epsilon v_{x})$. We want to bound the terms that appear at $n$-th order in perturbation theory for the expectation value of some local observable\footnote{While we focus on local observables here, the same argument should go through for the kind of non-local order parameters one might use to diagnose, say, topological order (i.e. the behavior of Wilson-line operators), provided one takes the thermodynamic limit first.} $o_x$ localized near position $x$. These once again take the form
\begin{equation}\label{eq:PertSeriesAgain}
    \sum_{x_{1,\ldots,n}}\sum_{t_{1,\ldots,n}}^{t}\braket{o_x|\mathcal{E}^{t_{n}}v_{x_{n}}\ldots\mathcal{E}^{t_{1}}v_{x_{1}}|\rho}.
\end{equation}
Let $T_j \equiv \sum_{k=1}^{j} t_k$ denote the total time elapsed before the $j$-th error event. 

\begin{figure}
    \centering
    \includegraphics[width=1.\columnwidth]{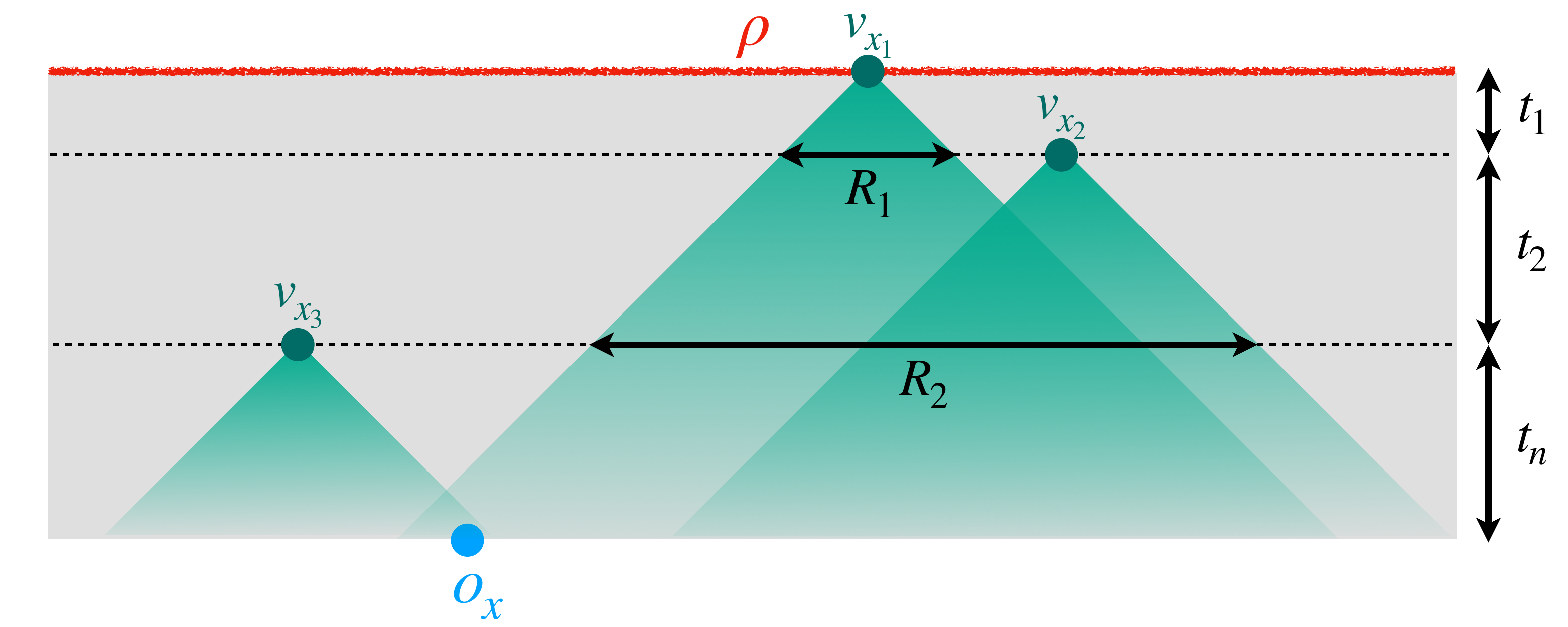}
    \caption{Sketch of general erosion argument. The errors $v_{x_i}$ in Eq.~\eqref{eq:PertSeriesAgain} create small islands on top of the steady state $\rho$. These islands spread out within the Lieb-Robinson cone but they also get continuously eroded and would disappear unless they encounter other another error that can extend their lifetime. They all need to link up with each other and the observable $o_x$ to give a non-vanishing contribution.}
    \label{fig:erosion}
\end{figure}

The argument for stability runs analogous to the one in Sec.~\ref{sec:PertStab}. let us first consider the state $\mathcal{E}^{t_1} v_{x_1} \ket{\rho}$: due to the erosion this will be very close to $\ket{\rho}$ after some $O(1)$ timescale $t_1^*$ and we can cut off the sum over $t_1$ at this time. On the other hand, due to the Lieb-Robinson bound, the state $\mathcal{E}^{t_1^*} v_{x_1} \ket{\rho}$ can be well approximated by an island of size $R_1 \approx 2vt_1^*$ (where $v$ is the L-R velocity). Under subsequent evolution this island will erode, unless the next error, $v_{x_2}$ influences it in a way that extends its lifetime. In order to do so, and again using Lieb-Robinson, this new error has to be close enough, which we can loosely bound by requiring that the state $v_{x_2} \mathcal{E}^{t_1} v_{x_1} \ket{\rho}$ can be approximated as a single island of size at most $\tilde{R}_1 = O(R_1 + R_1^{1/\alpha})$. Now, we can apply the erosion condition to this larger island, which implies that the sum on $t_2$ can be cut off at a time $t_2^* = O(\tilde{R}_1^{1/\alpha})$ which is clearly an $O(1)$ quantity independent of system size. 

We can now iterate this argument for all subsequent times. Since we cut off $t_2$, $\mathcal{E}^{t_2}v_{x_2} \mathcal{E}^{t_1} v_{x_1} \ket{\rho}$ can be approximated by an island of size at most $R_2 = \tilde{R_1} + 2 v t_2^*$. In order to keep the previous errors from eroding, $v_{x_3}$ has to be sufficiently close to this island, which means that the state $v_{x_3} \mathcal{E}^{t_2}v_{x_2} \mathcal{E}^{t_1} v_{x_1} \ket{\rho}$ itself is an island of size at most $\tilde{R}_2 = R_2 + R_2^{1/\alpha}$, putting a finite cutoff $t_3^* = O(\tilde{R}_2^{1/\alpha})$ on $t_3$. We can iterate this all the way up to $t_n$ to ensure that all the time sums can be cut off at finite, system size independent values. Finally, we can use the LR bound once more to argue that error far from $x$, with $|x_j -x| > \sum_{k=j}^{n} t_k = T_n - T_{j-1}$ have exponentially suppressed contributions. This allows us to also cut off the sums over $x$ at finite values, thus rendering the whole expression in Eq.~\eqref{eq:PertSeriesAgain} finite. 

\subsubsection{Conditions for nonperturbative stability}

What about stability beyond perturbation theory? Here we will have to resort to conjectures, and we will use the lessons learned from Toom's ladder to motivate them. 

One way that stability can fail is by the erosion being too slow. While for the argument above, we could get away with any $\alpha > 0$, this might not be enough to preclude a non-perturbative instability. In particular, if we add noise that grows an island at its boundary with some constant rate, this will overpower the erosion of the unperturbed dynamics immediately, unless $\alpha=1$. We will refer to models that have $\alpha=1$ erosion as \emph{fast eroders}\footnote{Nevertheless, $\alpha <1$ erosion might still be sufficient for stability if we cannot add noise that deterministically grows islands of errors. This could happen due to symmetry restriction, as in 2D Glauber dynamics when spin-flip symmetry is enforced, or due to the fact that the different steady states are locally indistinguishable, as in the (four-dimensional) toric code.}. 

Even fast erosion is not sufficient to rule out instabilities, as the TL model illustrates. In our analysis of that model, we also saw the underlying reason for the eventual instability: the fact that there are many different islands of the same size meant that there are many different ways of adding errors to a space-time history in a way that increases the time before the island can erode.  We conjecture that, at least for fast eroders, this is the only way to get an instability. However, to formulate it for generic models, we first need to introduce some definitions.

First, we want to define a \emph{contiguous island of errors}. We will call an island $\rho_A$ contiguous if for any observable $o$, we have that $\frac{\mathrm{d}}{\mathrm{d}t} \braket{o|\rho_A} \leq O(e^{-\text{dist}(\text{supp}(o),\partial A)})$, where $\text{dist}(\text{supp}(o),\partial A)$ is the distance between the region on which $o$ is supported and the boundary of $A$ (we also assume that the region $A$ on which the island is supported is itself contiguous in space). This means that $\rho_A$ looks like a steady state away from $\partial A$ and, by Lieb-Robinson, it ensures that observables supported more than $vt$ away from the boundary will remain static up to time $t$. We expect that such contiguous islands will be the slowest to erode. 

Next, we want to count the number of \emph{distinct} contiguous islands. We say that two contiguous islands $\rho_A^{(1,2)}$ are distinct if $||\rho_A^{(1)} - \rho_A^{(2)}||_1 \geq C$ where $C$ is some $O(1)$ constant. We expect that the number of contiguous islands on $A$ that are all distinct from one another will be a finite, countable number, which we will denote by $N(A)$, and let $N(R)$ be the maximum of $N(A)$ over all islands of size $R$. Thinking back to our CA examples, in Stavskaya, we have $N(R) = 1$ for any $R$: the only contiguous island is an insertion of the other steady state $\ket{\vec{1}}$. In TL, on the other hand, there are many different contiguous islands, where one leg hosts a block of errors but the other can have different steady state configurations in its bulk. This means that $N(R)$ is some growing function of $R$. As we have seen the nonperturbative instability of TL stems from the fact that the noise can introduce perturbations in any temporal order and have them be long-lived: this is precisely because these perturbed configurations are themselves contiguous islands. Heuristically, absent a finite entropy of contiguous islands, there are only finitely many places (or, more generally, a number of places proportional to the boundary $|\partial A|$) where one can add noise to a contiguous island in a way that makes the island live longer.
Motivated by this observation, we formulate the following conjecture:

\begin{conjecture}\label{conj:stability}
    Let $\mathcal{E}$ be a channel in $d$ spatial dimensions that has fast erosion w.r.t to its steady state $\rho$. Furthermore, assume that relative to the steady state $\rho$, the number of distinct islands satisfies $N(R) \leq O(R^{d-1})$. Then $\rho$ is stable to arbitrary local perturbations of $\mathcal{E}$. 
\end{conjecture}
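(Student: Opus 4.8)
The plan is to prove the time-uniform, system-size-independent bound $\braket{\vec{\Lambda}|\mathcal{E}_\epsilon^t|\rho}\leq C(c\epsilon)^{|\Lambda|}$ for the perturbed channel $\mathcal{E}_\epsilon=\mathcal{E}\prod_x(1+\epsilon v_x)$ and all sufficiently small $\epsilon$, directly generalizing the Stavskaya estimate Eq.~\eqref{eq:ponselet_bound}. As in Sec.~\ref{sec:stavskayauniform}, such a bound shows the perturbed steady state stays exponentially close to $\rho$ at every scale and then feeds into the uniformity machinery (Theorems~\ref{thm:LPPL} and~\ref{thm:Uniformity-implies-analyticity}) to give LPPL, analyticity of local observables, and robustness of the character of $\rho$---i.e.\ stability. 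First I would set up a spacetime expansion of $\mathcal{E}_\epsilon^t$, Trotterizing into layers and expanding each factor $(1+\epsilon v_x)$, so that the amplitude becomes a sum over spacetime configurations of ``fault events'' (insertions of $v_x$), each carrying a factor $\epsilon$. For classical CA-type channels these are genuine probabilities and the picture is literally that of Sec.~\ref{sec:contour}; for general quantum channels one instead works with the interaction-picture/Lieb-Robinson expansion, treating the $v_x$ insertions as operator-valued events and the ``configurations'' as the states they produce.

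Next I would generalize the backward-tracing construction of Sec.~\ref{sec:contour}. Starting from the deviation region $\Lambda$ at the final time, I would trace the lineage of the deviation backward through spacetime, assigning to each contributing history a connected spacetime set $S$ with bounding contour $\mathcal{C}=\partial S$, now a codimension-one surface in $(d+1)$-dimensional spacetime. The fast-erosion hypothesis ($\alpha=1$) is exactly what makes the tracing terminate: an island of size $R$ that is not replenished decays as $e^{-O(t/R)}$, so any part of $S$ persisting for a time much larger than its spatial extent must be pinned by faults on its boundary. This converts erosion into a lower bound $n_f(\mathcal{C})\gtrsim|\mathcal{C}|$ on the number of faults in terms of the geometry of $\mathcal{C}$---the analogue of the relation $n_h\geq|\Lambda|$ in Sec.~\ref{sec:contour}---and hence an ``energy'' cost $\epsilon^{n_f(\mathcal{C})}$ for each contour.

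The crucial and genuinely new step is the entropy bound, where the hypothesis $N(R)\leq O(R^{d-1})$ enters. Reorganizing the sum as $\sum_{\mathcal{C}}\mathbb{P}(\mathcal{C})$, I would bound the number of distinct contours of a given size. In Sec.~\ref{sec:contour} the branching factor was the constant $3$ (three edge types); here the branching factor as one builds $\mathcal{C}$ step by step is set by the number of distinct contiguous island configurations the boundary can locally take, which is exactly what $N(R)$ quantifies. The point of the boundary-law hypothesis is that a \emph{polynomial} island count corresponds to a \emph{sub-extensive} configurational entropy, so the entropy per unit contour area vanishes and the effective local branching factor is $O(1)$---just as in Stavskaya, where $N(R)=O(1)$. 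One then expects the number of contours of area $A$ to be at most $c^{A}$ with $c$ independent of $R$, so that for small enough $\epsilon$ the product $\epsilon^{n_f(\mathcal{C})}\times(\text{number of such }\mathcal{C})$ is summable and yields the claimed exponential suppression in $|\Lambda|$. By contrast, an extensive island entropy---as in Toom's ladder, where $N(R)$ grows exponentially---makes the effective local branching factor exceed $1/\epsilon$ and the sum diverge, reproducing the instability of Sec.~\ref{sec:ToomLadder}.

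The hard part will be this entropy bound, for two reasons. First, $N(R)$ counts distinct \emph{static} islands, whereas the contour sum requires a \emph{dynamical} count of how the island boundary evolves from one timestep to the next; converting one into the other needs a structural lemma---roughly, that a contiguous island at time $t{+}1$ is determined, up to $O(1)$ local choices with global ambiguity bounded by $N(R)$, by the island at time $t$ together with the faults in that step. This lemma would make precise the intuition, explicit for Toom's ladder in Sec.~\ref{sec:ToomLadder}, that a large entropy of islands is exactly what opens a large phase space of lifetime-extending faults. Second, in the genuinely quantum case the island configurations are density matrices rather than bit-strings, so even defining the contour and its branching factor requires replacing the combinatorial lineage by a Lieb-Robinson-controlled operator-spreading argument, with ``distinct'' islands counted at the finite trace-norm resolution used in the definition of $N(R)$. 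Making both of these precise for arbitrary channels is the principal obstacle, and is presumably why the statement is left as a conjecture.
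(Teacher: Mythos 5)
You are attempting to prove a statement that the paper itself leaves open: Conjecture~\ref{conj:stability} is explicitly a conjecture, supported in Sec.~\ref{sec:stability} only by heuristic motivation drawn from Secs.~\ref{sec:contour}, \ref{sec:PertStab} and~\ref{sec:ToomLadder}, so there is no paper proof to compare against. Your sketch faithfully reproduces that heuristic---erosion supplies the ``energy'' (fault-count) cost of a contour, the boundary-law count $N(R)\leq O(R^{d-1})$ is supposed to tame the contour entropy, and a Peierls-type sum closes the argument---but the two steps you yourself flag as hard are not technical details to be filled in later; they \emph{are} the content of the conjecture. In particular, the missing ``structural lemma'' converting the static count $N(R)$ into a bound on the dynamical branching of spacetime contours is exactly what the Toom's-ladder analysis shows to be delicate: there the instability arises because the phase space of lifetime-extending faults is a spacetime quantity (faults can be added in many temporal orders, protected by transient configurations), and no fixed-time island count controls it without a genuinely new idea. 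Likewise, in the quantum setting the objects you sum over (configurations, contours, ``faults'') are not yet defined, so the proposal is a research program rather than a proof.

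There is also a concrete obstruction you do not flag, which shows your master estimate $\braket{\vec{\Lambda}|\mathcal{E}_\epsilon^t|\rho}\leq C(c\epsilon)^{|\Lambda|}$ is too strong as stated. The Stavskaya contour argument of Sec.~\ref{sec:contour} leans on the special structure of maximally up-biased noise: faults only create errors, the complementary steady state $\vec{1}$ is exactly preserved, and the dynamics is monotone. The conjecture, by contrast, demands stability to \emph{arbitrary} local perturbations, and the paper itself records (footnote in Sec.~\ref{sec:stavskayauniform}, citing Theorem 9 of Ref.~\cite{ponselet_thesis} and Theorem 4.1 of Ref.~\cite{fernandeztoom}) that for some local noise the perturbed Toom's rule \emph{violates} precisely this exponential-in-$|\Lambda|$ bound---even though 2D Toom (fast eroder, $N(R)=O(R)$) is the paradigm case the conjecture is meant to cover. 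So any successful proof must either target a weaker notion of stability (e.g.\ continuity of local expectation values, which is how the paper phrases the conclusion) or find a different intermediate estimate; the one you propose is known to fail in the key example. A further gap of the same kind: your fast-erosion hypothesis is probabilistic, a bound $e^{-O(t/R)}$ on a trace-norm distance, not a deterministic guarantee that unreplenished islands vanish after time $O(R)$, so the backward tracing does not terminate and the exponential tail must be controlled against the entropy of histories---another place where the Stavskaya argument's determinism is silently being assumed.
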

For example, the conditions in the conjecture should apply to the 2D Toom cellular automaton, where there is an $O(R)$ number of islands. 

We expect $N(R)$ to be related to the number of (almost) steady states on a finite system. For example, the TL model on a ladder of length $R$ with periodic boundaries has $e^{O(R)}$ different eigenvalues that have modulus one, and the corresponding eigenstates are closely related to the configurations that make the lifetime of islands infinite in the perturbed model (see our discussion in Sec.~\ref{sec:ToomLadder} above). More generally, we expect that various distinct islands can be obtained from eigenstates of the dynamics restricted to $A$ (with some appropriate choice of boundary condition) that have eigenvalues that approach $1$ in the limit of large $R$. Thus we conjecture that unstable fast eroders have a large number of steady states in the thermodynamic limit\footnote{More generally, we should also consider eigenstates whose eigenvalues are getting close to some point on the unit circle, not necessarily at $1$. We expect that when there is sufficiently many of these, we should still be able to use them to construct contiguous islands that appear stationary for times longer than it takes for them to erode.}. As a consistency check, in App.~\ref{app:soldier} we consider two more examples of 1D CA models that are known to be fast eroders but are unstable to noise. We confirm that both indeed have a number of steady states that grows with system size. 

Here, we formulated what we conjecture to be a sufficient condition for stability in the sense that local expectation values should evolve smoothly with the strength of perturbations. As discussed in Sec.~\ref{sec:uniformity_implies}, this sort of stability is a natural consequence of uniformity as we have defined it. To round up our set of conjectures, we therefore further conjecture that if a channel $\mathcal{E}$ and some set $\mathcal{S}$ of its steady states all satisfy the conditions of Conjecture~\ref{conj:stability}, then there exist a uniform steady state bundle, with finite volume in the space of all channels, that contains $(\mathcal{E},\mathcal{S})$.

\section{Conclusion}\label{discussion}
We proposed a condition called ``uniformity'', which guarantees that a family of open many-body systems, classical or quantum, has stable steady state correlations. Briefly stated, uniformity within some open set of parameter space requires that any point in this set is surrounded by a neighborhood $N$ such that (i)~the steady states at any two points in $N$ can be placed in a natural one-to-one correspondence, and (ii)~the steady state for any point $x \in N$, evolved under the dynamics at any other point $x' \in N$, relaxes fast to the corresponding steady state at $x'$. The physical picture underlying this criterion is that steady states in the same phase differ only by the presence of "errors" that can be healed locally and rapidly, even if the states themselves are exotic and harbour long-ranged correlations. On the other hand, steady states in distinct phases (e.g., a product and long-range correlated state) will fail to obey (ii), as product states cannot rapidly develop long-ranged correlations (they can only do so slowly, over a time that scales with system size).

We also explored the connection between uniformity and spectral properties of the corresponding quantum channels generating the dynamics. We showed that if a channel $\mathcal{E}$ lies in a uniform region, then there is a system-size independent upper bound on the growth of  $\braket{o|\left((\mathcal{I}-\mathcal{E})^{-1}\mathcal{V}\right)^{n}|\omega} < e^{O(n)}$, where $o$ is a local observable, $\omega$ the steady state, and $\mathcal{V}$ is a local superoperator perturbing the channel. The exponential bound on this matrix element resembles a gap condition on the resolvent, although the bound is \emph{not} implied by the existence of a gap (as the ASEP example of Sec.~\ref{sec:ASEP} shows). What role does a spectral gap play in the story of uniformity? In our one-dimensional Stavskaya example (in the non-trivial bistable phase) we were able to show that the model obeys uniformity. It also has a spectral gap, and we numerically demonstrated that this gap coincides with the exponential decay rate that appears in the definition of uniformity, confirming the intuition gained from the ASEP model. We have conjectured that this relation between gap and the decay rate holds more generally for the exponentially decaying version of uniformity. We also noted that a weaker version of uniformity (with sub-exponential, but sufficiently fast decay) is powerful enough to imply the stability of correlations, and might hold even in the absence of a spectral gap.

Finally, we looked for a sufficient condition on a channel which would guarantee that it lies in a uniform region of parameter space, and hence defines an open phase of matter. We studied known examples of stable and unstable one-dimensional cellular automata, and  identified that cellular automata can have perturbative and non-perturbative instabilities. We formulated a sufficient condition (erosion) which prevents the former, and conjectured a stronger condition which prevents the latter (Conjecture \ref{conj:stability}).

\subsection{Outlook and further discussion}
Some important comments about the range of validity of our results and possible extensions are in order.

\paragraph*{Quantum open systems and topological order:} While our discussion has focused on classical phases, it also has implications for stable quantum phases, such as the four-dimensional toric code, that are capable of passive quantum error correction. Assuming the 4D toric code (with some local error correction, e.g. \cite{breuckmann2017local,kubica2019cellular}) lies in a uniform region of parameter space (as we conjecture), what are the implications? Theorem~\ref{thm:isomorphism} implies that each of the steady states of a general channel $\mathcal{E}'$ in the uniform region can be put into 1:1 correspondence with steady states of the fixed point toric code model ($\mathcal{E}$) through a finite chain of low-depth channels, and vice versa. So if $\rho_a$ is a reference logical qubit at the fixed point, it relaxes quickly to a steady state $\rho_a'$ of $\mathcal{E}'$ under dynamics  $\mathcal{E}'$. We also know that $\rho_a'$ relaxes quickly to \emph{some} steady state of $\mathcal{E}$ under dynamics $\mathcal{E}$. In the present case, however,  the  diverging code distance between topological sectors and the low depth of the channels implies that $\rho_a'$ relaxes quickly back to the particular state $\rho_a$! Therefore the channels $\mathcal{E}^t, (\mathcal{E}')^t$ are approximate inverses of one another when restricted to the steady state space and at sufficiently large times set by $\Delta$. However, a channel  only ever has a channel as its inverse when it is unitary, suggesting that the steady state spaces of the two channels (and the corresponding logical operators) are approximately related by a low-depth \emph{unitary} channel. Therefore, in this case, the steady states are related in a similar manner to ground states of the perturbed toric code model, and encode the same logical qubits. Note that even without access to this unitary channel, one can perform logical operations on a steady state $\rho'$ of $\mathcal{E}'$ as follows: Use $\mathcal{E}$ to evolve $\rho'$ to a fixed point state, perform the desired Pauli logical operation, and then use $\mathcal{E}'$ to evolve the resulting state back to the original steady state space.

\paragraph*{Is uniformity necessary for stability?} 
Uniformity is a sufficient condition for the stability of steady-states, but is it necessary too? As noted in App.~\ref{app:uniformity}, the uniformity definition can be weakened to encompass cases in which the perturbed channel approaches its steady state as a power law rather than an exponential. If the power law is sufficiently rapid, the corollaries of uniformity (LPPL, stability of correlations) go through as in the exponentially decaying case. Thus we expect a generalised but still powerful "fast power-law" version of uniformity to hold for a wider class of models that may include gapless stable phases (e.g., Goldstone phases in high dimensions). 

Is the generalised "fast power-law" version of uniformity necessary for phase stability?  If nearby channels relax to each other's steady states as a slow enough power law, our arguments---in particular for LPPL---break down. However, the failure of these arguments for slower decay might not entail the instability of the phase. A simple counterexample is a particle diffusing on a lattice of sites with random on-site energies; the hopping rates between neighboring sites are set by detailed balance at some equilibrium temperature. It is straightforward to show that the effects of locally perturbing the energy landscape die out diffusively, as $t^{-1/2}$, at any finite distance. This decay is too slow for our proof of LPPL to work. Nevertheless, the effect of local perturbations on steady-state populations is strictly local in the thermodynamic limit. More generally, the constraints imposed by conservation laws can cause phases to be stable even when our counting arguments suggest they are not. Exploring the effects of these constraints is an interesting topic for future work.

\paragraph*{Further open questions:}We conclude by listing some open questions raised by our work. First, we listed a number of conjectures throughout the paper (\ref{conj:uniformity=gap}, \ref{conj:stability}) and proving or refuting any of them would clarify our picture of open phases in important ways. Establishing uniformity in cases beyond our 1D example, in particular for Toom's rule or four-dimensional toric code, is another important issue. Also, it would be useful to prove or refute our conjecture (Sec.~\ref{sec:compare_gap_uniformity}) that uniformity (in a volume) implies that the corresponding models are in the same \emph{dynamical} universality class. Finally, there are various directions in which our definition of open phases could be generalized. While we assumed time-translation invariance, this could probably be easily relaxed to more general, time-dependent evolutions. A more challenging direction is to extend these ideas to non-Markovian dynamics, which is relevant to many experimental settings. 

\begin{acknowledgments}
The authors thank David Huse, Marko \v{Z}nidaric,  Wojciech de Roeck, Fiona Burnell, and Chaitanya Murthy for insightful discussions. S.G. acknowledges support from NSF DMR-2236517. T.R. is supported in part by the Stanford Q-Farm Bloch Postdoctoral Fellowship in Quantum Science and Engineering. CvK is supported by a UKRI Future Leaders Fellowship MR/T040947/1.
\end{acknowledgments}

\appendix

\section{Thermal stability and the 2D Ising model}\label{App:thermal_2D_Ising}

In this Appendix, we discuss some known results for the stability of finite-temperature quantum phases and prove some of our own, in particular showing that the LPPL propety is satisfied under appropriate conditions. 

To begin, we note that Ref. \cite{Hastings_belief} provides an idea that bears resemblance to the quasi-adiabatic continuation of ground states discussed in the main text. For a 1-parameter family of Hamiltonians $H_{s}=H+Vs$, where $V=\sum_{x}v_{x}$ is a sum of local terms, it is shown that 
\begin{equation}
\partial_{s}e^{-\beta H_{s}}  =\eta_{s}e^{-\beta H_{s}}+e^{-\beta H_{s}}\eta_{s}^{\dagger},\label{eq:almost_quasi}
\end{equation}
where
\begin{align}
\eta_{s} & =-\frac{\beta}{2}\delta V+\mathrm{i}K_{s}\nonumber \\\delta V_{s} & \equiv V-\bigl\langle V\bigr\rangle\label{eq:delta_V}\\K_{s} & \equiv\sum\tilde{v}_{x,s}\label{eq:K_def}\\\tilde{v}_{x,s} & \equiv-\sum_{n\geq1}\int_{-\infty}^{\infty}dt\:\mathrm{sign}(t)e^{-2\pi n|t|/\beta}e^{iH_{s}t}v_{x}e^{-iH_{s}t}\end{align}$\delta V_{s}$ is the difference between $V$ and its expectation value in the instantaneous density matrix (this subtraction ensures that $\mathrm{tr}(\rho_{s})=1$). $\mathrm{i}K_{s}$ is anti-Hermitian, and it follows from applying Lieb-Robinson bounds in the integral that it is a sum of bounded terms $\tilde{v}_{x}$ which are exponentially localized with spatial extent $O(\beta)$. While $\eta_{s}$ is a sum of local terms, we would also need it to be anti-Hermitian in order to integrate up Eq.~\eqref{eq:almost_quasi} and show that $e^{-\beta H_{s}}$ is related to $e^{-\beta H_{0}}$ by a local unitary transformation. However, this is generically not true due to the $\delta V$ term in Eq.~\eqref{eq:almost_quasi}. Thus, a direct analogy with quasi-adiabatic continuation fails. Therefore, unsurprisingly, one cannot argue that the thermal density matrices at different $s$ have qualitatively similar correlation functions without making additional assumptions.

We now formulate some such additional assumptions. We will show that the expectation values of local observables change differentiably with $s$ if we assume that the $v_{x},\tilde{v}_{x}$ have suitably short-ranged correlations with all said observables. Eq.~\eqref{eq:almost_quasi} implies that the response of local unit norm observable $o$ (with support $d_{o}$, located near the origin $0$) to an infinitesimal change in $s$ is
\begin{align*}|\partial_{s}\langle o\rangle| & \leq\frac{\beta}{2}\bigl|\left\langle \{o,\delta V\}\right\rangle \bigr|+\bigl|\left\langle \left[o,\mathrm{i}K\right]\right\rangle \bigr|.\end{align*}
We have kept the $s$ dependence of states and operators implicit for brevity. The second term is a commutator between local observable $o$ and a sum of exponentially local and bounded terms; consequently it is uniformly bounded in the thermodynamic limit. The first term can be recast as sum of connected correlators $\sum_{r}\bigl\langle\{\delta o,\delta v_{r}\}\bigr\rangle$. This term is uniformly bounded in the thermodynamic limit provided $v,o$ have sufficiently rapidly decaying connected correlations in the thermal state (it suffices for the correlation to decay faster than $Cr^{-D}$ where $D$ is the number of spatial dimensions). Supposing this holds, we can extract a bound with the following the parametric dependencies on $\beta,d_{o}$\begin{align}|\partial_{s}\langle o\rangle| & \leq\beta C'\mathrm{poly}(d_{o})\bigl\Vert v\bigr\Vert+C''\beta^{D+1}\bigl\Vert v\bigr\Vert d_{o}\label{eq:local_op_bound}\end{align}
These bounds can be sharpened if assumptions are made about the geometry of the support of $o$ (e.g., if $o$ acts on a co-dimension $>0$ surface) but that will not be necessary for our present purposes. What matters is that the RHS of Eq.~\eqref{eq:local_op_bound} is finite in the thermodynamic limit. Therefore, the expectation values of local observables in a thermal state $e^{-\beta H_{s}}$ have a bounded derivative with respect to $s$ provided first $\beta<\infty$ and secondly
\begin{align}\bigl\langle\{\delta o,\delta v_{r}\}\bigr\rangle & <O(|r|^{-D})\label{eq:single_obs}\end{align}
for all local $o$. If these conditions hold uniformly for some interval, say, $s\in[0,1]$ then it follows that $\langle o\rangle_{s}$ is continuous and differentiable on this interval, which excludes the possibility of a first or second order phase transition for these values of $s$. Colloquially, this result says that thermal phases are stable to perturbations that have short-ranged correlations with all local observables. 

As an example, consider the 2D quantum Ising model at $0<T<T_{c}$. This is a symmetry-protected phase, insofar as the long-range order is expected to be stable in a finite region of parameter space, provided the perturbations preserve the protecting Ising symmetry. This is consistent with our above observation. Preserving the symmetry amounts to choosing $v_{r}$ (and consequently $\tilde{v}_{r}$) which are Ising symmetric. Ising symmetric observables, on the other hand, have only short-ranged connected correlations in the symmetry broken phase, so that Eq.~\eqref{eq:single_obs} holds.

Note that even Ising symmetric perturbations will destroy the phase once they are made large enough. For example, take $V=-H$ such that increasing $s$ effectively increases the temperature. At sufficiently large $s$ the ordered phase evaporates. At this phase transition, connected correlation functions must decay slowly enough to invalidate our bound Eq.~\eqref{eq:single_obs}. For example, the energy density will have a divergent derivative at the transition. This is consistent with a breakdown of Eq.~\eqref{eq:single_obs}, because at the critical point the energy-energy correlation function does not quite decay fast enough ($\bigl\langle\{\delta h_{0},\delta h_{r}\}\bigr\rangle\sim r^{-2}$). It is helpful to extend this result to consider the response of two-point connected correlation functions $C_{rr'}\equiv\bigl\langle o_{r}^{\phantom{'}}o'_{r'}\bigr\rangle-\bigl\langle o_{r}\bigr\rangle\bigl\langle o'_{r'}\bigr\rangle$ to an infinitesimal change in $s$. Eq.~\eqref{eq:almost_quasi} gives that
\begin{align}
\partial_{s}C_{rr'}&=-\frac{\beta}{2}\bigl\langle\{\delta V,\delta o_{r}\delta o'_{r'}\}\bigr\rangle \nonumber\\ 
                            &+\bigl\langle\delta([o_{r},\mathrm{i}K])\delta o'_{r'}\bigr\rangle+\bigl\langle\delta o_{r}\delta([o'_{r'},\mathrm{i}K])\bigr\rangle.
 \label{eq:partial_C}\end{align}
All three terms are connected correlations. Assume $o,o'$ have unit operator norm. The first term  consists of a sum of terms of form $\bigl\langle\{\delta v_{x},\delta o_{r}\delta o'_{r'}\}\bigr\rangle$. The second and third terms are made up of expressions like $\bigl\langle\delta[o_{r},\tilde{v}_{x}]\delta o'_{r'}\bigr\rangle,\bigl\langle\delta[o'_{r'},\tilde{v}_{x}]\delta o_{r}\bigr\rangle$. Provided these all decay fast enough
\begin{align}
&\bigl\langle\{\delta v_{x},\delta o_{r}\delta o'_{r'}\}\bigr\rangle,\bigl\langle[o_{r},\tilde{v}_{x}]o'_{r'}\bigr\rangle,\bigl\langle[o'_{r'},\tilde{v}_{x}]o_{r}\bigr\rangle\nonumber\\
&<O(\mathrm{min(}|r-x|^{-D},|r'-x|^{-D}))\label{eq:ccbound},\end{align}
we can bound the RHS of Eq.~\eqref{eq:partial_C} uniformly both in system size and in $r,r'$. Therefore the two point correlation functions in a thermal state $e^{-\beta H_{s}}$ have a bounded derivative with respect to $s$ provided $\beta<\infty$ and Eq.~\eqref{eq:ccbound} holds. If these conditions hold uniformly for some interval, say, $s\in[0,1]$ then it follows that $C_{rr'}$ is continuous and differentiable on that interval. 

Suppose that  $o,o'$ exhibit long range order at $s=0$, i.e., $C_{rr'}\rightarrow O(1)$ as $r\rightarrow\infty$. If one can show that $o,o'$ have short-ranged correlations with $v,\tilde{v}$ over a range of $s$, then the result above implies that their long range order will survive for a non-vanishing range in $s$. For example, in the 2D quantum Ising model for $T<T_{c}$, the local order parameter exhibits long-ranged order i.e., $C_{rr'}=\left\langle \sigma_{r}^{z}\sigma_{r'}^{z}\right\rangle _{c}\rightarrow O(1)$ as $r\rightarrow\infty$. Indeed this defines the Ising ordered phase. The stability of these long-ranged correlations to Ising symmetric perturbations is consistent with our analytical result. That is because such perturbations require Ising symmetric $v,\tilde{v}$. Connected correlations involving Ising symmetric variables are expected to be (exponentially) local, so that Eq.~\eqref{eq:ccbound} is easily satisfied throughout the phase. This again demonstrates the rather intuitive idea that phases are stable to perturbations that have short-ranged correlations with the observables defining the phase.

\section{Refined definition of uniformity}\label{app:uniformity}
The main text presents uniformity as the requirement that neighbouring
states in the steady-state bundle relax to one another ``exponentially
quickly'' $\bigl\Vert\mathcal{E}^{t}(\rho')-\rho\bigr\Vert<e^{-O(\Delta t)}$.
However this mathematical relationship is plainly ambiguous insofar
as a norm was not specified. This appendix presents a (more) precise
statement of uniformity; it is the requirement that neighbouring states
relax quickly to one another according to the expectation values of
local observables. This is captured by the mathematical statement
that 
\begin{equation}
    |(\mathcal{E}^{t}(\rho')-\rho)(o)|  \leq C\min(1,f(t,d)),\label{eq:rapid_mix_new}
\end{equation}
for all operators $o$ with finite support and supremum
norm $\bigl\Vert o\bigr\Vert_{\infty}=1$. Here $d=d_{o}$ denotes the support of operator $o$; depending on context, that could mean either the number of sites on which $o$ has non-trivial support (the `weight'), or the diameter of a minimal ball containing $o$ (the `diameter'). We will clarify which notion of support we are referring to when it is important. $C$ is a constant independent of $t,d$.

Eq.~\eqref{eq:rapid_mix_new} is more precise and also more general than our initial exponential bound defining uniformity, and it allows for the possibility that variables with larger support relax more slowly. This appendix provides some sufficient conditions on $f$ such that Eq.~\eqref{eq:rapid_mix_new} implies analyticity, LPPL, and the stability of long-ranged correlations. In all cases $f(t,d)\geq0$ will be some function which non-increasing with $t$, and non-decreasing with $d$. Moreover, $f$ does eventually decay to zero, so there certainly exists a time $t_{d}$ such that $f(t,d)\leq1$ when $t\geq t_{d}$.

When uniformity is reformulated using Eq.~\eqref{eq:rapid_mix_new}
in place of simple exponential decay, we find that it implies a form
of LPPL if $f(t,d)$ decays faster than $1/t$. We also find that
an even weaker condition, namely $\lim_{t\rightarrow\infty}f(t,d)=0$
which has already been assumed, implies analyticity of local observables. In these two cases it will not matter how $f$ depends on $d$ or whether $d$ refers to the weight or diameter. This turns out to be the case because analyticity and LPPL, as we have formulated them, refer to properties of 1-point functions of local observables. It is arguably surprising that the proofs go through under such mild
conditions. Indeed, we believe that while the slow decay laws suffice
to show LPPL and analyticity, the stability of laws of the form Eq.~
\eqref{eq:rapid_mix_new} themselves require a faster decay of $f$
with $t$. To that end, we provide a non-rigorous argument that stable
phases have $f(t,d)$ that decays at least as fast as $t^{-D-1}$,
where $D$ is the spatial dimension. Finally, to show the stability of long-ranged correlations we assumed that $f(t,d)$ decays in time. The only other constraint is that $f(t,d)$ is allowed to grow in any way with the weight of $o$, but it is not allowed to grow with the diameter of $o$. 

Note that in the Stavskaya phase one can rigorously bound relaxation with the function $f(t,d)= d e^{-\Delta t}$ where $d$ refers to operator weight (Sec.~\ref{sec:stavskayauniform}). It follows from the above summary that this form of $f$ is sufficient to prove analyticity, LPPL, and the stability of long-ranged order. 

\subsection{Analyticity}

We examine a family of channels $\mathcal{E}_{\epsilon}$, $\epsilon\in[-1,1]$
entirely contained within one of the balls used to define uniformity.
Our goal is to argue that local correlation functions are analytic
in $\epsilon$. Let $\omega_{0}$ be any phase steady state of $\mathcal{E}_{0}$,
and let $\omega_{\epsilon}$ be the corresponding steady state of
$\mathcal{E}_{\epsilon}$ which $\omega_{0}$ rapidly relaxes to.
Then uniformity gives

\begin{equation}
|\bigl\langle o\bigr\rangle_{\epsilon}-\bigl\langle o\mid\mathcal{E}_{\epsilon}^{t}\omega_{0}\bigr\rangle|\leq C\min(1,f(t,d))\label{eq:analytic_1}
\end{equation}

where $\bigl\langle o\bigr\rangle_{\epsilon}\equiv\bigl\langle o\mid\omega_{\epsilon}\bigr\rangle$.
Provided $f$ tends to zero at large $t$, $o(\epsilon;t)\equiv\bigl\langle o\mid\mathcal{E}_{\epsilon}^{t}\omega_{0}\bigr\rangle$
is a sequence of bounded analytic functions in $\epsilon$ which tend
to $\bigl\langle o\bigr\rangle_{\epsilon}$ \emph{uniformly}
as $t\rightarrow\infty$. Uniformity comes from the fact that the
RHS of Eq.~\eqref{eq:analytic_1} is independent of $\epsilon$. Analyticity
comes from the fact that $o(\epsilon;t)$ can be expressed as a finite
time evolution on a finite quantum system (using Lieb-Robinson bounds).
Therefore uniformity (with the condition $\lim_{t\rightarrow\infty}f(t,d)=0$)
implies the analyticity of local observables. 

\subsection{LPPL}
Our starting point is Eq.~\eqref{eq:lppl_master} in the main text. Using this with uniformity, with the new bound Eq.~
\eqref{eq:rapid_mix_new}, yields an inequality

\begin{equation}
\bigl|\langle o_{x}\mid\rho-\rho'\rangle\bigr|\leq C\min(1,f(t,d))+\sum_{s=1}^{t}\bigl|\lozenge_{s}\bigr|.\label{eq:lppl_reminder}
\end{equation}

The function $f$ must obey certain conditions if our proof of LPPL is to go through
as before. First we must bound the large $s$ contributions from the
second term. These are controlled by
\begin{equation}
F(t,d)  \equiv\sum_{s>t}f(s,d).
\end{equation}
The function $f$ must decay fast enough that this sum exists. In that case $F$
is non-increasing with $t$ and eventually decays to zero. We can
therefore define a function $t_{\epsilon,d}$ such that $F(t,d)<\epsilon$
whenever $t\geq t_{\epsilon,d}$. $t_{\epsilon,d}$ can be chosen
to increase monotonically with $\epsilon$ and we can also assume
$t_{\epsilon,d}>t_{d}$. Note that $t\geq t_{\epsilon,d}$ implies
$f(t,d)<\epsilon$ too. Taking $t>t_{\epsilon,d}$ in Eq.~\eqref{eq:lppl_reminder}
gives
\begin{equation}
\bigl|\langle o_{x}\mid\rho-\rho'\rangle\bigr| \leq Cf(t,d)+\sum_{s=0}^{t_{\epsilon,d}}\bigl|\lozenge_{s}\bigr|+\sum_{s=t_{\epsilon,d}+1}^{t-1}\bigl|\lozenge_{s}\bigr|
\end{equation}
where we have split the time sum into $s\leq t_{\epsilon,d}$ and
$s>t_{\epsilon,d}$ in the second and third terms respectively. Using
the fact that
\begin{equation}
\left|\lozenge_{s}\right|=\left|\langle o_{x}\mid\mathcal{E}{}^{s+1}-\mathcal{E}{}^{s}\mid\rho'-\rho\rangle\right|\leq2C\min(1,f(s,d))
\end{equation}
the last term may be bounded using uniformity 
\begin{equation}
\sum_{s=t_{\epsilon,d}}^{t}\bigl|\lozenge_{s}\bigr|  \leq2CF(t_{\epsilon,d},d)\leq2C\epsilon.
\end{equation}
The first term is also bounded by $\epsilon$, so we have
\begin{equation}
|\langle o_{x}\mid\rho-\rho'\rangle|\leq O(\epsilon)+\sum_{s=1}^{t_{\epsilon,d}}|\lozenge_{s}|.
\end{equation}
The Lieb-Robinson bound 
\begin{equation}
\left|\lozenge_{s}\right|\leq C'\min(1,e^{\alpha t-r/v})
\end{equation}
can be used to control the remaining term
\begin{equation}
|\langle o_{x}\mid\rho-\rho'\rangle|\leq O(\epsilon)+t_{\epsilon,d}C'\min(1,e^{\alpha t_{\epsilon,d}-r/v})\label{eq:pre_final}
\end{equation}

Here $r$ is the separation between observable $o$ and the perturbation.
We must have $r/v>\alpha t_{\epsilon,d}$ in order to usefully bound
the second term. This inequality gives a bound on the fastest $\epsilon$
can decay with $r$, which in turn is a upper bound on how fast the
RHS of Eq.~\eqref{eq:pre_final} can decay with $r$. In fact, as it
only involves a rescaling of $r$, we can obtain the optimal scaling
if we set $\epsilon$ such that $r/2v=\alpha t_{\epsilon,d}$. $t_{\epsilon,d}$
is monotone in $\epsilon$, so this equation has a unique solution
$\epsilon=\epsilon_{d,r}$. Setting $\epsilon$ to be this value gives
\begin{equation}
|\langle o_{x}\mid\rho-\rho'\rangle|\leq O(\epsilon_{d,r})+\frac{r}{2\alpha v}C'\min(1,e^{-r/2v})\label{eq:master_lppl}
\end{equation}
The second term is exponentially decaying in $r$. The first term
is determined by the details of the $f$ function. We examine some
examples. 

\subsubsection*{Example 1: Faster-than-polynomial decay}

Let $f(t,d)=e^{\alpha d^{a}-\Delta t^{b}}$ where $a,b>0$. Note $t_{d}=(\frac{\alpha}{\Delta})^{1/b}d^{a/b}$.
It is easy to show that 
\begin{equation}
F(t,d)\leq\frac{C''}{b\Delta}e^{\alpha d^{a}-\frac{\Delta}{2}t^{b}}
\end{equation}
for sufficiently large $t$ (``sufficiently large'' will be set
by $b,\Delta$ only). We see that $t_{\epsilon,d}=O(\frac{\alpha d^{a}+\log(1/\epsilon)}{\Delta})^{1/b}$.
On the other hand, solving the equation $r/2v=\alpha t_{\epsilon,d}$
for $\epsilon$ will give $\epsilon_{d,r}=e^{O(\alpha d^{a}-\frac{\Delta}{4\alpha v}r^{b})}$.
On net we find
\begin{equation}
|\langle o_{x}\mid\rho-\rho'\rangle|\leq e^{O(\alpha d^{a}-\frac{\Delta}{4\alpha v}r^{b})}+\frac{r}{2\alpha v}C'e^{-r/2v}
\end{equation}
Thus, if $f$ decays faster than polynomially with $t$, the local
perturbations have an effect that falls off at least as fast as $e^{-O(r^{\min(b,1)})}$.

\subsubsection*{Example 2: Polynomial decay }
Suppose $f(t,d)=Cd^{a}/t^{b}$ where $a,b>0$. $t_{d}=C^{1/b}d^{a/b}$.
Then
\begin{equation}
F(t,d)=\frac{C}{b-1}d^{a}/t^{b-1}
\end{equation}
We require $b>1$ for the boundedness of $F$. This gives $t_{\epsilon,d}=\max(t_{d},(\frac{C}{b-1}\frac{d^{a}}{\epsilon})^{1/(b-1)})$.
We find that $\epsilon_{d,r}=\frac{c'd^{a}}{b-1}/r^{b-1}$ so that
at sufficiently large $r$ we obtain
\begin{equation}
|\langle o_{x}\mid\rho-\rho'\rangle|\leq\frac{c'd^{a}r^{1-b}}{b-1}+\frac{r}{2\alpha v}C'e^{-r/2v},\label{eq:lppl_poly}
\end{equation}
Therefore local perturbations could potentially have an influence
that only decays as a power law $r^{1-b}$ in this case. Nevertheless,
it seems that $b>1$ (i.e., $f(t,d)$ decays faster than $t^{-1})$
is sufficient for a weak form of LPPL.

\subsubsection*{Conjecture: Phase stability requires \texorpdfstring{$f \lesssim t^{-(D+1)}$}.}

Eq.~\eqref{eq:lppl_poly} suggests that $f(t,d)=Cd^{a}/t^{b},b>1$
leads to a form of LPPL with power-law decay $r^{1-b}$. If one makes a series of such local perturbations throughout the system,
and that the effect of such perturbations is additive, that suggests the net effect on observable $o$ would go as $\int_{1}^{\infty}d^{D}r\,r^{b-1}$
where $D$ is the spatial dimension. This diverges if $D>b-1$ in
which case local perturbations can cumulatively act in a divergent
fashion. Therefore, we conjecture that the stability of the phase
requires that $f(t,d)$ decays at least as fast as $t^{-(D+1)}$ at
large $t$. 
\subsection{Long-ranged order}
Finally we repeat our proof that long-ranged order is stable in uniform phases, using the more explicit statement of uniformity Eq.~\eqref{eq:rapid_mix_new}. The proof proceeds as before in Theorem \ref{eq:LRO_theorem} until Eq.~\eqref{eq:LRO3}, which now becomes
\begin{equation}
\bigl|\bigl\langle\tilde{o}_{x}\tilde{o}_{0}|\rho'\bigr\rangle^{\mathrm{c}}-\bigl\langle o_{x}o_{0}|\rho\bigr\rangle^{\mathrm{c}}\bigr|<C f(t,d) 
\end{equation}
where $d$ encodes the support of the correlator with long range order, namely $o_{0}o_{x}$, $C$ is an $O(1)$ constant, and we require $|x|>2vt$. To show the stability of long-range order, it suffices to show that the RHS of the above equation can be made smaller than $D/2$  and for arbitrarily large $x$ by choosing some $t$ independent of $x$\footnote{This last condition comes from ensuring the dressed observables $\tilde{o}_{0,x}$ defined in Theorem.~\ref{eq:LRO_theorem} are local and not growing in size with $x$}. Many functions $f(t,d)$ satisfy these requirements. It suffices to require that $f(t,d)$ decays in time, and to  not increase with the operator diameter ($\sim |x|$).

\section{Example of a topologically non-trivial steady state bundle}\label{app:bundle}

The definition of uniformity is `local'; it places conditions on the relaxation rates of the steady states of \emph{nearby} channels within the steady-state bundle. Can we not just define a phase as being a region of parameter space where the steady states of any pair of channels within the phase rapidly relax to one another? Suppose one adopts this simpler definition. In that case pick a $\mathcal{E}$ in the phase. Any steady state $\rho'$ of any channel in the phase $\mathcal{E}'$ can be unambiguously associated with a steady state $\rho=\rho(\rho')$ which it rapidly relaxes to under the action of $\mathcal{E}'$. We will show that this simpler definition is too restrictive; it appears to exclude a phase we believe exists. 

Let $\mathcal{E}$ be the Toom dynamical map, embedded diagonally
such that $\mathcal{E}(\mid\sigma\rangle\langle\sigma'\neq\sigma\mid)=0$.
Let $\mathfrak{r}_{\theta}$ be a global spin rotation of angle $\theta$
about the Y-axis. Then $\mathcal{E}_{\theta}\equiv\mathfrak{r}_{\theta}\mathcal{E}\mathfrak{r}_{\theta}^{\dagger}$
is an isospectral family of channels, which we suppose to be in
the same phase. $\mathcal{E}_{\theta}$ describes a 1D path through
that (presumably) absolutely stable phase. $\mathcal{E}_{\theta}$
has steady states $P_{\uparrow/\downarrow}^{\theta}\equiv\mathfrak{r}_{\theta}(P_{\uparrow/\downarrow})$,
fully polarised in the $\hat{n}^{\mathrm{T}}=(\sin\theta,0,\pm\cos\theta)$
direction respectively.

We now argue that this putative phase does not conform to the simpler definition of phase described above. If it did, then each of $P_{\uparrow/\downarrow}^{\theta}$ must rapidly relax to states in the span of $P_{\uparrow/\downarrow}^{\theta=0}$ under the standard Toom dynamics. Call these two states $\omega_{\theta,\uparrow/\downarrow}$ respectively. Now $P_{\uparrow/\downarrow}^{\theta}$ have short ranged correlations, therefore $\omega_{\theta,\uparrow/\downarrow}$ must do too as short time evolution cannot introduce long range correlations. The only option consistent with this requirement is for $\omega_{\theta,\uparrow/\downarrow}$ to be some permutation $\eta(\theta)$ of $P_{\uparrow/\downarrow}^{\theta=0}$. As we continuously vary $\theta$ we also continuously vary $\omega_{\theta,\uparrow/\downarrow}$. Hence, the permutation $\eta$ is also continuous in $\theta$ and therefore a constant. 
On the other hand, it is straightforward to see that $\mathcal{E}_{\pi}=\mathcal{E}_{0}$ and $\eta(\pi)=(12)$ because $\mathcal{E}$ happens to be Ising symmetric, but the Ising flip does swap the two steady states. Therefore $\eta$ cannot be a constant, contradicting the fact it is continuous. Therefore, we reach a contradiction; it is not in fact possible for all models in the Toom phase to have their steady states rapidly relax to one another.  

\section{Scaling of steady states in unstable 1D CA}\label{app:soldier}

\begin{figure}
    \centering
    \includegraphics[width=0.4\textwidth]{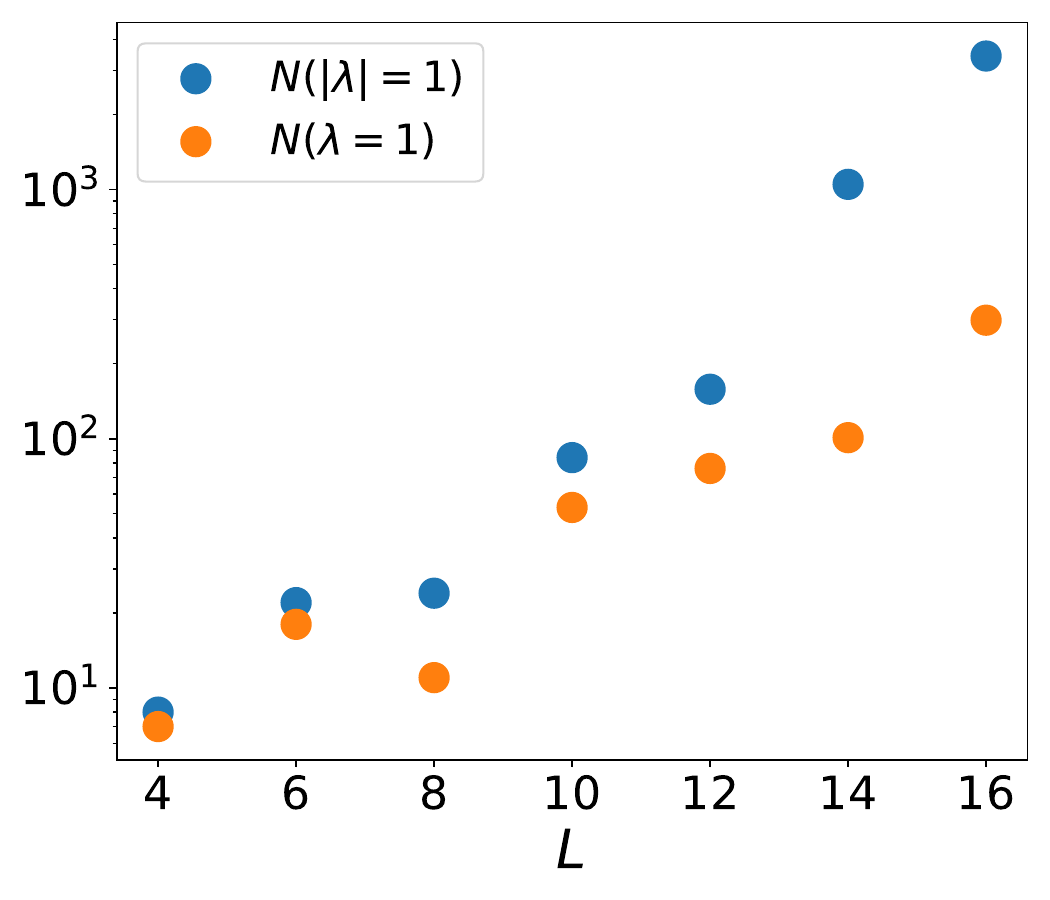}
    \includegraphics[width=0.4\textwidth]{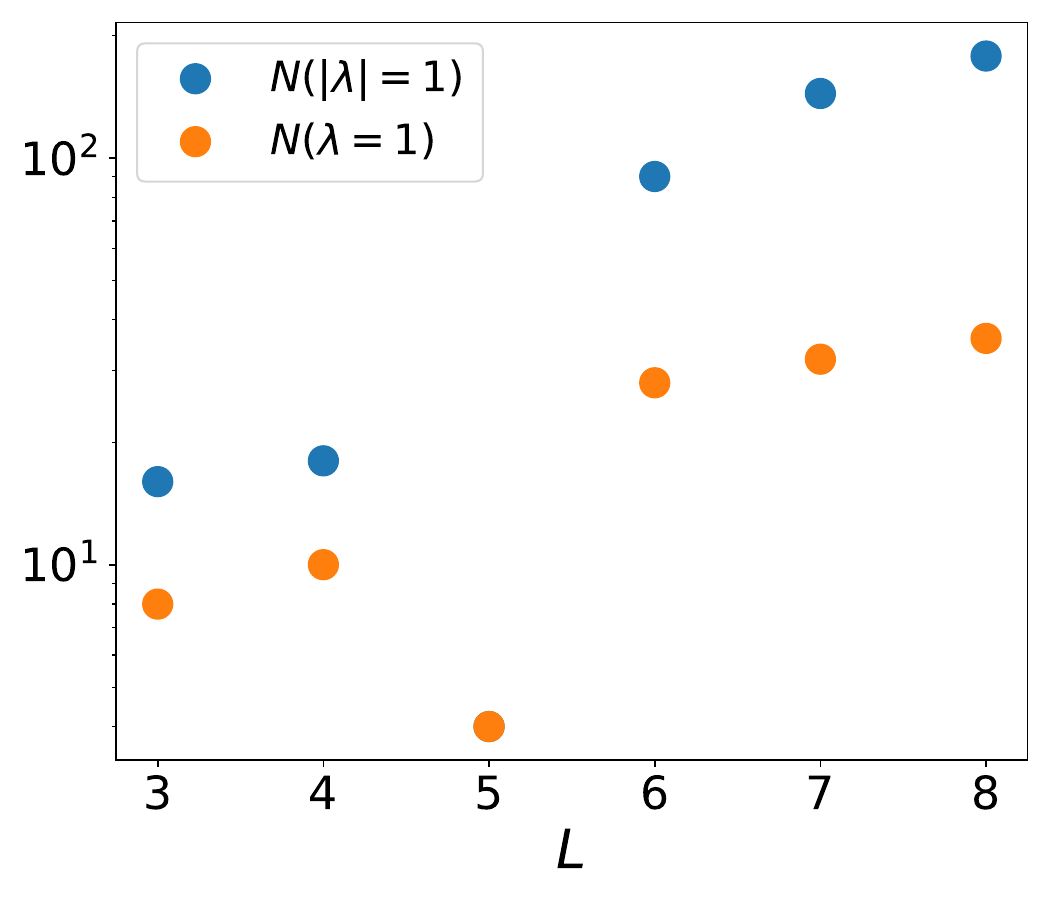}
    \caption{Number of steady states of ``soldier rule'' (left( and ``two-line voting'' (right) models as a function of system size. We separately show the number of steady states with eigenvalue exactly $1$ ($N(\lambda=1)$) and the number of states with eigenvalue modulus one ($N(|\lambda|=1)$) In both cases, we find on-trivial growth as $L$ is increased.}
    \label{fig:soldier}
\end{figure}

We consider two examples of 1D cellular automata which have the property that they are eroding with respect to both the $\ket{\vec{0}}$ and the $\ket{\vec{1}}$ steady states. An island of size $R$ that is an inclusion of the opposite steady state gets eroded in time $O(R)$. Nevertheless, these models are known~\cite{park1997ergodicity} to be unstable in the sense that they have a unique steady state when arbitrarily weak noise is introduced. Thus, if Conjecture~\ref{conj:stability} is to be true, they must exhibit a large number of steady states. 

The two models are named ``soldier rule'' and ``two-line voting''~\cite{park1997ergodicity}. Soldier rule is defined on 1D binary strings $b_x = 0,1$ and is defined by the update rule
\begin{equation*}
    b_x \to \text{Maj}(b_x,b_{x+b_x},b_{x+3b_x}),
\end{equation*}
where $\text{Maj}$ takes the majority value among its three arguments. It is thus a majority of three spins, but which three depends on the value of $b_x$ itself (picture each bit as a soldier facing either left or right and aligning itself with its comrades). Two-line voting is defined on a two-leg ladder (similar to Toom's ladder) with bits $b_{x,j=\pm1}$. It has the update rule
\begin{equation*}
    b_{x,j} \to \text{Maj}(b_{x,-j},b_{x-j,j},b_{x-2j,j}).
\end{equation*}

As said, both models have fast erosion (for both all up and all down steady states) but nevertheless exhibit non-perturbative instability. Thus, by analogy with the Toom's ladder model, we expect them to host a large number of steady states. This is confirmed by our numerics shown in Fig.~\ref{fig:soldier}. We consider both static steady states, with eigenvalue $\lambda=1$ and states with $|\lambda|=1$, both of which grow with $L$ for both models considered. For the soldier rule, where we are able to go up to $L=16$, we see behavior that suggest an exponential growth with $L$; in the two-line voting model we are limited to $L \leq 8$ so there ultimate scaling is less clear.  

\bibliography{main.bbl}
\end{document}